\newcommand{\algrule}[1][1pt]{\par\vskip.5\baselineskip\hrule height #1\par\vskip.5\baselineskip}
\newcommand{\mbf}{\boldsymbol}
\newcommand{\nn}{\nonumber}
\newcommand{\pr}{Pr}
\newcommand{\complex}{\mathbb{C}}
\newcommand{\conj}[1]{{#1}^{\dagger}}
\newcommand{\expectation}{\mathbb{E}}
\newcommand{\floor}[1]{\lfloor{#1}\rfloor}
\newcommand{\noise}{\vec{z}}
\newcommand{\signal}{\vec{x}}
\newcommand{\signalc}[1]{x[{#1}]}
\newcommand{\transform}{\vec{X}}
\newcommand{\transformc}[1]{X[{#1}]}
\newcommand{\obs}{\vec{y}}
\newcommand{\binobsvi}[1]{\vec{y}_{b,#1}}
\newcommand{\binobsv}[2]{\vec{y}_{b,#1,#2}}
\newcommand{\binobsc}[2]{{y}_{b,#1,#2}}
\newcommand{\mlength}{L}
\newcommand{\stages}{d}
\newcommand{\setfactors}{{\cal F}}
\newcommand{\factor}[1]{f_{{#1}}}
\newcommand{\nbins}{n_b}
\newcommand{\delays}{D}
\newcommand{\osratio}{r}
\newcommand{\length}{n}
\newcommand{\sparsity}{k}
\newcommand{\sindex}{\delta}
\newcommand{\samples}{m}
\newcommand{\iterations}{\ell}
\newcommand{\binsize}{\mathbf{F}} 
\newcommand{\binexp}{\eta} 
\newcommand{\bbensemble}{{\cal C}^\sparsity_1(\setfactors,\nbins)}
\newcommand{\crtensemble}{{\cal C}^\sparsity_2(\setfactors,\length,\nbins)}
\newcommand{\primef}[1]{{\cal P}_{#1}}
\newtheorem{theorem}{Theorem}[section]
\newtheorem{remark}[theorem]{Remark}
\newtheorem{definition}[theorem]{Definition}
\newtheorem{lemma}[theorem]{Lemma}
\newtheorem{corollary}[theorem]{Corollary}
\begin{document}
\title{A FFAST framework for computing a $\sparsity$-sparse DFT in $O(\sparsity \log\sparsity)$ time using sparse-graph alias codes}
\author{
\authorblockN{Sameer~Pawar~and~Kannan~Ramchandran\\}
\authorblockA{Dept. of Electrical Engineering and Computer Sciences \\
University of California, Berkeley \\
\{spawar, kannanr\}@eecs.berkeley.edu}
\thanks{The material in this paper was presented in part at the IEEE International Symposium on Information Theory, Istanbul, Turkey, 2013.} 
}

\maketitle
\begin{abstract}\label{sec:abstract} Given an $\length$-length input signal $\signal$, it is well known that its Discrete Fourier Transform (DFT), $\transform$, can be computed from $\length$ samples in $O(\length \log \length)$ operations using a Fast Fourier Transform (FFT) algorithm. If the spectrum $\transform$ is $\sparsity$-sparse (where $\sparsity<<\length$), can we do better?  We show that asymptotically in $\sparsity$ and $\length$, when $\sparsity$ is sub-linear in $\length$ (precisely, $\sparsity = O(\length^{\sindex}$) where $0 < \sindex <1$), and the support of the non-zero DFT coefficients is uniformly random, our proposed FFAST (Fast Fourier Aliasing-based Sparse Transform) algorithm computes the DFT $\transform$, from $O(\sparsity)$ samples in $O(\sparsity\log\sparsity)$ arithmetic operations, with high probability. Further, the constants in the big Oh notation for both sample and computational cost are small, e.g., when $\sindex < 0.99$, which essentially covers almost all practical cases of interest, the sample cost is less than $4\sparsity$.

Our approach is based on filterless subsampling of the input signal $\signal$ using a set of {\em carefully chosen uniform subsampling patterns guided by the Chinese Remainder Theorem (CRT)}. The idea is to cleverly exploit, rather than avoid,  the resulting aliasing artifacts induced by subsampling.  Specifically, the subsampling operation on $\signal$ is designed to create aliasing patterns on the spectrum $\transform$ that ``look like" parity-check constraints of a good erasure-correcting sparse-graph code.  {\em Next, we show that computing the sparse DFT $\transform$ is equivalent to decoding of sparse-graph codes}. These codes further allow for fast peeling-style decoding. The resulting DFT computation is low in both sample complexity and decoding complexity. We analytically connect our proposed CRT-based aliasing framework to random sparse-graph codes, and analyze the performance of our algorithm using density evolution techniques from coding theory. We also provide simulation results, that are in tight agreement with our theoretical findings.


\end{abstract}

\section{Introduction}\label{sec:intro}
Spectral analysis using the Discrete Fourier Transform (DFT) has been of universal importance in engineering and scientific applications for a long time.  The Fast Fourier Transform (FFT) is the fastest known way to compute the DFT of an arbitrary $\length$-length signal, and has a computational complexity of $O(\length \log \length)$\footnote{Recall that a single variable function $f(x)$ is said to be $O(g(x))$, if for a sufficiently large $x$ the function $|f(x)|$ is bounded above by $|g(x)|$, i.e., $\lim_{x \rightarrow \infty}|f(x)| < c |g(x)|$ for some constant $c$. Similarly, $f(x) = \Omega(g(x))$  if $\lim_{x \rightarrow \infty}|f(x)| > c |g(x)|$ and $f(x) = o(g(x))$ if the growth rate of $|f(x)|$ as $x\rightarrow \infty$, is negligible as compared to that of $|g(x)|$, i.e. $\lim_{x\rightarrow \infty}|f(x)|/|g(x)| = 0$.}. Many applications of interest involve signals, e.g. audio, image, video data, biomedical signals etc., which have a sparse Fourier spectrum. In such cases, a small subset of the spectral components typically contain most or all of the signal energy, with most spectral components being either zero or negligibly small.  If the $\length$-length DFT, $\transform$, is $\sparsity$-sparse, where $\sparsity<<\length$, can we do better in terms of both {\bf sample} and {\bf computational} complexity of computing the sparse DFT? We answer this question affirmatively. 
In particular, we show that asymptotically in $\sparsity$ and $\length$, when $\sparsity$ is sub-linear in $\length$ (precisely, $\sparsity = O(\length^{\sindex}$) where $0 < \sindex <1$), and the support of the non-zero DFT coefficients is uniformly random, our proposed FFAST (Fast Fourier Aliasing-based Sparse Transform) algorithm computes the DFT $\transform$, from judiciously chosen $O(\sparsity)$ samples in $O(\sparsity\log\sparsity)$ arithmetic operations, with high probability. Further, the constants in the big Oh notation for both sample and computational cost are small, e.g., when $\sindex < 0.99$, which essentially covers almost all practical cases of interest, the sample cost is less than $4\sparsity$.

At a high level, our idea is to cleverly exploit rather than avoid the aliasing resulting from subsampling, and to shape the induced spectral artifacts to look like parity constraints of ``good" erasure-correcting codes, e.g., Low-Density-Parity-Check (LDPC) codes \cite{gallager1962low} and fountain codes \cite{luby2002digital}, that have both low computational complexity and are near-capacity achieving. Towards our goal of shaping the spectral artifacts to look like parity constraints of good erasure-correcting codes, we are challenged by not being able to design any arbitrary spectral code at will, as we can control only the subsampling operation on the time-domain signal. The key idea is to design subsampling patterns, guided by the {\em Chinese-Remainder-Theorem (CRT)} \cite{blahut1985fast}, that create the desired code-like aliasing patterns. Based on the qualitative nature of the subsampling patterns needed, our analysis is decomposed into two regimes (see Section~\ref{sec:verysparse} and Section~\ref{sec:lesssparse} for more details): 
\begin{itemize}
\item The ``very-sparse" regime, where $\sparsity = O(\length^{\sindex}), 0 < \sindex \leq 1/3$. For the very sparse regime the subsampling patterns are based on relatively co-prime numbers.
\item The ``less-sparse" regime, where $\sparsity = O(\length^{\sindex}), 1/3 < \sindex < 1$. The sub-sampling pattern, for the less-sparse regime, comprise of ``cyclically-shifted" overlapping co-prime integers.
\end{itemize}

\begin{figure}[h]
\begin{center}
\includegraphics[width = 0.8\linewidth]{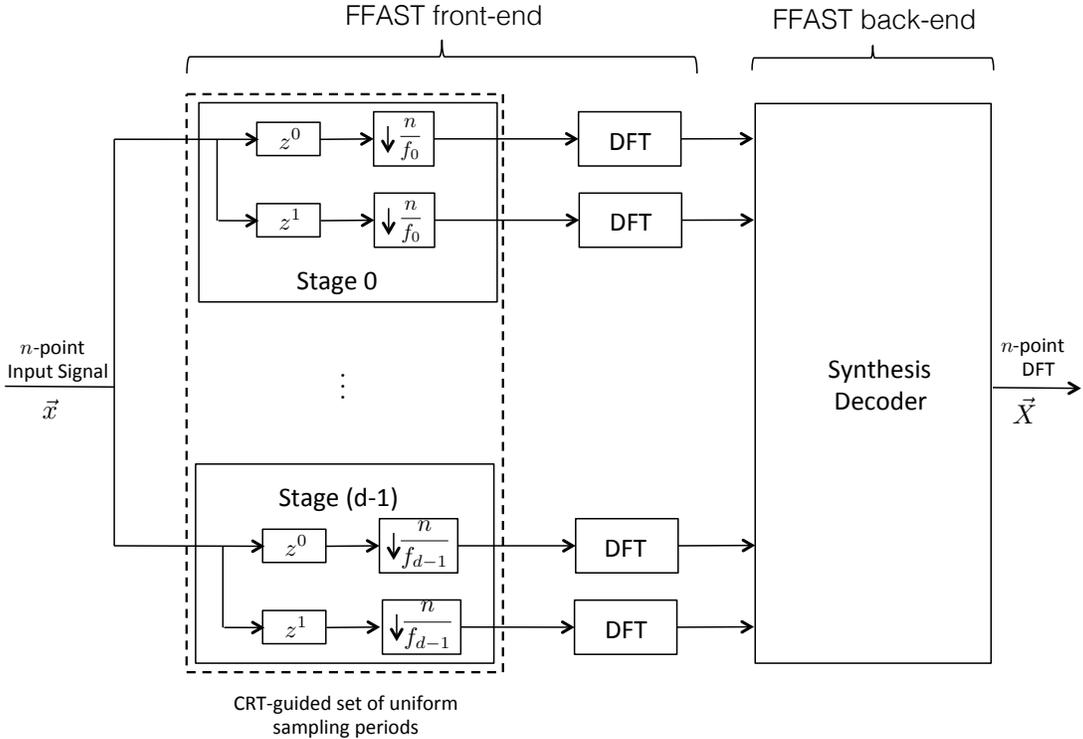}
\caption{Block diagram of the FFAST architecture. The $\length$-length input signal $\signal$ is uniformly subsampled, through $\stages$ stages. Each stage subsamples the input signal and its circularly shifted version by a carefully chosen (guided by the CRT as explained in Section~\ref{sec:verysparse} and Section~\ref{sec:lesssparse}) set of sampling periods, to generate $O(\sparsity)$ samples per sub-sampling path. Next, the big $\length$-length DFT $\transform$ is synthesized, from the short (length $O(\sparsity))$ DFTs, using the FFAST back-end decoder.}
\label{fig:conceptual}
\end{center}
\end{figure}

Our approach is summarized in Fig.~\ref{fig:conceptual}. The $\length$-length input signal $\signal$ is uniformly subsampled, through a small number\footnote{We show that the number of stages depend on the sparsity index $\sindex$, and is in the range of $3$ to $8$ for $\sindex \leq 0.99$, as shown in Table~\ref{tab:const}.} of stages, say $\stages$. Each stage subsamples the input signal and its circularly shifted version by a CRT guided set (see Sections \ref{sec:verysparse} and \ref{sec:lesssparse}) of sampling periods, to generate $\factor{i} = O(\sparsity)$ samples per sub-sampling path, for $i = 0,\hdots,\stages-1$. Next, the large $\length$-length DFT $\transform$ is synthesized from much smaller (length $O(\sparsity))$ DFTs, using the FFAST back-end peeling decoder. For the entire range of practical interest of sub-linear sparsity, i.e., $0 < \sindex < 0.99$, the FFAST algorithm computes $\length$-length $\sparsity$-sparse DFT $\transform$ in $O(\sparsity \log \sparsity)$ computations from less than $3.74\sparsity$ samples. It is gratifying to note that both the sample complexity and the computational complexity depend only on the sparsity parameter $\sparsity$, which is sub-linear in $\length$.

We emphasize the following caveats.  First, our analytical results are probabilistic, and hold for asymptotic values of $\sparsity$ and $\length$, with a success probability that approaches $1$ asymptotically. This contrasts the $O(\length \log \length)$ FFT algorithm which works deterministically for all values of $\sparsity$ and $\length$. Secondly, we assume a uniformly random model for the support of the non-zero DFT coefficients. Lastly, we require the signal length $\length$ to be a product of a few (typically 3 to 9) distinct primes\footnote{This is not a major restriction for two reasons. Firstly, in many problems of interest, the choice of $\length$ is available to the system designer, and choosing $\length$ to be a power of 2 is often invoked only to take advantage of the readily-available radix-2 FFT algorithms. Secondly, by truncating or zero-padding the given signal, by a constant number of samples, one can obtain a modified signal of a desired length $\length$. The DFT of the modified signal has dominant non-zero DFT coefficients at the same frequencies as the original signal but with additional noise-like side-lobes, due to windowing effect, and can be decoded using a noise robust version of the FFAST algorithm \cite{pawar2014isit}.}. In effect, our algorithm trades off sample and computational complexity for asymptotically zero probability of failure guarantees, for a uniformly random model of the support. The FFAST algorithm is applicable whenever $\sparsity$ is sub-linear in $\length$ (i.e. $\sparsity$ is $o(\length)$), but is obviously most attractive when $\sparsity$ is much smaller than $\length$. As a concrete example, when $\sparsity=300$, and $\length= 2^73^55^3\approx 3.8\times10^6$, our algorithm achieves computational savings by a factor of more than $6000$, and savings in the number of input samples by a factor of more than $3900$ over the standard FFT (see \cite{temperton1983self} for computational complexity of a prime factor FFT algorithm). This can be a significant advantage in many existing applications, as well as enable new classes of applications not thought practical so far.

The focus of this paper is on signals having an exactly-sparse DFT. Our motivation for this focused study is three-fold: (i) to provide conceptual clarity of our proposed approach in a noiseless setting; (ii) to present our deterministic subsampling front-end measurement subsystem as a viable alternative to the class of randomized measurement matrices popular in the compressive sensing literature \cite{donoho2006compressed, candes2006near}; and (iii) to explore the fundamental limits on both sample complexity and computational complexity for the exactly sparse DFT problem, which is of intellectual interest.
In addition, the key insights derived from analyzing the exactly sparse signal model, apply more broadly to the noisy setting (see discussion in Section~\ref{sec:conclusion}), i.e., where the observations are further corrupted by noise \cite{pawar2014isit}.

The rest of the paper is organized as follows. Section~\ref{sec:problem} states the problem and introduce important notations. Section~\ref{sec:mainresults} presents our main results, and the related literature is reviewed in Section~\ref{sec:relatedwork}. Section~\ref{sec:DFTSparseCodes} exemplifies the mapping of the problem of computing a sparse DFT to decoding over an appropriate sparse-graph code. In Section~\ref{sec:verysparse} and Section~\ref{sec:lesssparse} we analyze the performance of the FFAST algorithm for the {\em very-sparse} and the {\em less-sparse} regimes respectively. Section~\ref{sec:experiments} provides simulation results that empirically corroborate our theoretical findings. The paper is concluded in Section~\ref{sec:conclusion}, with few remarks about extending the FFAST framework for noise robustness and other applications.
\section{Problem formulation, notations and preliminaries}\label{sec:problem} 
\subsection{Problem formulation} Consider an $\length$-length discrete-time signal $\signal$ that is sum of $\sparsity << \length$ complex exponentials, i.e., its $\length$-length discrete Fourier transform has $\sparsity$ non-zero coefficients:
\begin{equation}
\signalc{p} = \sum_{q = 0}^{\sparsity-1} \transformc{\ell_q}e^{2\pi \imath\ell_q p/\length}, \ \ \ \ p = 0,1,\hdots,\length-1,
\end{equation}
where the discrete frequencies $\ell_q \in \{0,1,\hdots,\length-1 \}$ and the amplitudes $ \transformc{\ell_q} \in {\mathbb{C}}$, for $q=0,1,\hdots,\sparsity-1$. We further assume that the discrete frequencies $\ell_q$ are uniformly random in $0$ to $\length-1$ and the amplitudes $\transformc{\ell_q}$ are drawn from some continuous distribution over complex numbers. Under these assumptions, we consider the problem of identifying the $\sparsity$ unknown frequencies $\ell_q$ and the corresponding complex amplitudes $\transformc{\ell_q}$ from the time domain samples $\signal$. A straightforward solution is to compute an $\length$-length DFT, $\transform$, using a standard FFT algorithm \cite{blahut1985fast}, and locate the $\sparsity$ non-zero coefficients. Such an algorithm requires $\length$ samples and $O(\length\log \length)$ computations. When the DFT $\transform$ is known to be exactly $\sparsity$-sparse and $\sparsity << \length$, computing all the $\length$ DFT coefficients seems redundant.

In this paper, we address the problem of designing an algorithm, to compute the $\sparsity$-sparse $\length$-length DFT of $\signal$ for the asymptotic regime of $\sparsity$ and $\length$. We would like the algorithm to have the following features:
\begin{itemize}
\item it takes as few input samples $\samples$ of $\signal$ as possible.
\item it has a low computational cost, that is possibly a function of only the number of input samples $\samples$.
\item it is applicable for the entire sub-linear regime, i.e., for all $0 < \sindex < 1$, where $\sparsity =O(\length^{\sindex})$.
\item it computes the DFT $\transform$ with a probability of failure vanishing to $0$ as $\samples$ becomes large.
\end{itemize}

In Table~\ref{tab:glossary}, we provide notations and definitions of the important parameters used in the rest of the paper.

\begin{table}[h]
\begin{center}
\begin{tabular}{|c|c|}
  \hline
  Notation & Description\\
  \hline
  $\length$ & Signal length. \\  
  \hline
  $\sparsity$ & Number of non-zero coefficients in the DFT $\transform$.\\ 
  \hline
  $\sindex$& Sparsity-index: $\sparsity =O(\length^{\sindex}), 0 < \sindex < 1$.\\
  \hline
  $\samples$& Sample complexity: Number of samples of $\signal$ input to the FFAST algorithm.\\
  \hline
  $\osratio = \samples/\sparsity$ & Oversampling ratio: Input samples per non-zero DFT coefficient of the signal.\\
  \hline
  $\stages$& Number of stages in the sub-sampling ``front-end" architecture of the FFAST.\\
  \hline
  {$\factor{i}$} & Number of output samples at each sub-sampling path of stage $i$ of the FFAST front-end.\\
   \hline
\end{tabular}
\end{center}
\caption{Glossary of important notations and definitions used in the rest of the paper. The last two parameters are related to the FFAST ``front end" architecture (see Figure~\ref{fig:conceptual} for reference).}\label{tab:glossary}
\end{table}

 \subsection{The Chinese-Remainder-Theorem (CRT)} The CRT plays an important role in our proposed FFAST architecture as well as in the FFAST decoder. For integers $a,N$, we use $(a)_N$  to denote the operation, $a \mod N$, i.e., $(a)_N\triangleq a \mod N$.

\begin{theorem}[Chinese-Remainder-Theorem \cite{blahut1985fast}]\label{thm:CRT}
Suppose $n_0,n_1,\hdots,n_{\stages-1}$ are pairwise co-prime positive integers and $N = \prod_{i=0}^{\stages-1}n_i$. Then, every integer `$a$' between $0$ and $N-1$ is uniquely represented by the sequence $r_0,r_1,\hdots,r_{\stages-1}$ of its remainders modulo $n_0,\hdots,n_{\stages-1}$ respectively and vice-versa. 
\end{theorem}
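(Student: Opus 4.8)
The plan is to view the assignment of remainders as a map between two finite sets of equal cardinality and to prove it is a bijection: injectivity will give the ``uniquely represented'' direction, and surjectivity the ``vice versa'' direction (every admissible remainder sequence arises from some integer $a$).

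First I would define the map
\[
\phi: \{0,1,\ldots,N-1\} \to \{0,\ldots,n_0-1\}\times\cdots\times\{0,\ldots,n_{\stages-1}-1\}, \quad \phi(a) = \bigl((a)_{n_0},\ldots,(a)_{n_{\stages-1}}\bigr).
\]
Both the domain and the codomain have exactly $N = \prod_{i=0}^{\stages-1} n_i$ elements, so it suffices to show that $\phi$ is injective; an injection between finite sets of the same cardinality is automatically a bijection, and this single observation yields both halves of the statement at once.

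For injectivity I would suppose $\phi(a) = \phi(b)$ with $a,b \in \{0,\ldots,N-1\}$, so that $a \equiv b \pmod{n_i}$, i.e. $n_i \mid (a-b)$, for every $i$. The crux is to promote these individual divisibilities to the single statement $N \mid (a-b)$, and this is exactly where pairwise coprimality is indispensable; I expect it to be the main obstacle. I would isolate it as a lemma proved by induction on $\stages$: if $m_0,\ldots,m_{\stages-1}$ are pairwise coprime and each divides an integer $c$, then $\prod_i m_i$ divides $c$. The base case rests on the standard fact that $\gcd(u,v)=1$ together with $u\mid c$ and $v\mid c$ forces $uv \mid c$; via B\'ezout one writes $1 = \alpha u + \beta v$, whence $c = \alpha u c + \beta v c$ has both summands divisible by $uv$. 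The inductive step requires checking that $\prod_{i<\stages-1} m_i$ remains coprime to $m_{\stages-1}$, again obtained by combining B\'ezout relations factor by factor. Applying the lemma with $c = a-b$ gives $N \mid (a-b)$, and since $|a-b| < N$ this forces $a = b$.

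Finally, although the cardinality argument already supplies surjectivity abstractly, I would note that the ``vice versa'' direction can be made constructive, which is what the FFAST decoder ultimately needs: setting $N_i = N/n_i$, the coprimality of $N_i$ and $n_i$ yields an inverse $M_i$ with $N_i M_i \equiv 1 \pmod{n_i}$, and then $a \equiv \sum_{i=0}^{\stages-1} r_i N_i M_i \pmod N$ is the unique element of $\{0,\ldots,N-1\}$ with the prescribed remainders $(r_0,\ldots,r_{\stages-1})$.
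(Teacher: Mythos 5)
Your proof is correct. Note, however, that the paper does not actually prove Theorem~\ref{thm:CRT}: it is stated as a classical result cited from \cite{blahut1985fast}, accompanied only by a worked example ($n_0=3$, $n_1=4$, $n_2=5$, $a=38$), so there is no in-paper argument to compare against. Your argument is the standard one and is complete: the counting step (an injection between finite sets of equal cardinality $N=\prod_{i=0}^{\stages-1} n_i$ is a bijection) correctly delivers both directions at once, and the key lemma --- that pairwise coprime moduli each dividing $a-b$ force $N \mid (a-b)$, proved via B\'ezout and induction on $\stages$, including the necessary check that $\prod_{i<\stages-1} m_i$ stays coprime to $m_{\stages-1}$ --- is exactly where pairwise coprimality enters, as you identify. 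Your closing constructive remark, $a \equiv \sum_{i=0}^{\stages-1} r_i N_i M_i \pmod{N}$ with $N_i = N/n_i$ and $N_i M_i \equiv 1 \pmod{n_i}$, is a worthwhile addition beyond the bare statement: it substantiates the unproved sentence following the theorem in the paper (that an integer $a$ can be found from any admissible remainder sequence), which is the direction actually invoked later, e.g.\ in Lemma~\ref{lem:equivalence}, where one must exhibit ${\cal P}$ integers $q \in \{0,\ldots,\length-1\}$ with $(q)_{\factor{i}} = r_i$ for all $i$.
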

Further, given a sequence of remainders $r_0,r_1,\hdots,r_{\stages-1}$, where $0 \leq r_i < n_i$, one can find an integer `$a$', such that,
\begin{equation}
(a)_{n_i} \equiv r_i \ \text{ for } i = 0,1,\hdots,\stages-1.
\end{equation}

For example, consider the following pairwise co-prime integers $n_0 = 3, n_1 = 4$ and $n_2 = 5$. Then, given a sequence of remainders $r_0 = 2, r_1=2, r_2=3$, there exists a unique integer `$a$', such that,
\begin{eqnarray}\label{eq:crtsystem}
2 &\equiv& a \mod 3\nonumber\\
2 &\equiv& a \mod 4\\
3 &\equiv& a \mod 5\nonumber
\end{eqnarray}
It is easy to verify that $a=38$ is a unique integer, between $0$ and $59$, that satisfies the congruencies in \eqref{eq:crtsystem}.
 \section{Main Results}\label{sec:mainresults} 
\begin{figure}[t]
\begin{center}
\includegraphics[width = 0.7\linewidth]{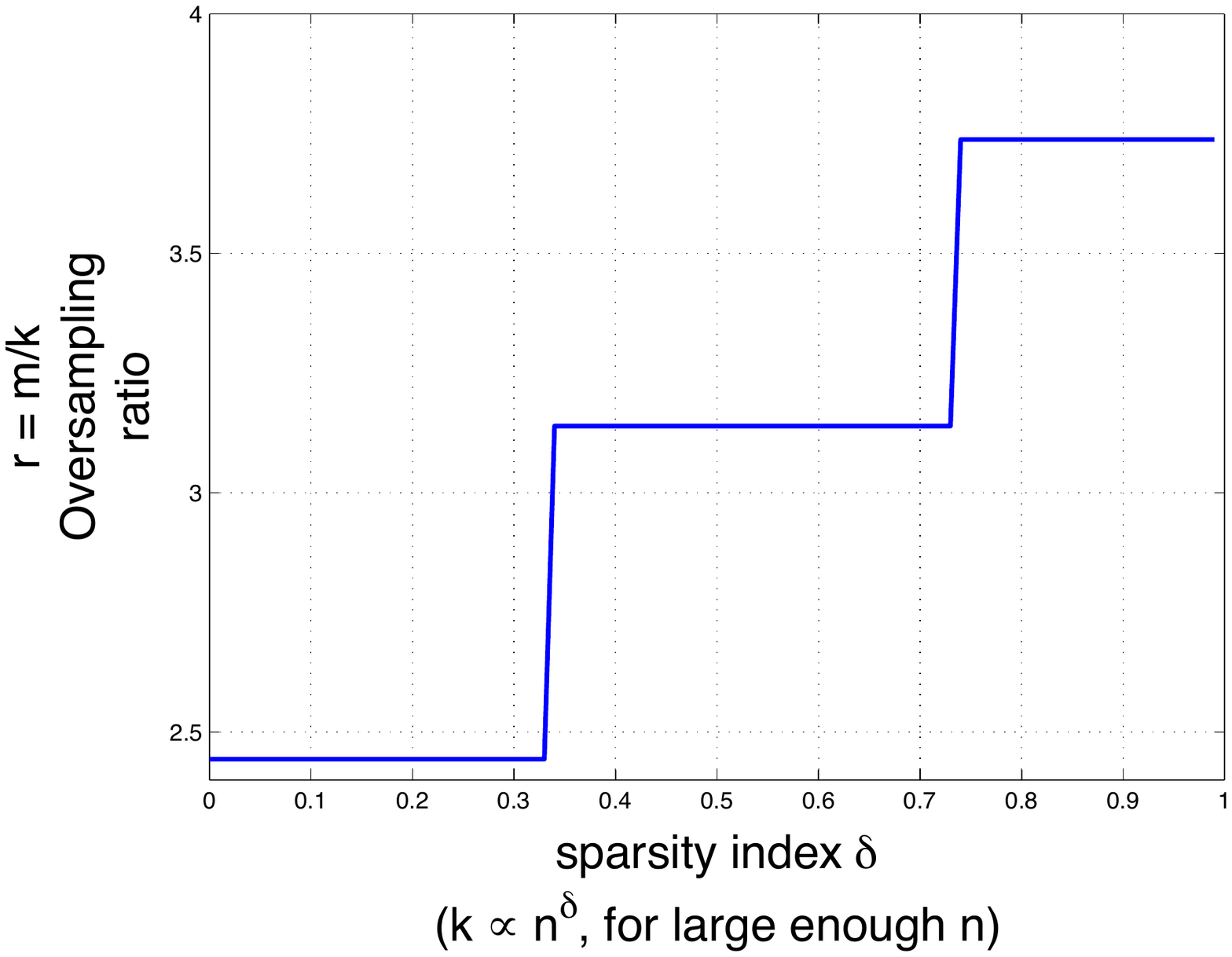}
\caption{The plot shows the relation between the oversampling ratio $\osratio = \samples/\sparsity$, and the sparsity index $\sindex$, for $0 < \sindex < 0.99$, where $\sparsity= O(\length^{\sindex})$. The FFAST algorithm successfully computes the $\sparsity$-sparse $\length$-length DFT $\transform$ of the desired $\length$-length signal $\signal$ with high probability, as long as the number of samples $\samples$ is above the threshold given in the plot. Note that for nearly the entire sub-linear regime of practical interest, e.g. $\sparsity < \length^{0.99}$, the oversampling ratio $\osratio < 4$. The above plot is an achievable performance of the FFAST algorithm using the constructions described in Section~\ref{sec:verysparseconstruction} and Section~\ref{sec:lesssparseconstruction}. There are many other constructions of FFAST architecture that may achieve similar or better performance for a specific sparsity index $\sindex$. }
\label{fig:constants}
\end{center}
\end{figure}  

We propose a novel FFAST architecture and algorithm to compute a $\sparsity$-sparse $\length$-length DFT, $\transform$, of an $\length$-length signal $\signal$. In this paper, an $\length$-length input signal $\signal$ is said to have a $\sparsity$-sparse DFT $\transform$, if $\transform$ has {\em at most} $\sparsity$ non-zero coefficients, whose locations are uniformly randomly distributed in $\{0,1,\ldots,\length-1\}$. The FFAST algorithm computes the $\sparsity$-sparse $\length$-length DFT with high probability, using as few as $O(\sparsity)$ samples of $\signal$ and $O(\sparsity \log \sparsity)$ computations. The following theorem states the main result of the paper.

\begin{theorem}\label{thm:main}
For any $0 < \sindex < 1$, and large enough $\length$, the FFAST algorithm computes a $\sparsity$-sparse DFT $\transform$ of an $\length$-length input $\signal$, where $\sparsity =O(\length^\sindex)$, with the following properties:
\begin{enumerate}
\item {\bf Sample complexity:} The algorithm needs $\samples = \osratio\sparsity$ samples of $\signal$, where $\osratio>1$ is a small constant and depends on the sparsity index $\sindex$;
\item {\bf Computational complexity:}  The FFAST algorithm computes DFT $\transform$ using $O(\sparsity\log\sparsity)$, arithmetic operations.
\item {\bf Probability of success:} The probability that the algorithm correctly computes the $\sparsity$-sparse DFT $\transform$ is at least 1- $O(1/\samples)$.
\end{enumerate}
\end{theorem}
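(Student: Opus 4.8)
The plan is to prove the three claims separately, with the probability-of-success bound carrying essentially all of the difficulty; the sample and computational bounds follow by direct counting once the architecture is fixed. The conceptual backbone, already flagged in the introduction, is to recast the sparse-DFT recovery as decoding over a sparse bipartite graph. First I would make this correspondence precise: associate one \emph{variable node} with each of the (at most) $\sparsity$ nonzero coefficients $\transformc{\ell_q}$, and one \emph{check node} (a ``bin'') with each output of the short DFTs computed in the $\stages$ subsampling stages. By the aliasing identity, subsampling $\signal$ so that stage $i$ produces $\factor{i}$ outputs folds $\transformc{\ell}$ into the bin indexed by $(\ell)_{\factor{i}}$; thus in each stage an edge joins coefficient $\ell$ to a unique bin, giving a graph in which every variable node has degree exactly $\stages$. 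Invoking the CRT (Theorem~\ref{thm:CRT}) on relatively co-prime $\factor{i}$, I would argue that the residue vector $((\ell)_{\factor{0}},\dots,(\ell)_{\factor{\stages-1}})$ is in bijection with $\ell$, so the $\stages$ bin-assignments of a coefficient are ``spread out'' like independent uniform choices. This is precisely the step that lets a \emph{deterministic} CRT-guided front end emulate a \emph{random} sparse-graph code ensemble.

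Next I would show that a degree-one check node can be resolved in closed form. Running each stage on both $\signal$ and a circularly shifted copy, a \emph{singleton} bin (one holding exactly one nonzero coefficient) exposes the ratio of its two DFT values; this ratio is a pure phase $e^{2\pi\imath\ell/\length}$ that pins down the location modulo $\factor{i}$, while the magnitude gives $\transformc{\ell}$. Collecting these residues across stages and reassembling via the CRT recovers $\ell\in\{0,\dots,\length-1\}$ exactly. This yields a \emph{peeling decoder}: repeatedly find a singleton, read off its coefficient, and subtract (``peel'') its contribution from the other bins it touches, possibly exposing new singletons. The decoder's running time is linear in the number of edges, $O(\stages\sparsity)=O(\sparsity)$, and the dominant cost is forming the $\stages$ short DFTs of length $O(\sparsity)$ by a standard FFT, i.e. $O(\sparsity\log\sparsity)$ operations in total; since each stage contributes a constant number of length-$O(\sparsity)$ sample streams, the sample count is $\samples=\osratio\sparsity$ with $\osratio$ a constant set by the number of stages, establishing claims~1 and~2.

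The heart of the proof is claim~3, which I would obtain by a density-evolution analysis of the peeling process on the induced ensemble, mirroring the treatment of LDPC and fountain codes over the erasure channel. Writing $p_j$ for the expected fraction of variable nodes still unresolved after $j$ rounds, one derives a one-dimensional recursion from the edge-degree distribution and shows that, for the chosen $\stages$ and oversampling $\osratio$ (whose values are tabulated against $\sindex$), this recursion has $0$ as its only fixed point in $[0,1)$ and drives $p_j\to 0$. The two sparsity regimes enter here through different front-end constructions --- plain co-prime subsampling for $0<\sindex\le 1/3$ and cyclically-shifted overlapping co-prime patterns for $1/3<\sindex<1$ (Sections~\ref{sec:verysparse} and~\ref{sec:lesssparse}) --- which produce the particular degree profiles that make the recursion converge while keeping $\osratio$ small.

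Finally, to turn the asymptotic density-evolution statement into the explicit $1-O(1/\samples)$ guarantee, I would follow the standard two-phase argument. A Doob-martingale bounded-differences concentration (or Wormald's differential-equation method) shows that the actual peeling trajectory stays within $o(\sparsity)$ of the density-evolution prediction with probability $1-o(1)$, reducing the residual graph to a vanishing fraction of nodes. The remaining ``cleanup'' phase is handled combinatorially: one proves that the induced graph is, with probability $1-O(1/\sparsity)$, a good expander on small sets, so that no small stopping set survives and the last $o(\sparsity)$ coefficients are all peeled off. The binding constraint --- and the step I expect to be hardest --- is this passage from the deterministic CRT construction to a genuinely random-graph ensemble with enough expansion to yield the sharp $O(1/\samples)$ failure probability; controlling the expansion of the \emph{CRT-induced} graph, rather than an i.i.d.\ random one, uniformly across both regimes, is where the CRT number theory and the coding-theoretic concentration arguments must be reconciled.
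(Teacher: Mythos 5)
Your overall architecture tracks the paper's proof closely (graph-code reduction, CRT/balls-and-bins ensemble equivalence, density evolution, Doob-martingale concentration, expansion cleanup, split into two regimes), but there is a genuine error in your description of the decoding mechanism. In the paper, the circular shift is applied to the full $\length$-length signal \emph{before} subsampling, so a singleton bin in any single stage yields the observation pair $(\transformc{\ell},\, e^{2\pi\imath\ell/\length}\transformc{\ell})$, and the phase of the ratio determines $\ell$ exactly, modulo $\length$, from that one bin alone; no cross-stage information is needed. Your proposal instead asserts that the ratio ``pins down the location modulo $\factor{i}$'' and that residues must be collected across all stages and reassembled via the CRT. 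This is internally inconsistent with your own formula $e^{2\pi\imath\ell/\length}$, and taken literally it breaks the analysis you then invoke: a coefficient could only be decoded when it is \emph{simultaneously} a singleton in all $\stages$ stages, an AND-rule peeling process whose evolution is not $p_{j+1}=\lambda(1-\rho(1-p_j))$ with $\lambda(\alpha)=\alpha^{\stages-1}$ — the recursion~\eqref{eq:evolution} and the thresholds of Table~\ref{tab:convergence} presuppose that \emph{one} singleton suffices to decode a variable node. Moreover, the peeling step itself (subtracting a decoded coefficient's contribution from its bins in the other stages) requires knowing $\ell$ exactly, not merely $(\ell)_{\factor{i}}$. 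In the paper the CRT's role is confined to the front-end design and the ensemble-equivalence argument (Lemma~\ref{lem:equivalence}); it plays no part in per-coefficient location recovery.

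A second, related slip: you invoke the CRT ``on relatively co-prime $\factor{i}$'' uniformly across both regimes, but in the less-sparse regime $1/3<\sindex<1$ the $\factor{i}$ are deliberately \emph{not} pairwise co-prime — they are overlapping products such as $\{\primef{0}\primef{1},\primef{1}\primef{2},\primef{2}\primef{0}\}$ of underlying co-primes $\primef{i}$, and the CRT bijection holds only with respect to the $\primef{i}$. Consequently the cleanup phase cannot be a single generic small-set expander lemma: the paper's expander argument (Lemma~\ref{lem:ballsandbinsexp}) applies in the very-sparse case, whereas in the less-sparse case the check-node coordinates are correlated and the bottleneck failure event is an explicit trapping set of size $2^{\stages}$, bounded by direct counting in the cube structure of Section~\ref{sec:cubetrappingset}. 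You correctly identified this CRT-versus-randomness reconciliation as the hardest step, but your sketch, as written, glosses over precisely the point where the argument must change between the two regimes.
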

\begin{proof}
We prove the theorem in three parts. In Section~\ref{sec:verysparse}, we analyze the performance of the FFAST algorithm for the very-sparse regime of $0 < \sindex \leq 1/3$, and in Section~\ref{sec:lesssparse} we analyze the less-sparse regime of $1/3 < \sindex < 1$. Lastly, in Section~\ref{sec:complexity} we analyze the sample and computational complexity of the FFAST algorithm.\end{proof}

\begin{remark}\label{rem:delta}[Oversampling ratio $\osratio$]
{The achievable oversampling ratio $\osratio$ in the sample complexity $\samples = rk$}, depends on the number of stages $\stages$ used in the FFAST architecture. The number of stages $\stages$, in turn, is a function of the sparsity index $\sindex$ (recall $\sparsity = O(\length^{\sindex}$)), and increases as $\sindex \rightarrow 1$ (i.e., as the number of the non-zero coefficients, $\sparsity$, approach the linear regime in $\length$). In Sections \ref{sec:verysparse} and \ref{sec:lesssparse}, we provide constructions of the FFAST front-end architecture, that require increasing number of stages $\stages$ as $\sindex$ increase from $0$ to $1$. In our proposed construction, the increase in $\stages$ occurs over intervals of $\sindex$, resulting in a staircase plot as shown in Fig.~\ref{fig:constants}. Note, that the plot in Fig.~\ref{fig:constants} is an achievable performance of the FFAST algorithm using the constructions described in Section~\ref{sec:verysparseconstruction} and Section~\ref{sec:lesssparseconstruction}. There are many other constructions of FFAST architecture that may achieve similar or better performance for a specific sparsity index $\sindex$. Table~\ref{tab:const} provides some example values of $\osratio$ and $\stages$ for different values of the sparsity index $\sindex$.
\begin{table}[h]
\begin{center}
\begin{tabular}{|c|c|c|c|c|c|}
  \hline
  $\sindex$ & 1/3 & 2/3 & 0.99 & 0.999 & 0.9999\\
  \hline
  \stages & 3 &  6   & 8 &  11  & 14 \\
  \hline
  \osratio & 2.45 &  3.14  & 3.74 &  4.64  & 5.51 \\
  \hline
\end{tabular}
\end{center}
\caption{The table shows the number of subsampling stages $\stages$ used in the FFAST architecture, and the corresponding values of the oversampling ratio $\osratio$, for different values of the sparsity index $\sindex$.}\label{tab:const}
\end{table}

\end{remark}

\section{Related work}\label{sec:relatedwork} The problem of computing a sparse discrete Fourier transform of a signal is related to the rich literature of frequency estimation \cite{prony1795essai, pisarenko1973retrieval, schmidt1986multiple, roy1989esprit} in statistical signal processing as well as compressive-sensing \cite{donoho2006compressed, candes2006near}. In frequency estimation, it is assumed that a signal consists of $\sparsity$ complex exponentials embedded in additive noise. The frequency estimation techniques are based on well-studied statistical methods like MUSIC and ESPRIT \cite{prony1795essai, pisarenko1973retrieval, schmidt1986multiple, roy1989esprit}, where the focus is on `super-resolution' spectral estimation from initial few consecutive samples, i.e., extrapolation. In contrast, the algorithm proposed in this paper combine tools from coding theory, number theory, graph theory and signal processing, to `interpolate' the signal from interspersed but significantly less number of samples. In compressive sensing, the objective is to reliably reconstruct the sparse signal from as few measurements as possible, using a fast recovery technique. The bulk of this literature concentrates on random linear measurements, followed by either convex programming or greedy pursuit reconstruction algorithms \cite{candes2006near, candes2006robust, tropp2007signal}. An alternative approach, in the context of sampling a continuous time signal with a finite rate of innovation is explored in \cite{vetterli2002sampling, dragotti2007sampling, blu2008sparse, mishali2010theory}. Unlike the compressive sensing problem, where the resources to be optimized are the number of measurements\footnote{Consider a compressive sensing problem with a measurement matrix $A$, i.e., $\mbf{y} = A\signal$, where $\mbf{y}$ is a measurement vector and $\signal$ is the input signal. Then, the sample complexity is equal to the number of non-zero columns of $A$ and the measurement complexity is equal to the number of non-zero rows of $A$.} and the recovery cost, in our problem, we want to minimize the {\em number of input samples} needed by the algorithm in addition to the recovery cost. 

At a higher level though, despite some key differences in our approach to the problem of computing a sparse DFT, our problem is indeed closely related to the spectral estimation and compressive sensing literature, and our approach is naturally inspired by this, and draws from the rich set of tools offered by this literature.   

A number of previous works \cite{GGI02, GMS05, gilbert2008tutorial, hassanieh2012nearly, hassanieh2012simple} have addressed the problem of computing a $1$-D DFT of a discrete-time signal that has a sparse Fourier spectrum, in sub-linear sample and time complexity. Most of these algorithms achieve a sub-linear time performance by first isolating the non-zero DFT coefficients into different bins, using specific filters or windows that have `good' (concentrated) support in both, time and frequency. The non-zero DFT coefficients are then recovered iteratively, one at a time. The filters or windows used for the binning operation are typically of length $O(\sparsity \log (\length))$. As a result, the sample and computational complexity is typically $O(\sparsity \log (\length))$ or more. Moreover the constants involved in the big-Oh notation can be large, e.g., the empirical evaluation of \cite{GMS05} presented in \cite{iwen2007empirical} shows that for $\length=2^{22}$ and $\sparsity=7000$, the number of samples required are $\samples \approx 2^{21} = 300k$ which is $75$ times more than the sample complexity $4k$ of the FFAST algorithm\footnote{As mentioned earlier, the FFAST algorithm requires the length of the signal $\length$ to be a product of a few distinct primes. Hence, the comparison is for an equivalent $\length \approx 2^{22}$ and $\sparsity=7000$.}. The work of \cite{gilbert2008tutorial} provides an excellent tutorial on some of the key ideas used by most sub-linear time sparse FFT algorithms in the literature. While we were writing this paper, we became aware of a recent work \cite{ghazi2013sample}, in which the authors consider the problem of computing a noisy as well as an exactly-sparse $2$-D DFT of size $\sqrt{\length} \times \sqrt{\length}$ signal. For an exactly sparse signal, and when $\sparsity = O(\sqrt{\length})$, the algorithm in \cite{ghazi2013sample} uses $O(\sparsity)$ samples to compute the 2-D DFT of the signal in $O(\sparsity\log\sparsity)$ time with a constant probability of failure (that is controllable but that does not appear go to zero asymptotically). In \cite{iwen2010combinatorial}, the author proposes a sub-linear time algorithm with a sample complexity of $O(\sparsity \log^4n)$ or $O(\sparsity^2\log^4n)$ and computational complexity of $O(\sparsity\log^5n)$ or $O(\sparsity^2\log^4n)$ to compute a sparse DFT, with high probability or zero-error respectively. The algorithm in \cite{iwen2010combinatorial} exploits the Chinese-Remainder-Theorem, along with $O(poly(\log \length))$ number of subsampling patterns to identify the locations of the non-zero DFT coefficients. In contrast, the FFAST algorithm exploits the CRT to induce `good' sparse-graph codes using a small constant number of subsampling patterns and computes the sparse DFT with a vanishing probability of failure.

\section{Computing DFT using decoding of alias-codes}\label{sec:DFTSparseCodes} 
\begin{figure}[h]
\begin{center}
\includegraphics[width=\linewidth]{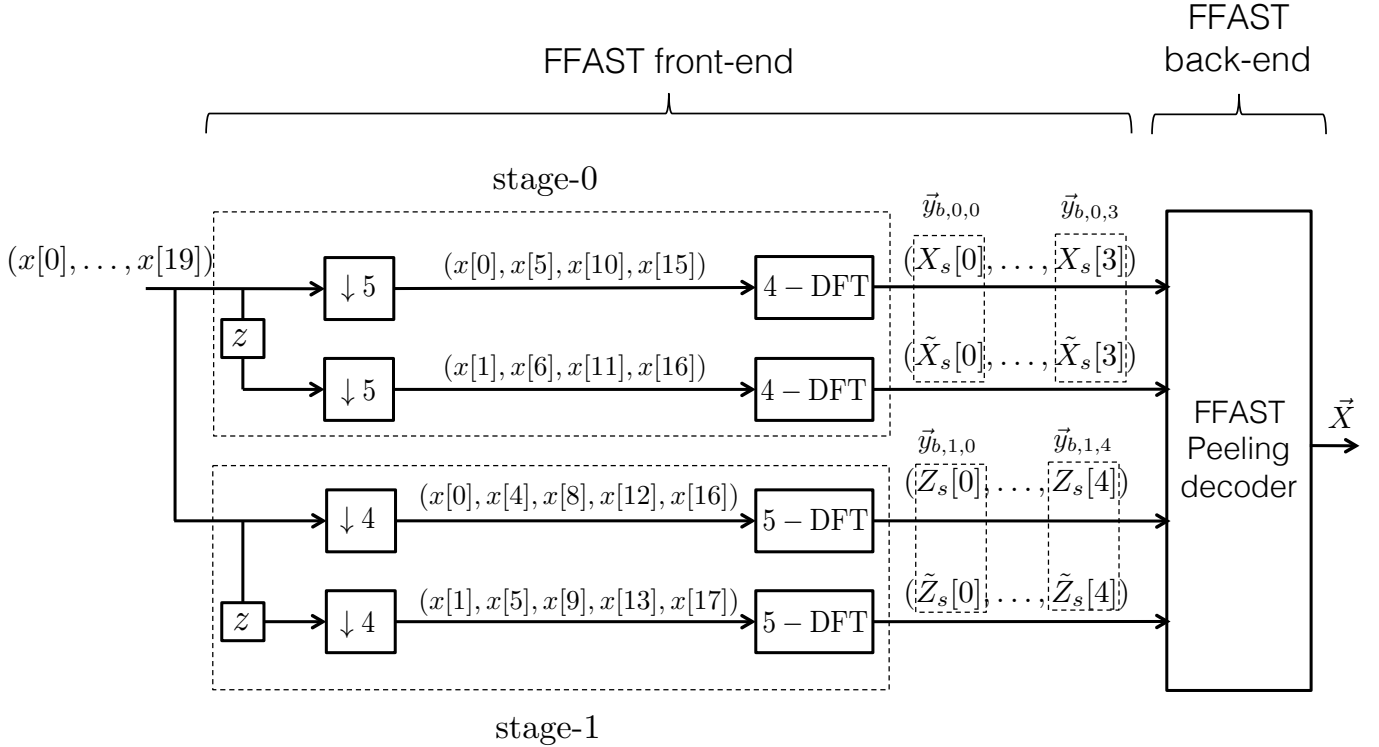}
\caption{An example FFAST architecture. The input to the FFAST architecture is a $20$-length discrete-time signal $\signal = (\signalc{0},\hdots,\signalc{19})$. The input signal and its circularly shifted version are first subsampled by $5$ to obtain two streams of sampled signal, also referred as delay-chains, each of length $\factor{0} = 4$. A $4$-length DFT of the output of each dely-chain is then computed to obtain the observations $(X_s[.],\tilde{X}_s[.])$. Similarly, downsampling by $4$ followed by a $5$-length DFT provides the second set of $\factor{1} = 5$ observations $(Z_s[.],\tilde{Z}_s[.])$. Note that the number of output samples $\factor{0}$ and $\factor{1}$ in the two different stages  are pairwise co-prime and are factors of $\length=20$. In general, the number of stages and the choice of the subsampling factors depend on the sparsity index $\sindex$, as will be described in Section~\ref{sec:verysparse} and Section~\ref{sec:lesssparse}.}
\label{fig:ffastex}
\end{center}
\end{figure}

In this section, we use a simple example to illustrate the working mechanics of the FFAST sub-sampling ``front-end" and the associated ``back-end" peeling-decoder. Then we demonstrate, how the output of the FFAST front-end sub-sampling can be viewed as a sparse graph code in the frequency domain, which we refer to as ``Alias-codes", and computing the DFT is equivalent to decoding of this resulting alias-code. Later, we also point out a connection between the FFAST and coding techniques for a packet erasure channel.

Consider a $20$-length discrete-time signal $\signal = (\signalc{0},\hdots,\signalc{19})$, such that its $20$-length DFT $\transform$, is $5$-sparse. Let the $5$ non-zero DFT coefficients of the signal $\signal$ be $\transformc{1} = 1, \transformc{3}=4, \transformc{5} = 2, \transformc{10} = 3$ and $\transformc{13}=7$. 

\subsection{FFAST sub-sampling front-end}\label{sec:frontEnd} In general, the FFAST sub-sampling front-end architecture, as shown in Fig.~\ref{fig:conceptual}, consists of multiple sub-sampling stages $\stages$ and each stage further has $2$ sub-sampling paths, referred to as ``delay-chains". The sampling periods used in both the delay-chains of a stage are identical. For example, consider a $2$ stage FFAST sub-sampling front-end, shown in Fig.~\ref{fig:ffastex}, that samples the $20$-length input signal $\signal$ and its circularly shifted version\footnote{Conventionally, in signal processing literature $z$ is used to denote a time-delay. In this paper, for ease of exposition we use $z$ to denote a time-advancement (see Fig.~\ref{fig:ffastex} for an example).} by factors $5$ and $4$ respectively. In the sequel we use, $\factor{i}$ to denote the number of the output samples per delay-chain of stage $i$, e.g., $\factor{0}=4$ and $\factor{1} = 5$ in Fig.~\ref{fig:ffastex}. The FFAST front-end then computes short DFT's of  appropriate lengths of the individual output data streams of the delay-chains. Next, we group the output of short DFT's into ``bin-observation".
\subsubsection{\textbf{Bin observation}}\label{sec:binobs} A bin-observation is a $2$-dimensional vector formed by collecting one scalar output value from the DFT output of the signal from each of the $2$ delay chains in a stage. For example, $\binobsv{0}{1}$ is an observation vector of bin $1$ in stage $0$ and is given by,
\begin{equation}\label{eq:bin0}
\binobsv{0}{1} = 
\left(
\begin{array}{c}
  X_s[1]\\
  \tilde{X}_s[1]
  \end{array}
\right).
\end{equation}
The first index of the observation vector corresponds to the stage number, while the second index is the bin number within a stage. Note that in the FFAST architecture of Fig.~\ref{fig:ffastex}, there are total of $4$ bins in stage $0$ and $5$ bins in stage $1$. 

Using basic Fourier transform properties, reviewed below, of sampling, aliasing and circular shift, one can compute the relation between the original $20$-length DFT $\transform$ of the signal $\signal$ and the output bin-observations $\{\binobsv{i}{j}\}$ of the FFAST front-end.
\begin{itemize}
\item {\bf Aliasing:} If a signal is subsampled in the time-domain, its frequency components mix together, i.e., alias, in a pattern that depends on the sampling procedure. For example, consider the output of the first delay-chain of the stage-0 in Fig.~\ref{fig:ffastex}. The input signal $\signal$ is uniformly sampled by a factor of $5$ to get $\signal_s = (\signalc{0}, \signalc{5}, \signalc{10}, \signalc{15})$. Then, the $4$-length DFT of $\signal_s$ is related to the original $20$-length DFT $\transform$ as:
\begin{eqnarray*}
X_s[0] &=& \transformc{0} + \transformc{4} + \transformc{8} + \transformc{12} + \transformc{16} = 0\\
X_s[1] &=& \transformc{1} + \transformc{5} + \transformc{9} + \transformc{13} + \transformc{17} = 10\\
X_s[2] &=& \transformc{2} + \transformc{6} + \transformc{10} + \transformc{14} + \transformc{18} = 3\\
X_s[3] &=& \transformc{3} + \transformc{7} + \transformc{11} + \transformc{15} + \transformc{19} =4
\end{eqnarray*}

More generally, if the sampling period is $N$ (we assume that $N$ divides $\length$) then,
\begin{equation}\label{eq:aliasing_pattern}
X_s[i] = \sum_{j \equiv (i)_{\length/N}} \transformc{j},
\end{equation}
where the notation $j \equiv (i)_{\length/N}$, denotes $j \equiv i \ \text{mod} \ \length/N$.

\item {\bf Circular Shift in time:} A circular shift in the time-domain results in a phase shift in the frequency-domain. Consider a circularly shifted signal $\signal^{(1)}$ obtained from $\signal$ as $x^{(1)}[{i}] = \signalc{(i+1)_{\length}}$. The DFT coefficients of the shifted signal $\signal^{(1)}$, are given as, $X^{(1)}[j] = \omega^{j}_n\transformc{j}$, where $\omega_n = \exp(2\pi\imath/\length)$ is an $\length^{th}$ root of unity. In general a circular shift of $\length_0$ results in $X^{(n_0)}[j] = \omega^{jn_0}_n\transformc{j}$. 
\end{itemize}

\subsection{Alias-codes and its connection to computing sparse DFT from sub-sampled signal}\label{sec:alias codes}
\begin{figure}[h]
\begin{center}
\includegraphics[scale=.5]{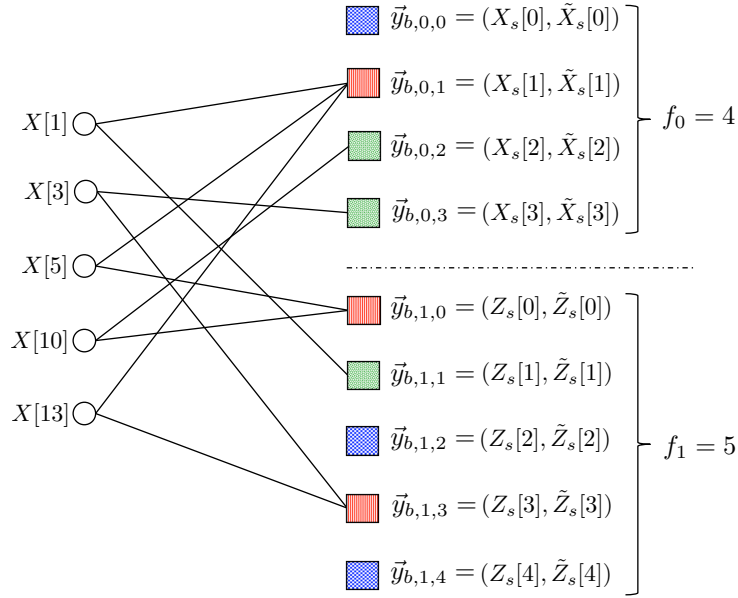}
\caption{A $2$-left regular degree bi-partite graph representing relation between the unknown non-zero DFT coefficients, of the $20$-length example signal $\signal$, and the bin observations obtained through the FFAST front-end architecture shown in Fig.~\ref{fig:ffastex}. Left nodes of the bi-partite graph represent the $5$ non-zero DFT coefficients of the input signal $\signal$, while the right nodes represent the ``bins" ( also sometimes referred in the sequel as ``check nodes") with vector observations. An edge connects a left node to a right node iff the corresponding non-zero DFT  coefficient contributes to the observation vector of that particular bin. The observation at each check node is a 2-dimensional complex-valued vector, e.g., $\binobsv{0}{0} = (X_s[0], \tilde{X}_s[0])$.}
\label{fig:aliasinggraph}
\end{center}
\end{figure}

In this section, we represent the relation between the original $20$-length $5$-sparse DFT $\transform$ and the FFAST front-end output, i.e., bin-observations, obtained using the Fourier properties, in a graphical form and interpret it as a ``channel-code". In particular, since this code is a result of a sub-sampling and aliasing operations, we refer to it as an ``Alias-code". We also establish that computing the sparse DFT $\transform$ of the signal $\signal$ from its sub-sampled data is equivalent to decoding the alias-code resulting from processing the input signal $\signal$ through the FFAST front-end.

Suppose that the $20$-length example input signal $\signal$ is processed through a $2$ stage FFAST front-end architecture shown in Fig.~\ref{fig:ffastex}, to obtain the bin-observation vectors $(X_s[\cdot],\tilde{X}_s[\cdot])$ and $(Z_s[\cdot],\tilde{Z}_s[\cdot])$. Then, the relation between the $9$ bin-observation vectors and the $5$ non-zero DFT coefficients of the signal $\signal$ can be computed using the signal processing properties. A graphical representation of this relation is shown in Fig.~\ref{fig:aliasinggraph}. Left nodes of the bi-partite graph represent the $5$ non-zero DFT coefficients of the input signal $\signal$, while the right nodes represent the ``bins" ( also sometimes referred in the sequel as ``check nodes") with vector observations. An edge connects a left node to a right node iff the corresponding non-zero DFT  coefficient contributes to the observation vector of that particular bin, e.g., after aliasing, due to sub-sampling, the DFT coefficient $\transformc{10}$ contributes to the observation vector of bin $2$ of stage $0$ and bin $0$ of stage $1$. Thus, the problem of computing a $5$-sparse $20$-length DFT has been transformed into that of decoding the values and the support of the left nodes of the bi-partite graph in Fig.~\ref{fig:aliasinggraph}, i.e., decoding of alias-code. Next, we classify the observation bins based on its edge degree in the bi-partite graph, i.e., number of non-zero DFT coefficients contributing to a bin, which is then used to decode the alias-code.
\begin{itemize}
\item {\bf zero-ton}: A bin that has no contribution from any of the non-zero DFT coefficients of the signal, e.g., bin $0$ of stage $0$ in Fig.~\ref{fig:aliasinggraph}. A zero-ton bin can be trivially identified from its observations.
\item {\bf single-ton}: A bin that has contribution from exactly one non-zero DFT coefficient of the signal, e.g., bin $2$ of stage $0$. Using the signal processing properties, it is easy to verify that the observation vector of bin $2$ of stage $0$ is given as,
\begin{equation*}
\binobsv{0}{2} = 
\left(
\begin{array}{c}
  \transformc{10}\\
  e^{2\pi\imath10/20}  \transformc{10}
  \end{array}
\right).
\end{equation*}
The observation vector of a single-ton bin can be used to determine the support and the value, of the only non-zero DFT coefficient contributing to that bin, as follows:
\begin{itemize}
\item {\em support}: The support of the non-zero DFT coefficient contributing to a single-ton bin can be computed as,
\begin{equation}
10 = \frac{20}{2\pi}\angle \binobsv{0}{2}[1]y^{\dagger}_{b,0,2}[0]
\end{equation}
\item {\em Value}: The value of the non-zero DFT coefficient is given by the observation $y_{b,0,2}[0]$.
\end{itemize}
We refer to this procedure as a ``ratio-test", in the sequel. Thus, a simple ratio-test on the observations of a single-ton bin correctly identifies the support and the value of the only non-zero DFT coefficient connected to that bin. It is easy to verify that this property holds for all the single-ton bins.

\item {\bf multi-ton}: A bin that has a contribution from more than one non-zero DFT coefficients of the signal, e.g., bin $1$ of stage $0$. The observation vector of bin $1$ of stage $0$ is,
\begin{eqnarray*}
\binobsv{0}{1} &=& \transformc{1}\left(
\begin{array}{c}
  1 \\
  e^{\imath2\pi/20}
\end{array}
\right) + \transformc{5}\left(
\begin{array}{c}
  1 \\
  e^{\imath2\pi5/20}
\end{array}
\right) + \transformc{13}\left(
\begin{array}{c}
  1 \\
  e^{\imath2\pi13/20}
\end{array}
\right)\\
&=& \left(\begin{array}{c}
  10 \\
  -3.1634 - \imath3.3541
\end{array}
\right)
\end{eqnarray*}
Now, if we perform the ``ratio-test" on these observations, we get, the support to be $12.59$. Since, we know that the support has to be an integer value between $0$ to $19$, we conclude that the observations do not correspond to a single-ton bin. In Appendix~\ref{app:multi-tonratiotest}, we rigorously show that the ratio-test identifies a multi-ton bin almost surely.
\end{itemize}

Hence, using the ``ratio-test" on the bin-observations, the FFAST decoder can determine if a bin is a zero-ton, a single-ton or a multi-ton, almost surely. Also, when a bin is single-ton the ratio-test provides the support as well as the value of the non-zero DFT coefficient connected to that bin. In the following Section~\ref{sec:FFASTdecoder}, we provide the FFAST peeling-decoder that computes the support and the values of all the non-zero DFT coefficients of the signal $\signal$.

\subsection{FFAST back-end peeling-decoder}\label{sec:FFASTdecoder}
In the previous section we have explained how the {\em ratio-test} can be used to determine if a bin is a zero-ton, single-ton or a multi-ton bin almost surely. Also, in the case of a single-ton bin the ratio-test also identifies the support and the value of the non-zero DFT coefficient connected to that bin. 

\paragraph*{FFAST peeling-decoder} The FFAST decoder repeats the following steps (the pseudocode is provided in Algorithm~\ref{alg:FFAST} and Algorithm~\ref{alg:single-ton_estimator} in Appendix~\ref{app:pseudocode}):
\begin{enumerate}
\item Select all the edges in the graph with right degree 1 (edges connected to single-ton bins).
\item Remove these edges from the graph as well as the associated left and right nodes.
\item Remove all the other edges that were connected to the left nodes removed in step-2. When a neighboring edge of any right node is removed, its contribution is subtracted from that check node.
\end{enumerate}
Decoding is successful if, at the end, all the edges have been removed from the graph. It is easy to verify that the FFAST peeling-decoder operated on the example graph of Fig.~\ref{fig:aliasinggraph} results in successful decoding, with the coefficients being uncovered in the following possible order: $\transformc{10},\transformc{3},\transformc{1},\transformc{5},\transformc{13}$.

Thus, the FFAST architecture has {\em transformed the problem of computing the DFT of $\signal$ into that of decoding alias-code} of Fig.~\ref{fig:aliasinggraph}. Clearly the success the FFAST decoder depends on the properties of the sparse bi-partite graph resulting from the sub-sampling operation of the FFAST front-end. In particular, if the sub-sampling induced aliasing bi-partite graph is {\em peeling-friendly}, i.e., has few single-ton bins to initiate the peeling procedure and creates new single-tons at each iteration, until all the DFT coefficients are uncovered, the FFAST peeling-decoder succeeds in computing the DFT $\transform$.

\subsection{Connection to coding for packet erasure channels}\label{sec:erasurechannel}
\begin{figure}[t]
 \begin{center}
\subfigure[Bi-partite code graph, corresponding to a parity check matrix of an $(\length,\length-\nbins)$ sparse-graph code designed for an erasure channel.]{
{\includegraphics[scale=0.45]{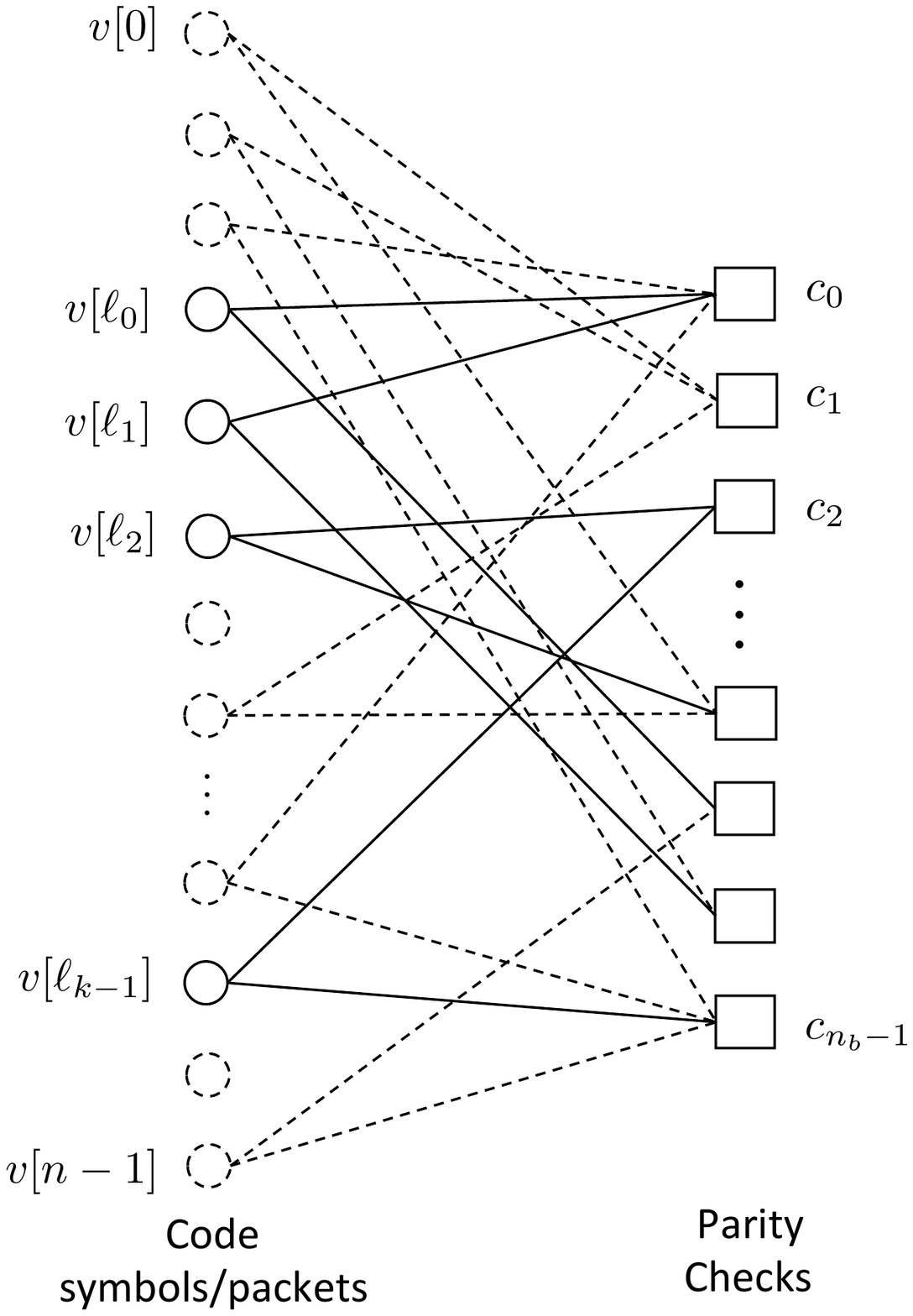}}
\label{fig:codegraph}
}\quad\quad\qquad
\subfigure[Bi-partite graph representing the aliasing connections, resulting from sub-sampling operation of the signal $\signal$ by an FFAST front-end.]{
\includegraphics[scale = 0.45]{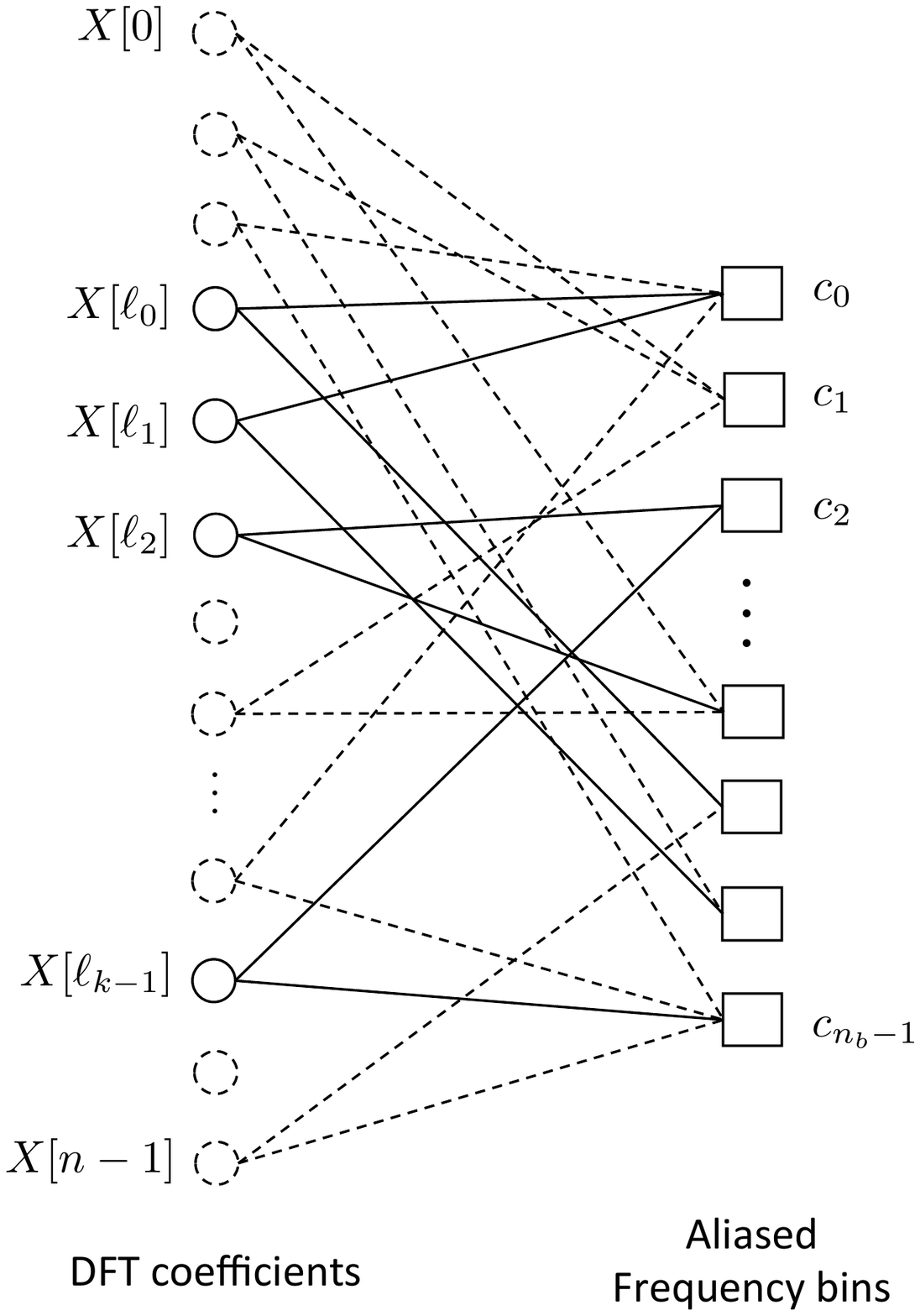}
\label{fig:FFASTgraph}
}
\end{center}
\caption{Comparison between the bi-partite graphs corresponding to the parity check matrix of a sparse-graph code for an erasure channel and a graph induced by the FFAST front-end subsampling architecture.}\label{fig:erasure connection}
\end{figure}

\begin{table}[h!]
\begin{center}
\begin{tabulary}{15cm}{|L|L|}
\hline
Erasure Channel & Sparse DFT\\
\hline
1. Explicitly designed sparse-graph code. & 1. Implicitly designed sparse-graph code induced by sub-sampling. \\
\hline 
2. $\length-\sparsity$ correctly received packets. & 2. $\length-\sparsity$ zero DFT coefficients.\\
\hline
3. $\sparsity$-erased packets. & 3. $\sparsity$ unknown non-zero DFT coefficients\\
\hline 
4. Peeling-decoder recovers the values of the erased packets using `single-ton' check nodes. The identity (location) of the erased packets is {\em known}. & 4. Peeling-decoder recovers {\em both} the values and the locations of the non-zero DFT coefficients using `single-ton' check nodes. The locations of the non-zero DFT coefficients are {\em not known}. This results in a $2\times$ cost in the sample complexity. \\
\hline
5. Codes based on regular-degree bipartite graphs are near-capacity-achieving. More efficient, capacity-achieving irregular-degree bipartite graph codes can be designed. & 5. The FFAST architecture based on uniform subsampling can induce only left-regular degree bi-partite graphs. \\
\hline
\end{tabulary}
\end{center}
\caption{Comparison between decoding over a sparse-graph code for a packet erasure channel and computing a sparse DFT using the FFAST architecture.}\label{tab:similarities}
\end{table}

The problem of decoding sparse-graph codes for erasure channel has been well studied in the coding theory literature. In this section we draw an analogy between decoding over sparse-graph codes for a packet erasure channel and decoding over sparse bi-partite graphs induced by the FFAST architecture. 

\paragraph{Sparse-graph code for a packet-erasure channel}
Consider an $(\length,\length-\nbins)$ packet erasure code. Each $\length$-length codeword consists of $(\length-\nbins)$ information packets and $\nbins$ parity packets. The erasure code is defined by a bi-partite graph as shown in Fig.~\ref{fig:codegraph}. An $\length$-length sequence of packets that satisfies the constraints defined by the graph in Fig.~\ref{fig:codegraph}, i.e., sum of the packets connected to a parity check node equals zero, is a valid codeword. Suppose a codeword from the code, defined by the graph of Fig.~\ref{fig:codegraph}, is transmitted over an erasure channel that uniformly at random drops some $\sparsity$ number of packets. In Fig.~\ref{fig:codegraph}, we use dotted circles to represent the correctly received packets (a cyclic redundancy check can be used to verify the correctly received  packets). Let $\{v[\ell_0],\hdots,v[\ell_{k-1}]\}$ be the $\sparsity$ packets that are dropped/erased by the channel. The dotted edges in the bi-partite graph of Fig.~\ref{fig:codegraph} denote the operation of subtracting the contribution of all the correctly received packets from the corresponding check nodes. A peeling-decoder can now iteratively unravel the erased packets that are connected to a check node with exactly one erased packet. If the bipartite graph consisting only of the solid variable nodes and the solid edges is such that the peeling-decoder successfully unravels all the erased packets, decoding is successful.

\paragraph{Decoding over a bipartite graph induced by the FFAST}
Consider an $\length$-length signal $\signal$ that has a $\sparsity$-sparse DFT $\transform$. Let the signal $\signal$ be sub-sampled by some appropriately designed FFAST front-end. The induced spectral-domain aliasing due to FFAST front-end sub-sampling operation is graphically represented by a bipartite graph shown in Fig.~\ref{fig:FFASTgraph}. The variable (left) nodes correspond to the $\length$-length DFT $\transform$ and the check (right) nodes are the bins consisting of the aliased DFT coefficients. A variable node is connected to a check node, iff after aliasing that particular DFT coefficient contributes to the observation of the considered check node. Let $\{\transformc{\ell_0},\transformc{\ell_1},\hdots, \transformc{\ell_{\sparsity-1}}\}$ be the $\sparsity$ non-zero DFT coefficients. The zero DFT coefficients are represented by dotted circles. The dashed edges in Fig.~\ref{fig:FFASTgraph} denotes that the contribution to the bin-observation, due to this particular edge is zero. Using the ratio-test on the vector observation at each check-node one can determine if the check node is a ``single-ton", i.e., has exactly one solid edge. A peeling-decoder can now iteratively unravel the non-zero DFT coefficients connected to single-ton check nodes. If the bipartite graph consisting only of the solid variable nodes and the solid edges is such that peeling-decoder successfully unravels all the variable nodes, the algorithm  succeeds in computing the DFT $\transform$. In Table~\ref{tab:similarities} we provide a comparison between decoding over bi-partite graphs of Fig.~\ref{fig:codegraph} and Fig.~\ref{fig:FFASTgraph}.

Thus, the problem of decoding bi-partite graphs corresponding to sparse-graph codes designed for a packet-erasure channel is closely related to decoding the sparse bi-partite graphs induced by the FFAST architecture. We use this analogy: a) to design a sub-sampling front-end that induces a `good' left-regular degree sparse-graph codes; and b) to formally connect our proposed Chinese-Remainder-Theorem based aliasing framework to a random sparse-graph code constructed using a balls-and-bins model (explained in Section~\ref{sec:bbensemble}), and analyze the convergence behavior of our algorithm using well-studied density evolution techniques from coding theory.

Next, we address the question of how to carefully design the sub-sampling parameters of the FFAST front-end architecture so as to induce ``good-graphs" or ``alias-codes". In Section~\ref{sec:verysparse} and Section~\ref{sec:lesssparse} we provide constructions of the FFAST front-end architecture and analyze the performance of the FFAST peeling-decoder for the {\em very-sparse}, i.e., $0 < \sindex \leq 1/3$, and the {\em less-sparse}, i.e., $1/3 < \sindex < 1$, regimes of sparsity respectively.
\section{FFAST Construction and Performance analysis for the {\em very-sparse} ($ \sparsity=O(\length^{\sindex}), \  0 < \sindex \leq 1/3$) regime}\label{sec:verysparse}
In Section~\ref{sec:DFTSparseCodes}, using an example, we illustrated that the problem of computing a $\sparsity$-sparse $\length$-length DFT of a signal can be transformed into a problem of decoding over sparse bipartite graphs using the FFAST architecture. In this section, we provide a choice of parameters of the FFAST front-end architecture and analyze the probability of success of the FFAST peeling-decoder for the very-sparse regime of $0 < \sindex < 1/3$. As shown in Section~\ref{sec:erasurechannel}, the FFAST decoding process is closely related to the decoding procedure on  sparse-graph codes designed for erasure channels. From the coding theory literature, we know that there exist several sparse-graph code constructions that are low-complexity and capacity-achieving for the erasure channels. The catch for us is that we are not at liberty to use any arbitrary bi-partite graph, but {\em can choose only those graphs that correspond to the alias-codes, i.e., are induced via aliasing through our proposed FFAST subsampling front-end}. How do we go about choosing the right parameters and inducing the good graphs?

We describe two ensembles of bi-partite graphs. The first ensemble is based on a ``balls-and-bins" model, while the second ensemble is based on the CRT. The balls-and-bins model based ensemble of graphs is closer in spirit to the sparse-graph codes in the coding-theory literature. Hence, is amenable to a rigorous analysis using coding-theoretic tools like density-evolution \cite{luby2001improved}. The bi-partite graphs induced by the FFAST front-end sub-sampling operation belong to the CRT ensemble. Later, in Lemma~\ref{lem:equivalence}, we show that the two ensembles are equivalent. Hence, the analysis of the balls-and-bins construction carries over to the FFAST. We start by setting up some notations and common parameters.

Consider a set $\setfactors = \{\factor{0},\hdots, \factor{\stages-1}\}$ of pairwise co-prime integers. Let the signal length $\length = {\cal P}\prod_{i=0}^{\stages-1}\factor{i}$, for some positive integer ${\cal P} \geq 1$, and $\nbins \triangleq \sum_{i=0}^{\stages-1} \factor{i}$. The integers $\factor{i}$'s are chosen such that they are approximately equal and we use $\binsize$ to denote this value. More precisely, $\factor{i} = \binsize + O(1)$, for $i=0,\hdots,\stages-1$, where $\binsize$ is an asymptotically large number. The $O(1)$ perturbation term in each $\factor{i}$ is used to obtain a set of co-prime integers\footnote{An example construction of an approximately equal sized $3$ co-prime integers can be obtained as follows. Let $\binsize = 2^{r_0}3^{r_1}5^{r_2}$ for any integers $r_0,r_1,r_2$ greater than $1$. Then, $f_0 = \binsize + 2, \factor{1} = \binsize + 3$ and $\factor{2} = \binsize + 5$ are co-prime integers.} approximately equal to ${\binsize}$. We construct a FFAST sub-sampling front-end architecture with $\stages$ stages. Each stage further has $2$ delay-chains (see Fig.~\ref{fig:conceptual} for reference). The sub-sampling period used in both delay-chains of stage $i$ is $\length/\factor{i}$ and hence the number of output samples is $\factor{i}$. The total number of input samples used by the FFAST algorithm is $\samples = 2\stages\binsize + O(1)$ (see Fig.~\ref{fig:conceptual}). In this section, we use $\binsize = \binexp \sparsity$, for some constant $\binexp > 0$. This results in a sparsity index $0 < \sindex \leq 1/\stages$, depending on the value of the integer ${\cal P}$.

\subsection{Ensemble $\bbensemble$ of bi-partite graphs constructed using  a ``Balls-and-Bins'' model}\label{sec:bbensemble} Bi-partite graphs in the ensemble $\bbensemble$, have $\sparsity$ {\em variable nodes} on the left and $\nbins$ {\em check nodes} on the right. Further, each variable (left) node is connected to $\stages$ right nodes, i.e., left-regular degree bi-partite graphs. An example graph from an ensemble $\bbensemble$, for $\setfactors = \{4,5\}, \stages = 2, \sparsity = 5$ and $\nbins=9$ is shown in Fig.~\ref{fig:aliasinggraph}. More generally, the ensemble $\bbensemble$ of $\stages$-left regular edge degree bipartite graphs constructed using a ``balls-and-bins" model is defined as follows. Set $\nbins = \sum_{i=0}^{\stages-1}\factor{i}$, where $\setfactors = \{\factor{i}\}_{i=0}^{\stages-1}$. Partition the set of $\nbins$ check nodes into $\stages$ subsets with the $i^{th}$ subset having $\factor{i}$ check nodes. For each variable node, choose one neighboring check node in each of the $\stages$ subsets, uniformly at random. The corresponding $\stages$-left regular degree bipartite graph is then defined by connecting the variable nodes with their neighboring check nodes by an undirected edge.  

An {\it edge} $e$ in the graph is represented as a pair of nodes $e = \{v, c\}$, where $v$ and $c$ are the variable and check nodes incident on the edge $e$. By a {\it directed} edge $\vec{e}$ we mean an ordered pair $(v,c)$ or $(c,v)$ corresponding to the edge $e = \{v, c\}$. A {\it path} in the graph is a directed sequence of directed edges $\vec{e_1}, \hdots, \vec{e_{t}}$ such that, if $\vec{e_i} = (u_i,u'_{i})$, then the $u'_i = u_{i+1}$ for $i=1,\hdots,t-1$. The length of the path is the number of directed edges in it, and we say that the path connecting $u_1$ to $u_{t}$ starts from $u_1$ and ends at $u_{t}$.
\subsubsection{Directed Neighborhood} The {\it directed neighborhood of depth $\ell$} of $\vec{e} = (v, c)$, denoted by ${\cal N}^{\ell}_{\vec{e}}$, is defined as the induced subgraph containing all the edges and nodes on paths $\vec{e_1},\hdots,\vec{e_{\ell}}$ starting at node $v$ such that $\vec{e_1} \neq \vec{e}$. An example of a directed neighborhood of depth $\ell=2$ is given in Fig.~\ref{fig:localtree}. If the induced sub-graph corresponding to the directed neighborhood ${\cal N}^{\ell}_{\vec{e}}$ is a tree then we say that the depth-$\ell$ neighborhood of the edge $\vec{e}$ is {\it tree-like}.

\subsection{Ensemble $\crtensemble$ of bipartite graphs constructed using the Chinese-Remainder-Theorem (CRT)}\label{sec:crtensemble} The ensemble $\crtensemble$ of $\stages$-left regular degree bipartite graphs, with $\sparsity$ variable nodes and $\nbins$ check nodes, is defined as follows. Partition the set of $\nbins$ check nodes into $\stages$ subsets with the $i^{th}$ subset having $\factor{i}$ check nodes (see Fig.~\ref{fig:aliasinggraph} for an example). Consider a set ${\cal I}$ of $\sparsity$ integers, where each element of the set ${\cal I}$ is between $0$ and $\length-1$. Assign the $\sparsity$ integers from the set ${\cal I}$ to the $\sparsity$ variable nodes in an arbitrary order. Label the check nodes in the set $i$ from $0$ to $\factor{i}-1$ for all $i =0,\hdots,\stages-1$. A $\stages$-left regular degree bi-partite graph with $\sparsity$ variable nodes and $\nbins$ check nodes, is then obtained by connecting a variable node with an associated integer $v$ to a check node $(v)_{\factor{i}}$ in the set $i$, for $i=0,\hdots,\stages-1$. The ensemble $\crtensemble$ is a collection of all the $\stages$-left regular degree bipartite graphs induced by all possible sets ${\cal I}$. 

\begin{lemma}\label{lem:equivalence} The ensemble of bipartite graphs $\bbensemble$ is identical to the ensemble $\crtensemble$.
\begin{proof} It is trivial to see that $\crtensemble \subset \bbensemble$. Next we show the reverse. Consider a graph ${\cal G}_1 \in \bbensemble$. Suppose, a variable node $v \in {\cal G}_1$ is connected to the check nodes numbered $\{r_i\}^{\stages-1}_{i=0}$. Then, using the CRT, one can find {\cal P} number of integer's `$q$' between $0$ and $\length-1$ such that $(q)_{\factor{i}} = r_i \  \forall i = 0,\hdots,\stages-1$. Thus, for every graph ${\cal G}_1 \in \bbensemble$, there exists a set ${\cal I}$ of $\sparsity$ integers, that will result in an identical graph using the CRT based construction. Hence, $\bbensemble= \crtensemble$. 
\end{proof}
\end{lemma}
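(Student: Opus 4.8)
The plan is to prove the set equality of the two ensembles by first observing that both produce graphs on the \emph{same} labelled vertex skeleton, and then reducing the question to whether the two constructions realize the same edge configurations. Concretely, in both ensembles there are $\sparsity$ variable nodes and $\nbins$ check nodes partitioned into $\stages$ groups of sizes $\factor{0},\dots,\factor{\stages-1}$, and every graph is $\stages$-left-regular with exactly one edge leaving each variable node into each of the $\stages$ groups. Consequently, a graph in either ensemble is completely determined by the data of which check node (by its label $0,\dots,\factor{i}-1$) each variable node touches in group $i$; that is, by a function assigning to each variable node a tuple $(r_0,\dots,r_{\stages-1})$ with $0\le r_i<\factor{i}$. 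So it suffices to show that the set of tuple-assignments realizable by the balls-and-bins rule coincides with the set realizable by the CRT rule.

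The easy inclusion $\crtensemble\subseteq\bbensemble$ is immediate: given a set ${\cal I}$ of integers, the CRT construction connects the variable node carrying value $v$ to check node $(v)_{\factor{i}}$ in group $i$, and since $0\le (v)_{\factor{i}}<\factor{i}$ this is a legitimate choice of one neighbor per group, i.e.\ exactly an instance of the balls-and-bins rule. For the reverse inclusion I would start from an arbitrary ${\cal G}_1\in\bbensemble$, read off for each variable node its neighbor-labels $(r_0,\dots,r_{\stages-1})$, and invoke Theorem~\ref{thm:CRT}: because the $\factor{i}$ are pairwise co-prime, the residue map $q\mapsto\big((q)_{\factor{0}},\dots,(q)_{\factor{\stages-1}}\big)$ is a bijection from $\{0,\dots,\prod_i\factor{i}-1\}$ onto $\prod_i\{0,\dots,\factor{i}-1\}$. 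Hence for every required tuple there is an integer $q$ with $(q)_{\factor{i}}=r_i$ for all $i$, and assigning such a $q$ to each variable node produces a set ${\cal I}$ whose CRT graph is precisely ${\cal G}_1$.

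The one point that needs care, and which I expect to be the main obstacle, is that ${\cal I}$ is a \emph{set} of $\sparsity$ distinct integers, whereas in a balls-and-bins graph several variable nodes may carry identical neighbor-tuples. This is exactly where the factor ${\cal P}$ enters: since $\length={\cal P}\prod_{i=0}^{\stages-1}\factor{i}$, each residue tuple has not one but ${\cal P}$ distinct preimages in $\{0,\dots,\length-1\}$ (the bijection above lifts to a ${\cal P}$-to-one map on the larger range, via the shifts $q,\,q+\prod_i\factor{i},\,\dots$). Thus, provided no tuple is demanded by more than ${\cal P}$ variable nodes, one can assign distinct integers realizing the same graph and keep ${\cal I}$ a genuine set. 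I would make this multiplicity bookkeeping explicit and note that it is automatically met in the regime of interest where ${\cal P}$ is large. With both inclusions established, $\bbensemble=\crtensemble$ as sets of graphs, so the density-evolution analysis developed for the balls-and-bins ensemble transfers verbatim to the FFAST-induced CRT ensemble.
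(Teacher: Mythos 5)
Your proof follows essentially the same route as the paper's: the inclusion $\crtensemble\subseteq\bbensemble$ is immediate, and the reverse inclusion lifts each variable node's tuple of check-node labels $(r_0,\dots,r_{\stages-1})$ to an integer in $\{0,\dots,\length-1\}$ via the CRT bijection, exactly as the paper does when it observes that each tuple has ${\cal P}$ preimages. The one place you diverge is your explicit bookkeeping on the distinctness of ${\cal I}$, and this is a genuine subtlety the paper glosses over; however, your resolution is slightly off. The condition ``no tuple is demanded by more than ${\cal P}$ variable nodes'' is \emph{not} automatically met by taking ${\cal P}$ large: the lemma is invoked chiefly for the very-sparse construction with $\stages=3$, where ${\cal P}=O(1)$ and possibly ${\cal P}=1$, in which case a balls-and-bins graph with two variable nodes sharing the same tuple has no realization by distinct integers at all. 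The paper's own escape hatch is its footnote that a set ${\cal I}$ with repeated elements corresponds to a signal with fewer than $\sparsity$ non-zero DFT coefficients, consistent with the ``at most $\sparsity$-sparse'' convention of Section~\ref{sec:mainresults}; reading ${\cal I}$ as a multiset makes the equality hold verbatim for every ${\cal P}\geq 1$. Alternatively, one can note that under the uniformly random support model the tuple-collision event has vanishing probability (this is precisely the birthday-paradox computation following Lemma~\ref{lem:ballsandbinsexp}, using $\length=O(\sparsity^{\stages})$ and $\stages\geq 3$), so the probabilistic transfer of the density-evolution analysis is unaffected either way. With that correction, your argument is complete and matches the paper's.
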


Note that the modulo rule used to generate a graph in the ensemble $\crtensemble$ is same as the one used in equation~\eqref{eq:aliasing_pattern} of Section~\ref{sec:DFTSparseCodes}. Thus, the FFAST architecture of Fig.~\ref{fig:conceptual}, {\em generates graphs from the CRT ensemble $\crtensemble$}, where the indices ${\cal I}$ of the $\sparsity$ variable nodes correspond to the locations (or support) of the non-zero DFT coefficients\footnote{A set ${\cal I}$ with repeated elements corresponds to a signal with fewer than $\sparsity$ non-zero DFT coefficients.} of the signal $\signal$. Also, under the assumption that the support of the non-zero DFT coefficients of the  signal $\signal$ is uniformly random, the resulting aliasing graph is uniformly random choice from the ensemble $\crtensemble$. 

Next, we analyze the performance of the FFAST peeling-decoder over a uniformly random choice of a graph from the ensemble $\bbensemble$, which along with the Lemma~\ref{lem:equivalence}, provides a lower bound on the success performance of the FFAST decoder over graphs from the ensemble $\crtensemble$. Although the construction and the results described in this section are applicable to any value of $\stages$, we are particularly interested in the case when $\stages = 3$. For $\stages = 3$ and ${\cal P} = O(1)$, we achieve the sub-linear sparsity index $\sindex = 1/3$, while other values of $0< \sindex < 1/3$, are achieved using larger values of ${\cal P}$. 

\subsection{Performance analysis of the FFAST peeling-decoder on graphs from the ensemble $\bbensemble$}\label{sec:analysis_bbensemble}
In this section, we analyze the probability of success of the FFAST peeling-decoder, over a randomly chosen graph from the ensemble $\bbensemble$, after a fixed number of peeling iterations $\ell$. Our analysis follows exactly the arguments in \cite{luby2001improved} and \cite{richardson2001capacity}. Thus, one may be tempted to take the results from \cite{luby2001improved} ``off-the-shelf". However, we choose here to provide a detailed analysis for two reasons. First, our graph  construction in the ensemble $\bbensemble$ is different from that used in \cite{luby2001improved}, which results in some fairly important differences in the analysis, such as the expansion properties of the graphs, thus warranting an independent analysis. Secondly, we want to make the paper more self-contained and complete. 

We now provide a brief outline of the proof elements highlighting the main technical components needed to show that the FFAST peeling-decoder decodes all the non-zero DFT coefficients with high probability.

\begin{itemize}
\item { \em Density evolution:} We analyze the performance of the message-passing algorithm, over a typical graph from the ensemble, for $\ell$ iterations. First, we assume that a local neighborhood of depth $2\ell$ of every edge in a typical graph in the ensemble is tree-like, i.e., cycle-free. Under this assumption, all the messages between variable nodes and the check nodes, in the first $\ell$ rounds of the algorithm, are independent. Using this independence assumption, we derive a recursive equation that represents the expected evolution of the number of single-tons uncovered at each round for this typical graph.

\item { \em Convergence to the cycle-free, case:} Using a Doob martingale as in \cite{luby2001improved}, we show that a random graph from the ensemble, chosen as per nature's choice of the non-zero DFT coefficients, behaves like a ``typical" graph, i.e., $2\ell$-depth neighborhood of most of the edges in the graph is cycle-free, with high probability. This proves that for a random graph in $\bbensemble$, the FFAST peeling-decoder decodes all but an arbitrarily small fraction of the variable nodes with high probability in a constant number of iterations, $\ell$.

\item { \em Completing the decoding using the graph expansion property:} We first show that if a graph is an ``expander" (as will be defined later in Section~\ref{sec:expansion}), and the FFAST peeling-decoder successfully decodes all but a small fraction of the non-zero DFT coefficients, then it decodes all the non-zero DFT coefficients successfully. Next, we show that a random graph from the ensemble $\bbensemble$ is an expander with high probability.
\end{itemize}

\subsubsection{{\bf Density evolution for local tree-like view}}\label{sec:cyclefree}
\begin{figure}[t]
\begin{center}
\includegraphics[scale=.4]{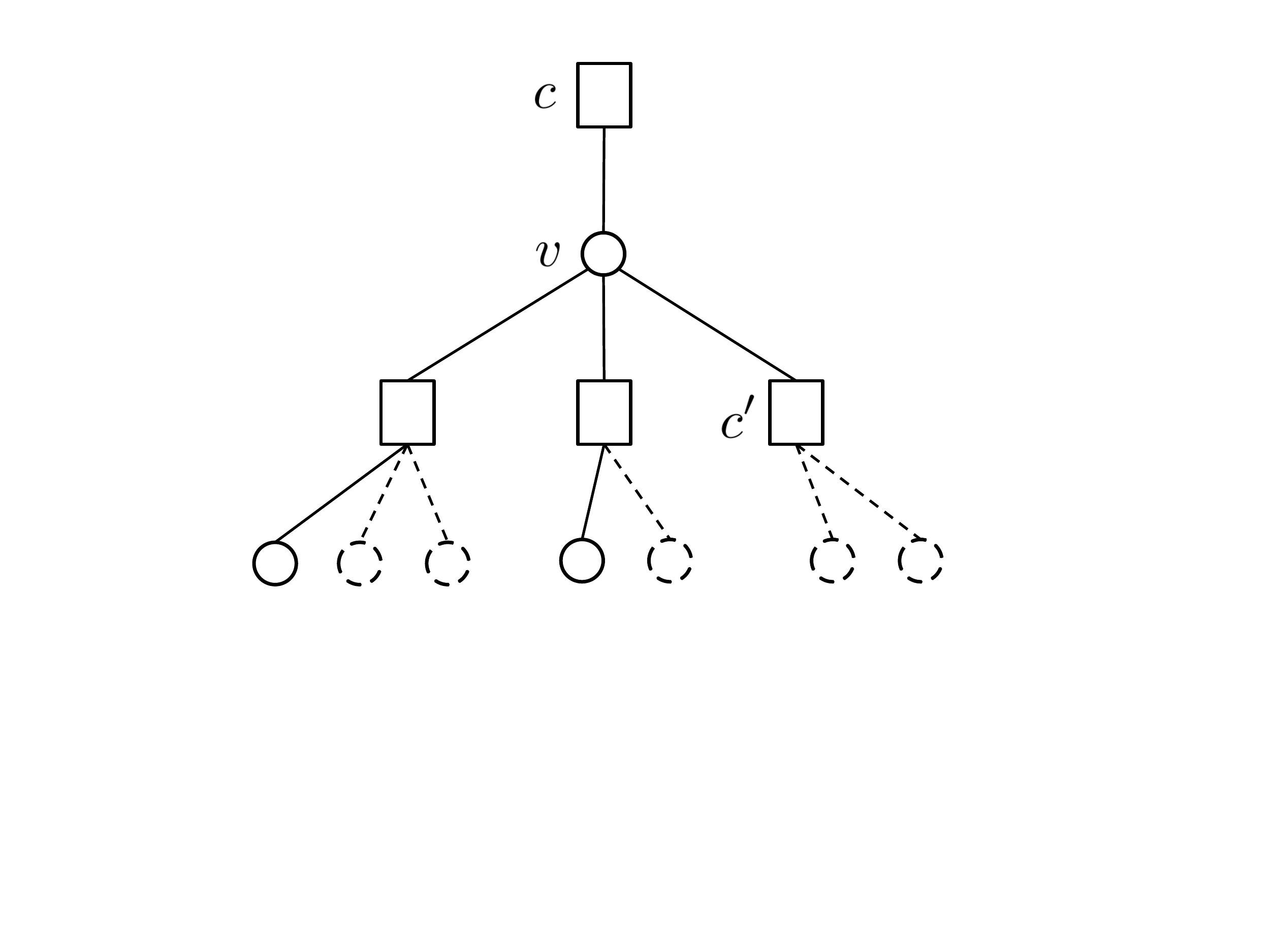}
\caption{Directed neighborhood of depth $2$ of an edge $\vec{e} = (v,c)$. The dashed lines correspond to nodes/edges removed at the end of iteration $j$. The edge between $v$ and $c$ can be potentially removed at iteration $j+1$ as one of the check nodes $c'$ is a single-ton (it has no more variable nodes remaining at the end of iteration $j$).}
\label{fig:localtree}
\end{center}
\end{figure}

In this section we assume that a local neighborhood of depth $2\ell$ of every edge in a graph in the ensemble is tree-like. Next, we define the edge-degree distribution polynomials of the bipartite graphs in the ensemble as $\lambda(\alpha) \triangleq \sum_{i=1}^{\infty} \lambda_i \alpha^{i-1}$ and $\rho(\alpha) \triangleq \sum_{i=1}^{\infty} \rho_i \alpha^{i-1}$, where $\lambda_i$ (resp. $\rho_i$) denotes the probability that an edge of the graph is connected to a left (resp. right) node of degree $i$. Thus for the ensemble $\bbensemble$, constructed using the balls-and-bins procedure, $\lambda(\alpha) = \alpha^{\stages-1}$ by construction. Further, as shown in Appendix~\ref{app:degdistribution}, the edge degree distribution $\rho(\alpha) = \exp(-(1-\alpha)/\binexp)$.

Let $p_j$ denote the probability that an edge is present (or undecoded) after round $j$ of the FFAST peeling-decoder, then $p_0 = 1$. Under the tree-like assumption, the probability $p_{j+1}$, is given as,
\begin{equation}\label{eq:evolution}
p_{j+1} = \lambda(1 - \rho(1-p_j)) \ \ j = 0,1,\hdots,\ell-1.
\end{equation}

Equation~\eqref{eq:evolution} can be understood as follows (also see Fig.~\ref{fig:localtree}): the tree-like assumption implies that, up to iteration $\ell$, messages on different edges are independent. Thus, the total probability, that at iteration $j+1$, a variable node $v$ is {\em decoded} due to a particular check node is given by $\rho(1-p_j) = \sum_{i=1}^{\infty} \rho_i (1-p_j)^{i-1}$ and similarly the total probability that none of the neighboring check nodes decode the variable node $v$ is $p_{j+1} = \lambda(1-\rho(1-p_i))$. Specializing equation~\eqref{eq:evolution} for the edge degree distributions of $\bbensemble$ we get,
\begin{align}\label{eq:evolution2}
   p_{j+1} = \left(1 - e^{-\frac{p_j}{\binexp}}\right)^{\stages-1}, \ \forall \ j = 0,1,\hdots,\ell-1
\end{align}
where $p_0 =1$. The evolution process of \eqref{eq:evolution2} asymptotically (in the number of iterations $\ell$) converges to $0$, for an appropriate choice of the parameter $\binexp$, e.g., see Table~\ref{tab:convergence}.

\begin{table}[h]
\begin{center}
\begin{tabular}{|c|c|c|c|c|c|c|c|c|c|c|c|}
  \hline
  \stages & 2 & 3& 4 & 5 & 6 &7& 8 & 9\\
  \hline
  $\binexp$ &1.0000 &   0.4073   & 0.3237 &   0.2850  &  0.2616   & 0.2456  &  0.2336  &  0.2244\\
  \hline
  $\stages\binexp$ & 2.0000 &   1.2219 &   1.2948 &   1.4250  &  1.5696  &  1.7192  &  1.8688   & 2.0196\\
  \hline
\end{tabular}
\end{center}
\caption{Minimum value of $\binexp$, required for the density evolution of \eqref{eq:evolution2} to converge asymptotically. The threshold value of $\binexp$ depends on the number of stages $\stages$.}\label{tab:convergence}
\end{table}

\subsubsection{{\bf Convergence to cycle-free case}}\label{sec:convergencetocyclefreecase} In the following Lemma~\ref{lem:concentration} we show; a) the expected behavior over all the graphs in the ensemble $\bbensemble$ converges to that of a cycle-free case, and b) with exponentially high probability, the proportion of the edges that are not decoded after $\ell$ iterations of the FFAST peeling-decoder is tightly concentrated around $p_{\ell}$, as defined in \eqref{eq:evolution2}.

\begin{lemma}[Convergence to Cycle-free case]\label{lem:concentration}
Over the probability space of all graphs $\bbensemble$, let $Z$ be the total number of edges that are not decoded after $\ell$ (an arbitrarily large but fixed) iterations of the FFAST peeling-decoder over a randomly chosen graph. Further, let $p_{\ell}$ be as given in the recursion~\eqref{eq:evolution2}. Then there exist constants $\beta$ and $\gamma$ such that for any $\epsilon_1 > 0$ and sufficiently large $\sparsity$ we have
\begin{eqnarray}\label{eq:convergence}
(a) &&\expectation[Z] < 2kd p_{\ell}.\\
(b) &&\ Pr\left(| Z - \expectation[Z]| > \sparsity \stages\epsilon_1 \right) < e^{-\beta\epsilon_1^2\sparsity^{1/(4\ell+1)}}.
\end{eqnarray}
\begin{proof}
Please see Appendix~\ref{app:proofconcentration}.
\end{proof}
\end{lemma}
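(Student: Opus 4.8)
The plan is to establish the two parts of Lemma~\ref{lem:concentration} separately, mirroring the martingale-based argument of \cite{luby2001improved} and \cite{richardson2001capacity} but adapted to the balls-and-bins ensemble $\bbensemble$. For part (a), the bound $\expectation[Z] < 2\sparsity\stages p_{\ell}$ should follow from the density evolution recursion~\eqref{eq:evolution2}. First I would observe that the total number of edges in any graph from $\bbensemble$ is exactly $\sparsity\stages$, since the graph is $\stages$-left-regular. The quantity $p_{\ell}$ from~\eqref{eq:evolution2} is the probability that a \emph{single} fixed edge survives $\ell$ rounds \emph{under the idealized tree-like assumption}. The key step is to relate the true expected survival probability of an edge to the idealized $p_{\ell}$: I would argue that the tree-like neighborhood assumption holds for all but a vanishing fraction of edges, so that the expected number of surviving edges is at most $\sparsity\stages(p_{\ell} + o(1))$, and the factor $2$ in the bound absorbs this $o(1)$ slack for sufficiently large $\sparsity$. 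This requires a bound on the probability that the depth-$2\ell$ directed neighborhood ${\cal N}^{2\ell}_{\vec{e}}$ of a typical edge fails to be tree-like.

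For part (b), the concentration inequality, the plan is to construct a Doob martingale that exposes the randomness of the graph one variable node (equivalently, one ball-and-bins assignment) at a time. Label the $\sparsity$ variable nodes $v_1,\hdots,v_{\sparsity}$ and let $Y_t = \expectation[Z \mid v_1,\hdots,v_t]$ be the conditional expectation of the number of undecoded edges after revealing the neighbor choices of the first $t$ variable nodes. Then $Y_0 = \expectation[Z]$ and $Y_{\sparsity} = Z$, and $(Y_t)$ is a martingale. The crucial step is to establish a bounded-difference property: changing the $\stages$ check-node neighbors of a single variable node can only alter the decoding outcome of edges whose depth-$2\ell$ neighborhood includes that node, and the number of such edges is bounded by the maximum size of a depth-$2\ell$ neighborhood. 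Since the graph has left-degree $\stages$ and the right-degrees are governed by the Poisson-like distribution $\rho(\alpha) = \exp(-(1-\alpha)/\binexp)$, the neighborhood size is bounded by some quantity of order $(\text{const})^{2\ell}$, giving a martingale difference bound $|Y_{t} - Y_{t-1}| \leq c_1$ for a constant $c_1$ depending on $\stages$, $\binexp$, and $\ell$. Applying the Azuma--Hoeffding inequality with $\sparsity$ martingale steps then yields a tail bound of the form $\exp(-\sparsity\stages^2\epsilon_1^2/(2\sparsity c_1^2))$, and choosing the exposure granularity and neighborhood truncation carefully recovers the stated exponent $\sparsity^{1/(4\ell+1)}$.

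I expect the main obstacle to be the bounded-difference step in part (b), specifically controlling the worst-case influence of a single variable node on $Z$ while simultaneously obtaining the unusual exponent $\sparsity^{1/(4\ell+1)}$ rather than the naive $\sparsity$. The subtlety is that a crude neighborhood-size bound is independent of $\sparsity$ and would give an exponent linear in $\sparsity$, so the fractional exponent must come from a more delicate argument: one truncates the neighborhood exposure at a size growing slowly with $\sparsity$ (so that the martingale differences are controlled) while separately bounding the probability that any neighborhood exceeds this truncated size. Balancing the truncation threshold against the failure probability of the tree-like assumption is what produces the $1/(4\ell+1)$ in the exponent, and this interplay—rather than the martingale machinery itself—is the technically delicate part. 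The expansion-based completion of decoding (bounding the residual $o(1)$ fraction all the way to zero) is deferred to a later lemma, so here I would only need the concentration of $Z$ around $\sparsity\stages p_{\ell}$, which the Azuma argument delivers.
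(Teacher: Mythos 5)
Your proposal matches the paper's proof in all essentials: part (a) uses the same symmetry-plus-conditioning decomposition $\expectation[Z] = \sparsity\stages\,\expectation[Z_1]$ with $\expectation[Z_1] \leq p_{\ell} + \Pr\left(\mathcal{N}^{2\ell}_{\vec{e}_1} \text{ not tree-like}\right)$, the latter being $O((\log \sparsity)^{2\ell}/\sparsity)$ and absorbed into the factor $2$; and part (b) is the same Doob-martingale/Azuma argument in which the unbounded Poisson right-degrees are handled exactly as you describe in your closing paragraph, by conditioning on the event that every check node has degree $O(\sparsity^{1/(2\ell+0.5)})$ (failure probability $O(\sparsity e^{-\beta_1 \sparsity^{1/(2\ell+0.5)}})$), which bounds each martingale difference by $O(\sparsity^{\ell/(2\ell+0.5)})$ and yields the exponent $\sparsity^{1/(4\ell+1)}$. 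The only cosmetic difference is that you expose one variable node at a time ($\sparsity$ steps) where the paper exposes one edge at a time ($\sparsity\stages$ steps), and your intermediate assertion of a constant difference bound $c_1$ is not actually available (the Poisson degrees are unbounded), but your final paragraph correctly retracts it in favor of the truncation trade-off, which is precisely the paper's mechanism.
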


\subsubsection{{\bf Successful Decoding using Expansion}}\label{sec:expansion} In the previous section we have shown that with high probability, the FFAST peeling-decoder decodes all but an arbitrarily small fraction of variable nodes. In this section, we show how to complete the decoding if the graph is a ``good-expander". Our problem requires the following definition of an ``expander-graph", which is somewhat different from conventional notions of an expander-graph in literature, e.g., {\em edge expander, vertex expander or spectral expander} graphs.

\begin{definition} [Expander graph]\label{def:expander} A $\stages$-left regular degree bipartite graph from ensemble $\bbensemble$, is called an $(\alpha,\beta,\stages)$ expander, if for all subsets $S$, of variable nodes, of size at most $\alpha \sparsity$, there exists a right neighborhood of $S$, i.e., $N_i(S)$, that satisfies $|N_i(S)| > \beta |S|$, for some $i=0,\hdots,\stages-1$.
\end{definition}

In the following lemma, we show that if a graph is an expander, and if the FFAST peeling-decoder successfully decodes all but a small fraction of the non-zero DFT coefficients, then it decodes all the non-zero DFT coefficients successfully. 
\begin{lemma}\label{lem:expsucceeds} Consider a graph from the ensemble $\bbensemble$, with $|{\setfactors}|=\stages$, that is an $(\alpha,1/2,\stages)$ expander for some $\alpha >0$. If the FFAST peeling-decoder over this graph succeeds in decoding all but at most $\alpha \sparsity$ variable nodes, then it decodes all the variable nodes.
\end{lemma}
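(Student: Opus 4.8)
The plan is to argue by contradiction, using the standard expansion-versus-stopping-set idea adapted to our left-regular, multi-stage construction. Suppose the FFAST peeling-decoder halts having decoded all but a nonempty set $S$ of variable nodes; by hypothesis $1 \le |S| \le \alpha\sparsity$. I would first characterize precisely what it means for the decoder to be stuck. The decoder removes a variable node exactly when some incident check node is a single-ton, i.e., has residual degree one in the graph restricted to the currently undecoded variable nodes. Hence, if the decoder halts with $S$ still undecoded, then \emph{no} check node adjacent to $S$ can be a single-ton. Since every check node adjacent to $S$ has residual degree at least one, each such node must in fact have residual degree at least two, meaning it is connected to at least two variable nodes of $S$.

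Next I would count edges stage by stage, exploiting the left-regular structure of graphs in $\bbensemble$. Because each variable node has exactly one neighbor in each of the $\stages$ check-node subsets, the number of edges from $S$ into stage $i$ is exactly $|S|$, for every $i = 0,\ldots,\stages-1$. Let $N_i(S)$ denote the set of stage-$i$ check nodes adjacent to $S$. By the stuck characterization, each node of $N_i(S)$ absorbs at least two of these $|S|$ edges, so $2\,|N_i(S)| \le |S|$, giving $|N_i(S)| \le |S|/2$ for every $i$.

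Finally I would invoke the expander hypothesis. Since $|S| \le \alpha\sparsity$ and the graph is an $(\alpha,1/2,\stages)$ expander, Definition~\ref{def:expander} guarantees some index $i$ with $|N_i(S)| > \tfrac{1}{2}|S|$. This directly contradicts the bound $|N_i(S)| \le |S|/2$ established for all $i$. Therefore no nonempty stuck set $S$ of size at most $\alpha\sparsity$ can exist, and the peeling-decoder must continue until every variable node is decoded.

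The argument is short, and the only delicate point—hence the main obstacle—is the precise characterization of the halting configuration: one must argue that a check node adjacent to the residual set $S$ cannot have residual degree one (else the decoder would still make progress) and cannot have residual degree zero (else it would not be adjacent to $S$), which forces residual degree at least two. Once this is pinned down, the per-stage edge count (using left-regularity) and the contradiction with the expansion property are routine.
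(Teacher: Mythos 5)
Your proposal is correct and takes essentially the same route as the paper's proof: a stuck residual set $S$ forces every neighboring check node to be a multi-ton, and since left-regularity puts exactly $|S|$ edges from $S$ into each stage, this yields $|N_i(S)| \leq |S|/2$ for all $i$, contradicting the $(\alpha,1/2,\stages)$ expansion property. The only cosmetic difference is that the paper applies the edge count to the single stage with the largest neighborhood (chosen WLOG), whereas you bound all stages uniformly before invoking expansion --- the two are logically equivalent.
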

\begin{proof} 
See Appendix~\ref{app:expsucceeds} 
\end{proof}

In Lemma~\ref{lem:ballsandbinsexp}, we show that most of the graphs in the ensemble $\bbensemble$ are expanders.
\begin{lemma}\label{lem:ballsandbinsexp} Consider a random graph from the ensemble $\bbensemble$, where $\stages \geq 3$. Then, all the subsets $S$ of the variable nodes, of the graph, satisfy $\max\{|N_i(S)|\}_{i=0}^{\stages-1} > |S|/2$,
\begin{enumerate}
\item[a)]  with probability at least $1 - e^{-\epsilon \sparsity\log(\nbins/\sparsity)}$, for sets of size $|S| = \alpha \sparsity$, for small enough $\alpha > 0$ and some $\epsilon > 0$.
\item[b)]  with probability at least $1 - O(1/\nbins)$, for sets of size $|S| = o(\sparsity)$.
\end{enumerate}
\end{lemma}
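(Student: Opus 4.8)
The plan is to establish the expansion property by a first-moment (union-bound) argument over all subsets $S$ that fail to expand, crucially exploiting that in the balls-and-bins ensemble $\bbensemble$ every variable node makes its $\stages$ edge-choices independently, one uniformly at random in each of the $\stages$ blocks of check nodes. Write $s = |S|$, and recall $\factor{i} = \binsize + O(1) = \binexp\sparsity + O(1)$ for each $i$ and $\nbins = \sum_i \factor{i} = \Theta(\sparsity)$.

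First I would fix a set $S$ with $|S| = s$ and study the bad event $B_S = \{\,|N_i(S)| \le s/2 \ \text{for every}\ i\,\}$, which is exactly the event that $S$ violates $\max_i |N_i(S)| > |S|/2$. Since the edges of the variable nodes into block $i$ are governed by independent uniform choices, and these choices are independent across the $\stages$ blocks, one has $\Pr[B_S] = \prod_{i=0}^{\stages-1}\Pr[\,|N_i(S)| \le s/2\,]$. Within one block, $N_i(S)$ is the set of occupied bins when $s$ balls are thrown independently and uniformly into $\factor{i}$ bins, so the elementary occupancy bound gives
\begin{equation*}
\Pr[\,|N_i(S)| \le s/2\,] \;\le\; \binom{\factor{i}}{s/2}\left(\frac{s}{2\factor{i}}\right)^{s} \;\le\; \left(\frac{es}{2\factor{i}}\right)^{s/2}.
\end{equation*}
Multiplying over the $\stages$ blocks and using $\factor{i} \approx \binexp\sparsity$ yields $\Pr[B_S] \le \left(es/(2\binexp\sparsity)\right)^{s\stages/2}$.

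Next I would union-bound over the $\binom{\sparsity}{s}$ sets of size $s$ and write the per-size failure probability, after the estimate $\binom{\sparsity}{s} \le (e\sparsity/s)^s$, in the compact form $g(s) \le h(s)^s$ with $h(s) = e\,(e/(2\binexp))^{\stages/2}\,(s/\sparsity)^{\stages/2 - 1}$. The decisive feature is the exponent $\stages/2 - 1$: for $\stages \ge 3$ it is strictly positive, so $h(s)$ increases in $s$ and tends to $0$ as $s/\sparsity \to 0$. This is exactly where the hypothesis $\stages \ge 3$ enters, and it is the crux of the whole argument — when $\stages = 2$ the exponent vanishes, $h$ no longer decays, and the bound fails.

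Finally I would read off the two regimes. For part (b), $s = o(\sparsity)$: a direct comparison of consecutive terms shows $g(s)$ is decreasing in $s$ over the sublinear range (its minimum over $[2,\alpha\sparsity]$ occurring at a size $\Theta(\sparsity)$), so $\sum_s g(s)$ there is dominated by the smallest nontrivial term $s = 2$ (the case $s = 1$ expands trivially, as $|N_i| = 1 > 1/2$); evaluating it gives $g(2) = \Theta(\sparsity^{\,2-\stages}) = O(1/\sparsity) = O(1/\nbins)$ for $\stages \ge 3$, with larger sizes contributing strictly lower-order terms. For part (a), $s = \alpha\sparsity$: choosing $\alpha$ small enough that $\delta := h(\alpha\sparsity) = e\,(e/(2\binexp))^{\stages/2}\,\alpha^{\stages/2-1} < 1$ gives $g(\alpha\sparsity) \le \delta^{\alpha\sparsity} = e^{-\Theta(\sparsity)}$, and expressing the exponent's constant through $\binexp$ (hence through $\nbins/\sparsity = \Theta(\binexp)$) recovers the stated form $e^{-\epsilon\sparsity\log(\nbins/\sparsity)}$. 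I expect the main obstacle to lie purely in this last bookkeeping: confirming that the $s=2$ term is genuinely dominant on the sublinear range so that part (b) comes out as exactly $O(1/\nbins)$, and pinning down the threshold on $\alpha$ that keeps the base $\delta$ strictly below $1$ while leaving the expansion factor $1/2$ intact.
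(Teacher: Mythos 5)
Your proof is correct and matches the paper's own argument essentially step for step: the same per-block occupancy bound $\binom{\factor{i}}{s/2}\left(s/(2\factor{i})\right)^{s}$ multiplied across the $\stages$ independent blocks, the same union bound over $\binom{\sparsity}{s}$ sets of variable nodes, and the same identification of the exponent $\stages/2-1>0$ (i.e., $\stages\geq 3$) as the crux, specialized to the two regimes exactly as in the paper. Your explicit summation over sublinear sizes with the $s=2$ term dominating is slightly more careful bookkeeping for part (b) than the paper, which simply evaluates its bound $O\left((|S|/\nbins)^{|S|/2}\right)$ at $|S|=o(\sparsity)$, but it is the same approach.
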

\begin{proof} 
See Appendix~\ref{app:expander_ballsandbins}
\end{proof}
The condition $\stages\geq 3$ is a necessary condition for part $(b)$ of Lemma~\ref{lem:ballsandbinsexp}. This can be seen as follows. Consider a random graph from the ensemble $\bbensemble$, where $|{\setfactors}| = \stages$. If any two variable nodes in the graph have the same set of $\stages$ neighboring check nodes, then the expander condition, for the set $S$ consisting of these two variable nodes, will not be satisfied. In a bi-partite graph from the ensemble $\bbensemble$, there are a total of $O(\sparsity^\stages)$ distinct sets of $\stages$ check nodes. Each of the $\sparsity$ variable nodes chooses a set of $\stages$ check nodes, uniformly at random and with replacement, from the total of $O(\sparsity^\stages)$ sets. If we draw $\sparsity$ integers uniformly at random between $0$ to $\length-1 $, the probability $Pr(\sparsity;\length)$ that at least two numbers are the same is given by,
\begin{equation}
Pr(\sparsity;\length) \approx 1 - e^{-\sparsity^2/2n}.
\end{equation}
This is also known as the {\em birthday paradox} or the {\em birthday problem} in literature \cite{mitzenmacher2005probability}. For a graph from the ensemble $\bbensemble$, we have $\length=O(\sparsity^\stages)$. Hence, if the number of stages $\stages\leq 2$, there is a constant probability that there exists a pair of variable nodes that share the same neighboring check nodes, in both the stages, thus violating the expander condition.

\begin{theorem}\label{thm:ballsandbinsmain} The FFAST peeling-decoder over a random graph from the ensemble $\bbensemble$, where $\stages\geq 3$ and $\binsize = \binexp \sparsity$:
\begin{itemize}
\item[a)] successfully uncovers all the variable nodes with probability at least $1 - O(1/\nbins)$;
\item[b)] successfully uncovers all but a vanishingly small fraction, i.e., $o(\sparsity)$, of the variable nodes with probability at least $1 - e^{-\beta\epsilon_1^2\sparsity^{1/(4\ell+1)}}$ for some constants $\beta, \epsilon_1 > 0$, and $\ell > 0$.
\end{itemize}
for an appropriate choice of the constant $\binexp$ as per Table~\ref{tab:convergence}.
\begin{proof} Consider a random graph from the ensemble $\bbensemble$. Let $Z$ be the number of the edges not decoded by the FFAST peeling-decoder in $\ell$ (large but fixed constant) iterations after processing this graph. Then, from recursion \eqref{eq:evolution2} and Lemma~\ref{lem:concentration}, for an appropriate choice of the constant $\binexp$ (as per Table~\ref{tab:convergence}), $Z \leq \alpha \sparsity$, for an arbitrarily small constant $\alpha > 0$, with probability at least $1 - e^{-\beta\epsilon_1^2\sparsity^{1/(4\ell+1)}}$. The result then follows from Lemmas \ref{lem:ballsandbinsexp} and \ref{lem:expsucceeds}.
\end{proof}
\end{theorem}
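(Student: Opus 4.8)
The plan is to combine the three ingredients already assembled in the section into a single success-probability bound. The theorem asserts two things about the FFAST peeling-decoder run on a random graph from $\bbensemble$ with $\stages \geq 3$ and $\binsize = \binexp\sparsity$: part (b) that all but $o(\sparsity)$ variable nodes are uncovered with probability $1 - e^{-\beta\epsilon_1^2\sparsity^{1/(4\ell+1)}}$, and part (a) that \emph{all} variable nodes are uncovered with probability $1 - O(1/\nbins)$. The strategy is to prove part (b) directly from density evolution plus concentration, and then bootstrap part (a) using the expansion argument.

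First I would establish part (b). From the density evolution recursion \eqref{eq:evolution2}, for the value of $\binexp$ chosen according to Table~\ref{tab:convergence} (which depends on $\stages$), the sequence $p_j$ converges to $0$ as $j \to \infty$. Hence for any arbitrarily small target $\alpha > 0$ there is a finite number of iterations $\ell$ such that $p_\ell < \alpha/(4\stages)$, say. By Lemma~\ref{lem:concentration}(a), the expected number of undecoded edges after $\ell$ iterations satisfies $\expectation[Z] < 2\sparsity\stages p_\ell$, and by Lemma~\ref{lem:concentration}(b) the actual value $Z$ is tightly concentrated around $\expectation[Z]$ with a failure probability at most $e^{-\beta\epsilon_1^2\sparsity^{1/(4\ell+1)}}$. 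Combining these two statements, $Z \leq \alpha\sparsity$ with probability at least $1 - e^{-\beta\epsilon_1^2\sparsity^{1/(4\ell+1)}}$; since each undecoded variable node contributes $\stages$ edges, at most $\alpha\sparsity$ variable nodes remain undecoded, giving the $o(\sparsity)$ fraction claim of part (b).

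Next I would prove part (a) by chaining the expansion lemmas onto the conclusion of part (b). Conditioned on the high-probability event from part (b) that at most $\alpha\sparsity$ variable nodes are left undecoded, Lemma~\ref{lem:expsucceeds} says that if the graph is an $(\alpha,1/2,\stages)$ expander then the peeling-decoder in fact finishes and decodes \emph{all} remaining variable nodes. So it remains to show that a random graph from $\bbensemble$ is such an expander with high probability, which is precisely Lemma~\ref{lem:ballsandbinsexp}: part (a) of that lemma handles sets of size $\alpha\sparsity$ and part (b) handles the smaller sets of size $o(\sparsity)$, together giving $\max_i |N_i(S)| > |S|/2$ for all sets up to size $\alpha\sparsity$ with probability at least $1 - O(1/\nbins)$. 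A union bound over the (exponentially small) concentration failure of part (b) and the $O(1/\nbins)$ expansion failure then yields overall success probability at least $1 - O(1/\nbins)$, establishing part (a).

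The main obstacle is not any single deep estimate but the careful bookkeeping of which event fails with which probability, and ensuring the two regimes of set sizes in Lemma~\ref{lem:ballsandbinsexp} genuinely tile the full range $|S| \leq \alpha\sparsity$ so that the expander hypothesis of Lemma~\ref{lem:expsucceeds} is met for \emph{every} relevant subset simultaneously. One must verify that the $\alpha$ delivered by the density-evolution/concentration step (the residual fraction) is no larger than the $\alpha$ for which the expansion guarantee of Lemma~\ref{lem:ballsandbinsexp} holds; choosing $\binexp$ and $\ell$ to make $p_\ell$ small enough reconciles these. Since all the heavy lifting is quarantined inside the three lemmas, the proof itself is short: it is essentially the assertion ``recursion \eqref{eq:evolution2} plus Lemma~\ref{lem:concentration} gives $Z \leq \alpha\sparsity$ w.h.p., and then Lemmas~\ref{lem:ballsandbinsexp} and \ref{lem:expsucceeds} finish the job,'' which is exactly the structure the author's one-paragraph proof follows.
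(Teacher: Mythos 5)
Your proposal is correct and follows essentially the same route as the paper's own proof: density evolution \eqref{eq:evolution2} together with Lemma~\ref{lem:concentration} yields $Z \leq \alpha\sparsity$ with the stated exponential concentration, and Lemmas~\ref{lem:ballsandbinsexp} and~\ref{lem:expsucceeds} then upgrade this to full decoding with probability $1 - O(1/\nbins)$. Your added bookkeeping---checking that the residual fraction $\alpha$ from the concentration step lies within the range for which the expansion guarantee holds, and noting the union bound over the two failure events---is exactly the implicit content of the paper's one-paragraph argument.
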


\begin{corollary}\label{cor:crtverysparse} The FFAST peeling-decoder over a random graph from the ensemble $\crtensemble$, where $\stages\geq 3$ and $\binsize = \binexp \sparsity$:
\begin{itemize}
\item[a)] successfully uncovers all the variable nodes with probability at least $1 - O(1/\nbins)$;
\item[b)] successfully uncovers all but a vanishingly small fraction, i.e., $o(\sparsity)$, of the variable nodes with probability at least $1 - e^{-\beta\epsilon_1^2\sparsity^{1/(4\ell+1)}}$ for some constants $\beta, \epsilon_1 > 0$, and $\ell > 0$.
\end{itemize}
for an appropriate choice of the constant $\binexp$ as per Table~\ref{tab:convergence}.
\begin{proof} Follows from equivalence of ensembles Lemma~\ref{lem:equivalence} and Theorem~\ref{thm:ballsandbinsmain}.
\end{proof}
\end{corollary}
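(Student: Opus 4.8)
The plan is to obtain Corollary~\ref{cor:crtverysparse} as an immediate transfer of Theorem~\ref{thm:ballsandbinsmain} across the ensemble equivalence already established in Lemma~\ref{lem:equivalence}. The key observation is that the two stated conclusions---part (a), successful decoding of \emph{all} variable nodes with probability at least $1 - O(1/\nbins)$, and part (b), decoding of all but an $o(\sparsity)$ fraction with probability at least $1 - e^{-\beta\epsilon_1^2\sparsity^{1/(4\ell+1)}}$---are statements about the distribution of the number of undecoded edges $Z$ produced by running the FFAST peeling-decoder. Since Lemma~\ref{lem:equivalence} asserts that $\bbensemble$ and $\crtensemble$ are \emph{identical} as ensembles of bipartite graphs, any probabilistic guarantee that holds for a uniformly random graph from $\bbensemble$ holds verbatim for a uniformly random graph from $\crtensemble$.

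First I would make explicit that the FFAST peeling-decoder's behavior is a deterministic function of the underlying bipartite graph (together with the continuously-distributed amplitudes, which only affect the almost-sure correctness of the ratio-test, already handled in Section~\ref{sec:alias codes}). Hence the random variable $Z$ counting undecoded edges depends on the realized graph alone. Second, I would invoke Lemma~\ref{lem:equivalence} to identify the probability space: a uniformly random draw from $\crtensemble$ induces exactly the same distribution over graphs as a uniformly random draw from $\bbensemble$, because the two ensembles are the same collection of graphs and (under the uniformly-random support assumption noted just before the corollary) the induced measures coincide. Third, I would simply apply Theorem~\ref{thm:ballsandbinsmain}: the density-evolution bound of \eqref{eq:evolution2}, the concentration of Lemma~\ref{lem:concentration}, and the expansion guarantees of Lemma~\ref{lem:ballsandbinsexp} all pass through unchanged, yielding both claimed probabilities for $\crtensemble$ with the same constant $\binexp$ chosen according to Table~\ref{tab:convergence}.

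There is essentially no technical obstacle here, and that is the point of stating the result as a corollary rather than a theorem---the hard work was done in establishing the equivalence of ensembles and in the three-part analysis (density evolution, martingale concentration, expansion) underlying Theorem~\ref{thm:ballsandbinsmain}. The only subtlety worth a sentence of care is the measure-theoretic one: Lemma~\ref{lem:equivalence} equates the ensembles as \emph{sets} of graphs, so one should confirm that the natural uniform measure on $\crtensemble$ (induced by a uniformly random support set ${\cal I}$) matches the uniform measure on $\bbensemble$ (induced by the balls-and-bins placement). This follows because each balls-and-bins outcome corresponds to exactly ${\cal P}$ residue-tuples and hence to a consistent number of preimages under the CRT map, so the two constructions place equal probability on each distinct graph. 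With this remark the proof reduces to the single line already given in the excerpt: the result follows from Lemma~\ref{lem:equivalence} and Theorem~\ref{thm:ballsandbinsmain}.
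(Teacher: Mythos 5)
Your proposal is correct and takes essentially the same route as the paper, whose proof is exactly the one-line invocation of Lemma~\ref{lem:equivalence} together with Theorem~\ref{thm:ballsandbinsmain}. Your additional remark that the measures (not just the graph sets) coincide---since each residue tuple has exactly ${\cal P}$ CRT preimages in $\{0,\dots,\length-1\}$, so a uniformly random support induces precisely the balls-and-bins distribution---makes explicit a point the paper only states informally in the discussion following Lemma~\ref{lem:equivalence}, but it does not constitute a different argument.
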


\subsection{The FFAST front-end architecture parameters for achieving the sparsity index $0 < \sindex \leq 1/3$}\label{sec:verysparseconstruction} 
Consider a set $\setfactors = \{\factor{0},\factor{1},\factor{2}\}$ of a pairwise co-prime integers. The integers $\factor{i}$'s are such that they are approximately equal, i.e, $\factor{i} = \binsize + O(1)$, for $i=0,1,2$, where $\binsize = 0.4073\sparsity$ (see Table~\ref{tab:convergence}) is an asymptotically large number. Set the signal length $\length = {\cal P}\prod_{i=0}^{2}\factor{i}$, where, ${\cal P} = \binsize^a$, thus achieving the sparsity index $\sindex = 1/(3+a)$, for any positive constant $a > 0$. We construct a FFAST sub-sampling front-end with $\stages = 3$ stages, where each stage further has $2$ delay-chains (see Fig.~\ref{fig:conceptual}). The sub-sampling period used in the both the delay-chains of the $i^{th}$ stage is $\length/\factor{i}$. As a result, the number of samples at the output of each delay-chain of the $i^{th}$ stage is $\factor{i}$ for $i = 0,1,2$, i.e., the total number of samples $\samples$ used by the FFAST algorithm is $\samples = 2(\factor{0}+\factor{1}+\factor{2}) < 2.45\sparsity$.
 
\section{FFAST Construction and Performance analysis for the {\em less-sparse} ($ \sparsity=O(\length^{\sindex}), \  1/3 < \sindex < 1$) regime}\label{sec:lesssparse}
In the FFAST front-end architecture for the less-sparse regime, the integers in the set ${\setfactors} = \{\factor{0},\hdots,\factor{\stages-1}\}$, unlike for the very-sparse regime of Section~\ref{sec:verysparse}, are not pairwise co-prime. Instead, for the less-sparse regime ($ \sparsity =O(\length^{\sindex}), \  1/3 < \sindex < 1$) the relation between the integers $\factor{i}$'s is bit more involved. Hence, for ease of exposition, we adhere to the following approach:

\begin{itemize}
\item First, we describe the FFAST front-end construction and analyze performance of the FFAST decoding algorithm for a simple case of less-sparse regime of $\sindex = 2/3$. 
\item Then, in Section~\ref{sec:foranyd}, we provide a brief sketch of how to generalize the FFAST architecture of $\sindex = 2/3$, to $\sindex = 1-1/\stages$, for integer values of $\stages \geq 3$. This covers the range of values of $\sindex = 2/3, 3/4, \hdots$ etc. 
\item In Section~\ref{sec:intermediatedelta}, we show how to achieve the intermediate values of $1/3 < \sindex < 1$.
\item Finally, in Section~\ref{sec:lesssparseconstruction}, we use all the techniques learned in the previous sections to provide an explicit choice of parameters for the FFAST front-end architecture that achieves all the sparsity indices in the range $1/3 < \sindex < 1$. 
\end{itemize}
\begin{figure}[t]
\begin{center}
\includegraphics[scale=.5]{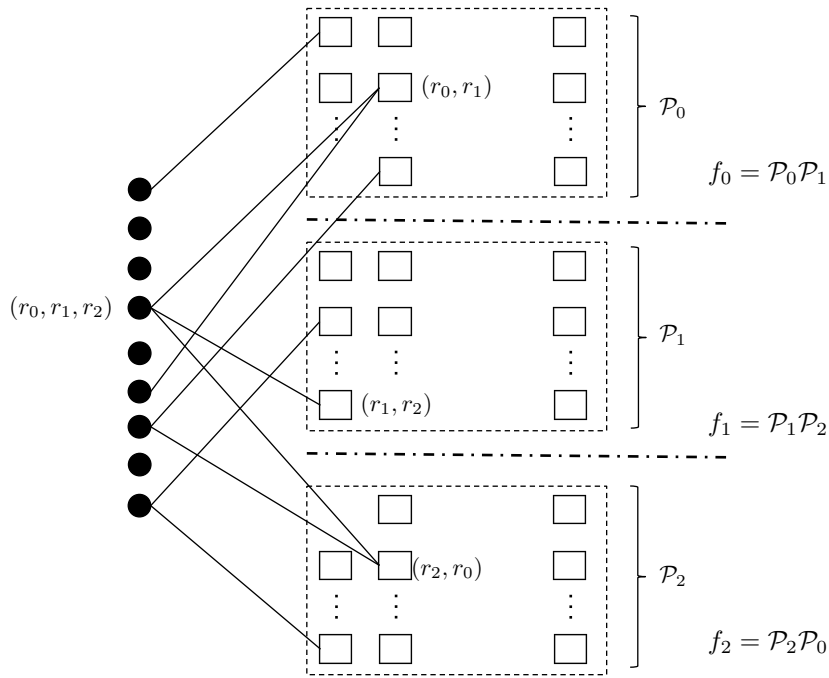}
\caption{A bi-partite graph with $\sparsity$ variable nodes and $\nbins = \sum_{i=0}^{2} \factor{i}$ check nodes, constructed using a balls-and-bins model. The check nodes in each of the $3$ sets are arranged in a matrix format, e.g., the $\factor{0}$ check nodes in the set $0$ are arranged in $\primef{0}$ rows and $\primef{1}$ columns. A check node in row $r_0$ and column $r_1$ in the set $0$, is indexed by a pair $(r_0,r_1)$ and so on and so forth for all the other check nodes. Each variable node chooses a triplet $(r_0,r_1,r_2)$, where $r_i$ is between $0$ and $\primef{i}-1$ uniformly at random. A $3$-regular degree bi-partite graph is then constructed by connecting a variable node with a triplet $(r_0, r_1, r_2)$ to check nodes $(r_0,r_1), (r_1,r_2)$ and $(r_2,r_0)$ in the three sets of check nodes respectively.}
\label{fig:cyclicshift}
\end{center}
\vspace*{-0.4in}
\end{figure}
\subsection{Less-sparse regime of $\sindex = 2/3$}\label{sec:twothirdrate}
\subsubsection{\bf FFAST front-end construction}\label{sec:FFASTlesssparse} Consider $\length = \prod_{i=0}^{2}\primef{i}$, where the set $\{\primef{i}\}_{i=0}^2$ consists of approximately equal sized co-prime integers with each $\primef{i} = \binsize + O(1)$ and $\binsize = \sqrt{\binexp\sparsity}$. This results in $\sindex =2/3$. Choose the integers $\{\factor{i}\}_{i=0}^2$, such that $\{\factor{0},\factor{1},\factor{2}\} = \{\primef{0}\primef{1}, \primef{1}\primef{2}, \primef{2}\primef{0}\}$. Then, we construct a $\stages = 3$ stage FFAST front-end architecture, where stage $i$ has two delay-chains each with a sub-sampling period of $\length/\factor{i}$ and $\factor{i}$ number of output samples.  

\subsubsection{\bf Performance analysis of the FFAST decoding algorithm} In order to analyze the performance of the FFAST decoding algorithm, we follow a similar approach as in Section~\ref{sec:verysparse} for the very-sparse regime of $0 < \sindex \leq 1/3$. We first provide an ensemble of bi-partite graphs constructed using a balls-and-bins model. Then, we provide CRT based ensemble of bi-partite graphs, that are generated by the FFAST front-end of Section~\ref{sec:FFASTlesssparse}. We show by construction these two ensembles are equivalent and analyze the performance of the FFAST peeling-decoder on a uniformly random graph from the balls-and-bins ensemble. 
\paragraph{{ Balls-and-Bins construction}}\label{sec:bb2}
We construct a bi-partite graph with $\sparsity$ variable nodes on the left and $\nbins = \sum_{i=0}^{2} \factor{i}$, check nodes on the right (see Fig.~\ref{fig:cyclicshift}) using balls-and-bins model as follows. Partition the $\nbins$ check nodes into $3$ sets/stages containing $\factor{0},\factor{1}$ and $\factor{2}$ check nodes respectively. The check nodes in each of the $3$ sets are arranged in a matrix format as shown in Fig.~\ref{fig:cyclicshift}, e.g., $\factor{0}$ check nodes in the set $0$ are arranged as $\primef{0}$ rows and $\primef{1}$ columns. A check node in row $r_0$ and column $r_1$ in the set $0$, is indexed by a pair $(r_0,r_1)$ and so on and so forth. Each variable node uniformly randomly chooses a triplet $(r_0,r_1,r_2)$, where $r_i$ is between $0$ and $\primef{i}-1$. The triplets are chosen with replacement and independently across all the $\sparsity$ variable nodes. A $3$-regular degree bi-partite graph with $\sparsity$ variable nodes and $\nbins$ check nodes is then constructed by connecting a variable node with a triplet $(r_0, r_1, r_2)$ to the check nodes $(r_0,r_1), (r_1,r_2)$ and $(r_2,r_0)$ in the three sets on right respectively, as shown in Fig.~\ref{fig:cyclicshift}.

\paragraph{{CRT based bi-partite graphs induced by the FFAST architecture}} Each variable node is associated with an integer $v$ between $0$ and $\length-1$ (location of the DFT coefficient). As a result of the subsampling and computing a smaller DFTs in the FFAST architecture (see Fig~\ref{fig:conceptual}), a variable node with an index $v$ is connected to the check nodes $(v)_{\factor{0}}, (v)_{\factor{1}}$ and $(v)_{\factor{2}}$ in the 3 stages, in the resulting aliased bi-partite graph. The CRT implies that $v$ is uniquely represented by a triplet $(r_0,r_1,r_2)$, where $r_i = (v)_{\primef{i}}$. Also, $((v)_{\factor{i}})_{\primef{i}} = (v)_{\primef{i}} = r_i$, for all $i=0,1,2$. Thus, the FFAST architecture induces a $3$-regular degree bi-partite graph with $\sparsity$ variable nodes and $\nbins$ check nodes, where a variable node with an associated triplet $(r_0, r_1, r_2)$ is connected to the check nodes $(r_0,r_1), (r_1,r_2)$ and $(r_2,r_0)$ in the three sets respectively. Further, a uniformly random model for the support $v$ of the non-zero DFT coefficients, corresponds to choosing the triplet $(r_0,r_1,r_2)$ uniformly at random. Thus, by construction the ensemble of bi-partite graphs generated by the FFAST front-end is equivalent to the balls-and-bins construction. 

Following the outline of the proof of Theorem~\ref{thm:main} (provided in Section~\ref{sec:verysparse}), we can show the following:
\begin{enumerate}
\item {\em Density evolution for the cycle-free case}: Assuming a local tree-like neighborhood derive a recursive equation (similar to equation~\ref{eq:evolution2}) representing the expected evolution of the number of single-tons uncovered at each round for a ``typical" graph from the ensemble. 
\item {\em Convergence to the cycle-free case}: Using a Doob martingale show an equivalent of Lemma~\ref{lem:concentration} for the less-sparse regime, where the number of check nodes in the $3$ different stages $\factor{0},\factor{1}$ and $\factor{2}$ are not pairwise co-prime. 
\item {\em Completing the decoding using the graph expansion property}: A random graph from the ensemble is a good expander with high probability. Hence, if the FFAST decoder successfully decodes all but a constant fraction of variable nodes, it decodes all the variable nodes. 
\end{enumerate}

The analysis of the first two items for the less-sparse regime is similar in spirit to the one in Section~\ref{sec:verysparse}, and will be skipped here. However, the analysis of the third item will be described here as there are some key differences, mainly arising due to fact that integers in the set $\{\factor{i}\}_{i=0}^2$ are not co-prime as in Section~\ref{sec:verysparse}. For the very-sparse regime we have shown (in Lemma~\ref{lem:ballsandbinsexp}) that the bottleneck failure event is; not being able to decode {\em all} the DFT coefficients. Hence, in this section, we analyze this bottleneck failure event for the case of the less-sparse regime. In particular, we show that if the FFAST decoder has successfully decoded all but a small constant number of DFT coefficients, then it decodes all the DFT coefficients successfully with high probability.

\paragraph{{ Decoding all the variable nodes using the expansion properties of the CRT construction}}\label{sec:cubetrappingset}
\begin{figure}[h]
\begin{center}
\includegraphics[scale=.5]{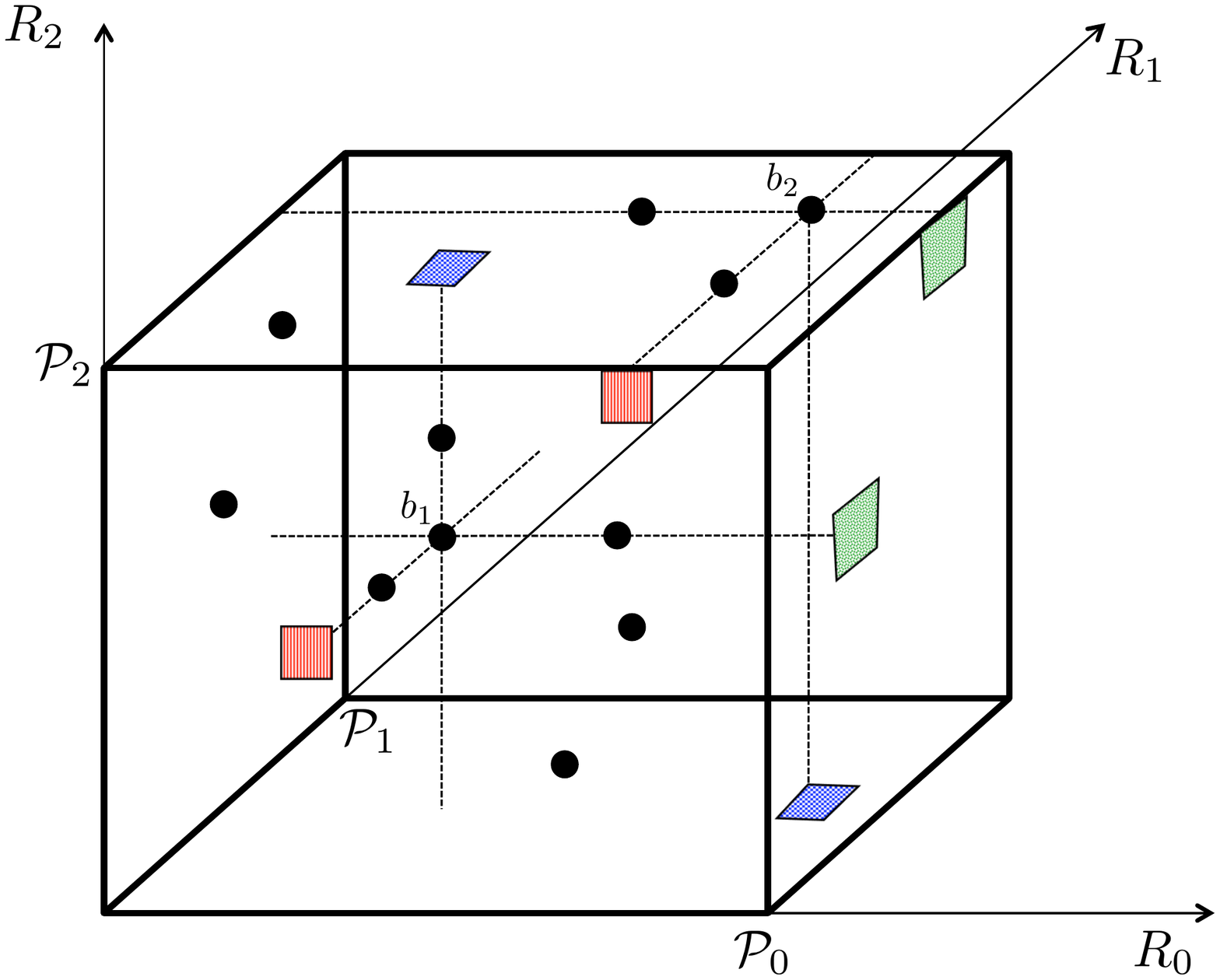}
\caption{A 3D visualization of the bi-partite graph in Fig.~\ref{fig:cyclicshift}, that belongs to the CRT ensemble corresponding to $\sindex = 2/3$. Recall that for $\sindex = 2/3$, $\length = \prod_{i=0}^{2}\primef{i}$, where the set $\{\primef{i}\}_{i=0}^2$ consists of approximately equal sized co-prime integers with each $\primef{i} = \binsize + O(1)$ and $\binsize = \sqrt{\binexp\sparsity}$. The integers $\{\factor{i}\}_{i=0}^2$, are such that $\{\factor{0},\factor{1},\factor{2}\} = \{\primef{0}\primef{1}, \primef{1}\primef{2}, \primef{2}\primef{0}\}$. A variable node with an associated triplet $(r_0,r_1,r_2)$ is represented by a `ball' at the position $(r_0,r_1,r_2)$. The $\factor{0}$ check nodes in the stage-1 of the bi-partite graph are represented by `blue' (checkered pattern) squares and likewise the ones in $\factor{1}$ are `green' (dotted pattern) and the check nodes in stage $\factor{2}$ are `red' (lines pattern). All the neighboring check nodes of a variable node, e.g., $b_1$, are multi-ton iff there is at least one more variable node along each of the three directions $R_0,R_1$ and $R_2$. The green (dotted pattern) and red (lines pattern) neighboring check nodes connected to the ball $b_2$ are multi-tons, while the blue (checkered pattern) neighboring check node is a single-ton since there are no other variable nodes along the $R_2$ direction of $b_2$.}
\label{fig:3Dcube}
\end{center}
\end{figure}

Consider an alternative visualization of the bi-partite graph in Fig.~\ref{fig:cyclicshift}, as shown in Fig.~\ref{fig:3Dcube}. A variable node associated with a triplet $(r_0,r_1,r_2)$ is represented by a ball at the position $(r_0,r_1,r_2)$. The plane $R_0$-$R_1$ corresponds to the check nodes in the stage $\factor{0}$, in a sense that all the variable nodes that have identical $(r_0,r_1)$ but distinct $r_2$ are connected to the check node $(r_0,r_1)$ and so on. Similarly the planes $R_1$-$R_2$ and $R_2$-$R_0$ correspond to the check nodes in stages $\factor{1}$ and $\factor{2}$ respectively. Thus, a variable node with co-ordinates $(r_0,r_1,r_2)$ is connected to multi-ton check nodes, if and only if there exist variable nodes with co-ordinates $(r_0,r_1,\osratio'_2)$, $(\osratio'_0,r_1,r_2)$ and $(r_0,\osratio'_1,r_2)$ (see Fig.~\ref{fig:3Dcube}), i.e., one neighbor in each axis. The FFAST decoder stops decoding if all the check nodes are multi-tons. Next, we find an upper bound on the probability of this `bad' event.

Consider a set $S$ of variable nodes such that $|S| = s$, where $s$ is a small constant. Let $E_S$ be an event that all the neighboring check nodes of all the variable nodes in the set $S$ are multi-tons, i.e., the FFAST decoder fails to decode the set $S$. Also, let $E$ be an event that there exists such a set. We first compute an upper bound on the probability of the event $E_S$, and then apply a union bound over all $\sparsity \choose s$ sets to get an upper bound on the probability of the event $E$.

Each variable node in the set $S$ chooses an integer triplet $(r_0,r_1,r_2)$ uniformly at random in a cube of size $\primef{0} \times \primef{1} \times \primef{2}$. Let $p_{\max}$ denote the maximum number of distinct values taken by these $s$ variable nodes on any co-ordinate. The FFAST decoder fails to decode the set $S$ if and only if all the variable nodes have at least one neighbor along each of the 3 directions $R_0,R_1,R_2$ (see Fig.~\ref{fig:3Dcube}). This implies that $s \geq 4p_{\max}$. Also, $p_{\max} > 1$, i.e., $s \geq 8$, since by the CRT all the variable nodes $s$ (with distinct associated integers) cannot have an identical triplet $(r_0,r_1,r_2)$. An upper bound on the probability of the event $E_S$ is then obtained as follows:
\begin{eqnarray}\label{eq:errorEvent1}
Pr(E_S) &<& \prod_{i=0}^2 \left(\frac{s}{4\primef{i}}\right)^s {\primef{i} \choose s/4}\nn\\
&\approx&\left(\frac{s}{4{\binsize}}\right)^{3s} {\binsize \choose s/4}^3\nonumber\\
&\overset{(a)}{<}&\left(\frac{s}{4{\binsize}}\right)^{3s} \left(\frac{4\binsize e}{s}\right)^{3s/4}\nonumber\\
&=& \left(\frac{se^{1/3}}{4{\binsize}}\right)^{9s/4},
\end{eqnarray}
where in $(a)$ we used ${p \choose q} \leq (p^q/q!) \leq (pe/q)^q$. Then, using a union bound over all possible ${\sparsity \choose s}$ sets, we get:
\begin{eqnarray}
Pr(E) &<& Pr(E_S){\sparsity \choose s} \nonumber\\
&<& \left(\frac{se^{1/3}}{4{\binsize}}\right)^{9s/4}\left(\frac{\sparsity e}{s}\right)^{s}\nonumber\\
&=& O(1/\nbins),
\end{eqnarray}
where in the last inequality, we used $s \geq 8$, $\sparsity =O(\binsize^2)$ and $\nbins = O(\binsize^2)$. 

Thus, the FFAST decoder decodes all the variable nodes with probability at least $1- O(1/\nbins)$.

\subsection{FFAST front-end architecture for $\sindex = 1-1/\stages$, where $\stages \geq 3$ is an integer}\label{sec:foranyd} 
Consider $\length = \prod_{i=0}^{\stages-1}\primef{i}$, where the set $\{\primef{i}\}_{i=0}^{\stages-1}$ consists of approximately equal sized co-prime integers with each $\primef{i} = \binsize + O(1)$, and $\binsize^{\stages-1} = \binexp\sparsity$, for a constant $\binexp$ chosen as per Table~\ref{tab:convergence}. This results in $\sindex =1 - 1/\stages$. Choose the integers $\{\factor{i}\}_{i=0}^{\stages-1}$, such that $\factor{i} = \primef{(i)_{\stages}}\primef{(i+1)_{\stages}}\cdots\primef{(i+{\stages-2})_{\stages}}$, for $i=0,\hdots,\stages-1$. Then, we construct a $\stages$ stage FFAST front-end architecture, where stage $i$ has two delay-chains each with a sub-sampling period of $\length/\factor{i}$ and $\factor{i}$ number of output samples. The performance of the back-end FFAST peeling-decoder, for these constructions, can be analyzed following the outline of Section~\ref{sec:verysparse} and Section~\ref{sec:lesssparse}. For $\sindex = 2/3$, the FFAST front-end architecture had $\stages = 3$ stages and as shown in Section~\ref{sec:twothirdrate}, the bottleneck failure event was to decode a set $S$ of size $|S| = 8 = 2^{3}$. For a general $\stages$ stage FFAST front-end construction, using induction, one can show that the worst case failure event is when the FFAST decoder fails to decode a set of size $|S| = 2^\stages$. The probability of this event is upper bounded by $1/{\binsize^{2^\stages - 2d}}$.

\subsection{Achieving the intermediate values of $\sindex$}\label{sec:intermediatedelta}
\begin{figure}[t]
\begin{center}
\includegraphics[scale=.5]{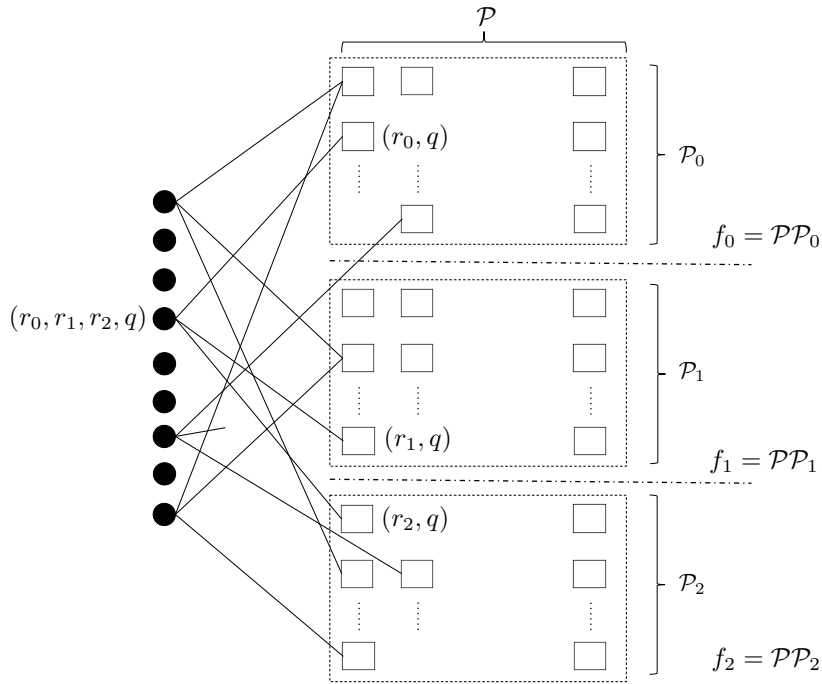}
\caption{Bi-partite aliasing graph resulting from processing an $\length = {\cal P}\prod_{i=0}^{2}\primef{i}$, length signal, using a $\stages = 3$ stage FFAST front-end. The set $\{\primef{i}\}_{i=0}^{2}$ consists of approximately equal sized co-prime integers with each $\primef{i} = \binsize + O(1)$ and ${\cal P} = \binsize^{a}$ for some constant $a > 0$. Further, the integers $\{\factor{0}, \factor{1}, \factor{2}\} = \{{\cal P}\primef{0}, {\cal P}\primef{1}, {\cal P}\primef{2}\}$. The check nodes in each of the $3$ sets are arranged so that a $j^{th}$ check node in the set $i$, belongs to the row $(j)_ {\primef{i}}$ and the column $\floor{j/{\cal P}}$. A variable node with an associated integer $v$ is uniquely represented by a quadruplet $(r_0,r_1,r_2,q)$, where $r_i = (v)_{\primef{i}}, \ i = 0,1,2$ and $q = \floor{v/{\cal P}}$, and is connected to check node $(r_i,q)$, i.e., check node in row $r_i$ and column $q$, of set $i$. The bipartite graph is a union of ${\cal P}$ disjoint bi-partite graphs.}
\label{fig:parallelview}
\end{center}
\vspace*{-0.3in}
\end{figure}
In this section, we show how to extend the scheme in Section~\ref{sec:verysparse}, that was designed for $\sparsity=O(\length^{1/3})$, to achieve a sparsity regime of $\sparsity =O(\length^{(1+a)/(3+a)})$ for some $a > 0$. This extension technique can essentially be used in conjunction with any of the operating points described earlier. Thus covering the full range of sparsity indices in $0 < \sindex < 1$.

Consider $\length = {\cal P}\prod_{i=0}^{2}\primef{i}$, where the set $\{\primef{i}\}_{i=0}^{2}$ consists of approximately equal sized co-prime integers with each $\primef{i} = \binsize + O(1)$ and ${\cal P} = \binsize^{a}$ for a constant $a > 0$. The parameter $\binsize^{1+a} = \binexp\sparsity$, for a constant $\binexp$ chosen as per Table~\ref{tab:convergence}. This results in $\sindex =(1+a)/(3+a)$. Further, choose the integers $\{\factor{0}, \factor{1}, \factor{2}\} = \{{\cal P}\primef{0}, {\cal P}\primef{1}, {\cal P}\primef{2}\}$. Then, we construct a $\stages=3$ stage FFAST front-end architecture, where stage $i$ has two delay-chains each with a sub-sampling period of $\length/\factor{i}$ and $\factor{i}$ number of output samples. 

\subsubsection{Union of Disjoint problems} The check nodes in each of the $3$ sets are arranged so that a $j^{th}$ check node in the set $i$, belongs to the row $(j)_ {\primef{i}}$ and the column $\floor{j/{\cal P}}$ (see Fig.~\ref{fig:parallelview}). 

A variable node with an associated integer $v$ is uniquely represented by a quadruplet $(r_0,r_1,r_2,q)$, where $r_i = (v)_{\primef{i}}, \ i = 0,1,2$ and $q = \floor{v/{\cal P}}$, and is connected to check node $(r_i,q)$, i.e., check node in row $r_i$ and column $q$, of set $i$. The resulting bipartite graph is a union of ${\cal P}$ disjoint bi-partite graphs, where each bi-partite subgraph behaves as an instance of the $3$-stage perfect co-prime case discussed in Section~\ref{sec:verysparse}. Then, using a union bound over these disjoint graphs one can compute the probability of the FFAST decoder successfully decoding all the variable nodes for asymptotic values of $\sparsity,\length$.

\subsection{FFAST architecture for achieving the sparsity index $1/3 < \sindex \leq 1$}\label{sec:lesssparseconstruction} 
In this section, we provide a sketch of the parameters used for the FFAST construction to achieve all values of sparsity index $\sindex$ in the less-sparse range, i.e., $1/3 < \sindex < 1$. The proposed construction is built based on the design principles illustrated in Sections~\ref{sec:twothirdrate},\ref{sec:foranyd} and Section~\ref{sec:intermediatedelta} and is an achievable construction. There are many other construction parameters of FFAST that may achieve similar sparsity index with equal or better performance. 
\subsubsection{$1/3 < \sindex < 0.73$} Consider $\length = {\cal P}\prod_{i=0}^{5}\primef{i}$, where the set $\{\primef{i}\}_{i=0}^5$ consists of approximately equal sized co-prime integers with each $\primef{i} = \binsize + O(1)$, and ${\cal P} = \binsize^{a}$, for $0 < a \leq 9$. Further, choose $\binsize^{2+ a} = \binexp\sparsity$, where constant $\binexp$ is given as per Table~\ref{tab:convergence}. This results in $1/3 < \sindex < 0.73$. We construct a $\stages = 6$ stage FFAST construction such that stage $i$ has two delay-chains each with a sub-sampling period of $\length/\factor{i}$ and $\factor{i}$ number of output samples. The integers $\{\factor{i}\}_{i=0}^5$ are such that $\{\factor{0},\factor{1},\factor{2}\} = \{{\cal P}\primef{0}\primef{1}, {\cal P}\primef{1}\primef{2}, {\cal P}\primef{2}\primef{0}\}$ and  $\{\factor{3},\factor{4},\factor{5}\} = \{{\cal P}\primef{3}\primef{4},{\cal P} \primef{4}\primef{5}, {\cal P}\primef{5}\primef{3}\}$. Note that for $a = 0$, this choice of parameters corresponds to two disjoint designs of $\sindex = 2/3$ the FFAST construction described in Section~\ref{sec:twothirdrate}. Using the analysis from Section~\ref{sec:twothirdrate}, along with a union bound over ${\cal P}$ sets of disjoint problems, as explained in Section~\ref{sec:intermediatedelta}, it is easy to verify that all $\sindex$ between $1/3 < \sindex < 0.73$, can be achieved with probability of error bounded above by $O(1/\nbins)$, if $\binexp \geq 0.2616$. Thus, the total number of samples $\samples$ used by the FFAST algorithm, for this range of sparsity index $\sindex$, is $\samples = 2\times6\times\binexp\sparsity \leq 3.14\sparsity$.
\subsubsection{$0.73 < \sindex < 0.875$} For the sparsity index $\sindex$ in the range $0.73 < \sindex < 0.875$, one possible construction is to use a $\stages = 8$ stage FFAST construction consisting of two disjoint designs for $\sindex = 4/3$, with $3$ cyclically shifted primes at a time. For example, consider a set $\{\primef{i}\}_{i=0}^7$ sof $8$ co-prime integers of approximately equal size then, $\{\factor{0},\factor{1},\factor{2},\factor{3}\} = \{{\cal P}\primef{0}\primef{1}\primef{2}, {\cal P}\primef{1}\primef{2}\primef{3}, {\cal P}\primef{2}\primef{3}\primef{0}, {\cal P}\primef{3}\primef{0}\primef{1}\}$ and  $\{\factor{4},\factor{5},\factor{6},\factor{7}\} = \{{\cal P}\primef{4}\primef{5}\primef{6}, {\cal P}\primef{5}\primef{6}\primef{7}, {\cal P}\primef{6}\primef{7}\primef{4}, {\cal P}\primef{7}\primef{4}\primef{5}\}$. For this construction, the parameter $a$ in ${\cal P} = \binsize^{a}$, is set between $10.5 < a < 32$, to achieve an error performance that asymptotically approaches zero as $ O(1/\nbins^{0.9})$. From Table~\ref{tab:convergence}, for a $\stages=8$ stage FFAST architecture $\binexp = 0.2336$. Thus, the total sample complexity for this design is $\samples = 2\stages\binexp\sparsity < 3.74\sparsity$.
\subsubsection{$0.875 < \sindex < 1$} The sparsity index $\sindex$ in the range $0.875 < \sindex < 1$, can be achieved by the combination of designs proposed in Section~\ref{sec:foranyd} and Section~\ref{sec:intermediatedelta} for increasing (but constant) values of $\stages$ as dictated by $\sindex$. For example, for $0.875 < \sindex < 0.99$, a $\stages = 8$ stage FFAST front-end design as in Section~\ref{sec:foranyd} along with approach of Section~\ref{sec:intermediatedelta} achieves an error performance that asymptotically approaches zero with $\samples < 3.74\sparsity$ as shown in Fig.~\ref{fig:constants}.

\section{Sample and computational complexity of the FFAST algorithm}\label{sec:complexity}
The FFAST algorithm performs the following operations in order to compute the $\length$-length DFT of an $\length$-length discrete-time signal $\signal$ (see Algorithm~\ref{alg:FFAST} in Appendix~\ref{app:pseudocode})
\begin{enumerate}
\item {\bf Sub-sampling:} A FFAST architecture (see Fig.~\ref{fig:conceptual} in Section~\ref{sec:intro}) with $\stages$ stages, has $\stages$ distinct subsampling patterns chosen as per the discussions in Sections \ref{sec:verysparse} and \ref{sec:lesssparse}. These patterns are deterministic and are pre-computed. We assume the presence of random-access-memory, with unit cost per I/O operation, for reading the subsamples. For the very-sparse regime of $ 0 < \sindex \leq 1/3$, as shown in Section~\ref{sec:verysparseconstruction}, the FFAST architecture with $\stages=3$ stages  is sufficient to compute a $\sparsity$-sparse $\length$-length DFT using $\samples = 2.44\sparsity$ samples. For the less-sparse regime of $ 1/3 < \sindex \leq 0.99$, as shown in Section~\ref{sec:lesssparseconstruction}, we have three different FFAST architectures, with $\stages = 6$ and $\stages = 8$, for different range of $\sindex$. The total samples used for these regions are $\samples = 3.14\sparsity$ and $\samples = 3.74\sparsity$ respectively. 

In general for any fixed $0 < \sindex < 1$, the sample complexity $\samples$ can be as small as $\osratio\sparsity$, where $\osratio$ is a constant that depends on the sparsity index $\sindex$.

\item {\bf DFT:} The FFAST algorithm then computes $2d$ DFT's, each of length approximately equal to $\binexp\sparsity$, for some constant $\binexp$ depending on the number of stages $\stages$ of the FFAST front-end. Using a standard FFT algorithm, e.g., prime-factor FFT or Winograd FFT algorithm \cite{blahut1985fast}, one can compute each of these DFT's in $O(\sparsity\log \sparsity)$ computations. Thus, the total computational cost of this step is $O(\sparsity\log \sparsity)$.

\item {\bf Peeling-decoder over sparse graph codes:} It is well known \cite{LMSS01}, that the computational complexity of the FFAST peeling-decoder over sparse graph codes is linear in the dimension of the graph, i.e., $O(\sparsity)$.
\end{enumerate}

Thus, the FFAST algorithm computes a $\sparsity$-sparse $\length$-length DFT with $O(\sparsity)$ samples using no more than $O(\sparsity\log \sparsity)$ computations for all $0 < \sindex < 1$.
\section{Simulation Results}\label{sec:experiments} In this section we validate the empirical performance of the FFAST algorithm for computing the DFT of signals having a sparse Fourier spectrum. We contrast the observed empirical performance, in terms of sample complexity, computational complexity and probability of success, with the theoretical claims of Theorem~\ref{thm:main}.\begin{figure}
\begin{center}
\includegraphics[scale = 0.5]{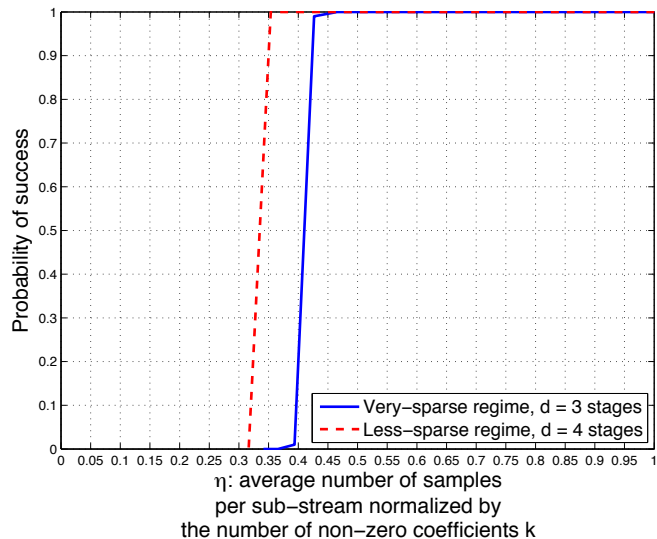}
\caption{The probability of success of the FFAST algorithm as a function of $\binexp$, the average number of samples per delay-chain, normalized by the number of non-zero DFT coefficients. The plot is obtained for two different sparsity regimes: 1) Very-sparse regime: $\sparsity =O(\length^{1/3})$. For this regime, a $\stages=3$ stage FFAST architecture is chosen. The number of samples per sub-stream in each of the $3$ stages are perfectly co-prime: $\factor{0} = 511, \factor{1} = 512$ and $\factor{2} = 513$ respectively, and 2) Less-sparse regime: $\length^{0.73} < \sparsity < \length^{0.85}$. For this regime, a $\stages=4$ stage FFAST architecture is chosen. The number of samples per sub-stream in each of the $4$ stages are {\em not} co-prime but have ``cyclically-shifted" overlapping co-prime factors: $\factor{0} = 16\times17\times19 = 5168$, $\factor{1} = 17\times19\times21 = 6783$, $\factor{2} = 19\times21\times16 = 6384$ and $f_3 = 21\times16\times17 = 5712$ respectively. Each point on the plot is obtained by averaging over 10000 runs. The ambient signal dimension $\length$ and the number of samples $\samples$ are fixed, while the number of non-zero coefficients $\sparsity$ is varied to get different values of $\binexp$. We note that the FFAST algorithm exhibits a threshold behavior with $\binexp_1 = 0.427$ being the threshold for the very-sparse regime, and $\binexp_2 = 0.354$ for the less-sparse regime respectively. From Table~\ref{tab:convergence} in Section~\ref{sec:cyclefree}, we see that the optimal threshold values are $\binexp^*_1 = 0.4073$ and $\binexp^*_2 = 0.3237$, which are in close agreement with our simulation results.}
\label{fig:threshold}
\end{center}
\end{figure}
 \begin{table}[h]
\begin{center}
\begin{tabular}{|c|c|c|c|c|c|c|c|c|}
  \hline
 \multirow{2}{*}{Regime}&  Signal length &  Sparsity & $\sindex$ & stages & \multicolumn{2}{c|}{$\samples \approx 2\stages\binexp \sparsity$} &  \multirow{2}{*}{$\ell$} &\multirow{2}{*}{failures}\\
 \cline{6-7} 
   & $\length$ & $\sparsity$ & & $\stages$ & \samples & $\binexp$ & &\\
  \hline
  \multirow{4}{*}{Very-sparse} & \multirow{2}{*}{511*512*513} & $900$ & $0.363$ & \multirow{4}{*}{3} & $3072$&$0.569$ & 6 &1/$10^4$\\
  \cline{3-4} \cline{6-9}
   &  & $1000$ & $0.369$ && $3072$&$0.512$ & 9 & 0/$10^4$\\
  \cline{3-4} \cline{6-9}
   & \multirow{2}{*}{$\approx 134 \times 10^6$}& $1100$ & $0.374$ & & $3072$&$0.465$ & 13 &1/$10^4$\\
  \cline{3-4} \cline{6-9}
  &  & $1200$ & $0.378$ && $3072$&$0.427$ & 18 &99/$10^4$\\
\hline
  \multirow{4}{*}{Less-sparse} & \multirow{2}{*}{16*17*19*21} & $13000$ & $0.81$ & \multirow{4}{*}{4} & $48094$&$0.462$ & 6&0/$10^4$\\
  \cline{3-4} \cline{6-9}
   &  & $15000$ & $0.83$ && $48094$&$0.401$ & 8 &0/$10^4$\\
  \cline{3-4} \cline{6-9}
   & \multirow{2}{*}{$\approx0.1 \times 10^6$}& $17000$ & $0.84$ & & $48094$&$0.354$ & 13 &2/$10^4$\\
  \cline{3-4} \cline{6-9}
  &  & $19000$ & $0.85$ && $48094$&$0.316$ & 29 &$10^4$/$10^4$\\
\hline
\end{tabular}
\end{center}
\caption{Performance of the FFAST algorithm for two different sparsity regimes: 1) Very-sparse regime: $\sparsity =O(\length^{1/3})$. For this regime, a $\stages=3$ stage FFAST architecture is chosen. The number of samples per sub-stream in each of the $3$ stages are perfectly co-prime: $\factor{0} = 511, \factor{1} = 512$ and $\factor{2} = 513$ respectively, and 2) Less-sparse regime: $\length^{0.73} < \sparsity < \length^{0.85}$. For this regime, a $\stages=4$ stage FFAST architecture is chosen. The number of samples per sub-stream in each of the $4$ stages are {\em not} co-prime but have ``cyclically-shifted" overlapping co-prime factors: $\factor{0} = 16\times17\times19 = 5168$, $\factor{1} = 17\times19\times21 = 6783$, $\factor{2} = 19\times21\times16 = 6384$ and $f_3 = 21\times16\times17 = 5712$ respectively. We note that when $\binexp_1 \geq 0.427$ and $\binexp_2 \geq 0.354$, for the very-sparse and the less-sparse regimes respectively, the FFAST algorithm successfully computes all the non-zero DFT coefficients for almost all the runs. Further, in one or two instances when it failed to recover all the non-zero DFT coefficients, it has recovered all but $8$ (for $\stages=3$) or $16$ (for $\stages=4$) non-zero DFT coefficients. This validates our theoretical findings. From Table~\ref{tab:convergence} in Section~\ref{sec:cyclefree}, we see that the optimal threshold values for the very-sparse and less-sparse regimes are $\binexp^*_1 = 0.4073$ and $\binexp^*_2 = 0.3237$ resp., which are in close agreement with the simulation results. The table also shows that the average number of iterations $\ell$, required for the FFAST algorithm to successfully compute the DFT for both the regimes, are quite modest.}\label{tab:probsuccess}
\end{table}

\subsection{FFAST architecture and signal generation}
\begin{itemize}
\item {\bf Very sparse regime:} The input signal $\signal$ is of length $\length = 511\times512\times513$ $\approx 134 \times 10^6$. The number of non-zero DFT coefficients $\sparsity$ is varied from $400$ to $1300$, this corresponds to the very-sparse regime of $\sparsity =O(\length^{1/3})$. We use an FFAST architecture with $\stages=3$ stages and $2$ delay-chains per stage. The uniform sampling periods for the $3$ stages are $512\times513$, $511\times513$ and $511\times512$ respectively. This results in the number of output samples, per delay-chain, for the three stages to be $\factor{0} = 511$, $\factor{1} = 512$ and $\factor{2} = 513$ respectively. Note that the number of samples in the three different stages, i.e., $\factor{i}$'s, are pairwise co-prime. The total number of samples used by the FFAST algorithm for this simulation is\footnote{The samples used by the different sub-streams in the three different stages overlap, e.g., $\signalc{0}$ is common to all the zero delay sub-streams in each stage. Hence, $\samples < 2(\factor{0}+\factor{1}+\factor{2})$ and not equal.} $\samples < 2(\factor{0}+\factor{1}+\factor{2}) = 3072$.

\item {\bf Less sparse regime:} The input signal $\signal$ is of length $\length = 16\times17\times19\times21$ $\approx0.1 \times 10^6$. The number of non-zero DFT coefficients $\sparsity$ is varied from $5000$ to $19000$, this corresponds to the less-sparse regime of $\length^{0.73}< \sparsity < \length^{0.85}$. We use an FFAST architecture with $\stages=4$ stages. The uniform sampling periods for the $4$ stages are $21$, $16$, $17$ and $19$ respectively. This results in the number of output samples, per delay-chain, for the four stages to be $\factor{0} = 16\times17\times19 = 5168$, $\factor{1} = 17\times19\times21 = 6783$, $\factor{2} = 19\times21\times16 = 6384$ and $\factor{3} = 21\times16\times17 = 5712$ respectively. Note that the number of samples in the four stages are composed of ``cyclically-shifted" co-prime numbers and are not pairwise co-prime. The total number of samples used by the FFAST algorithm for this simulation is $\samples < 2(\factor{0}+\factor{1}+\factor{2}+f_3) = 48094$.

\item For each run, an $\length$-dimensional $\sparsity$-sparse signal $\transform$ is generated with non-zero values $X_i \in \{\pm 10\}$ at uniformly random  positions in $\{0,1\hdots,\length-1\}$. The time domain signal $\signal$ is then generated from $\transform$ using an IDFT operation. This discrete-time signal $\signal$ is then given as an input to FFAST.
\item Each sample point in the Fig.~\ref{fig:threshold}, is generated by averaging over 10000 runs.
\item Decoding is successful if all the DFT coefficients are recovered perfectly.
 \end{itemize}
 
\subsection{Observations and conclusions} We observe the following aspects of the simulations in detail and contrast it with the claims of Theorem~\ref{thm:main}
 
\paragraph*{\bf Density Evolution} The density evolution recursion \eqref{eq:evolution2} in Section~\ref{sec:cyclefree}, implies a threshold behavior: if the average number of samples per delay-chain normalized by $\sparsity$, i.e., $\binexp$, is above a certain threshold (as specified in Section~\ref{sec:cyclefree} Table~\ref{tab:convergence}), then the recursion \eqref{eq:evolution2} converges to $0$ as $\ell \rightarrow \infty$ otherwise $p_{\ell}$ is strictly bounded away from $0$. In Fig.~\ref{fig:threshold}, we plot the probability of success of the FFAST algorithm averaged over 10000 independent runs, as a function of $\binexp$, i.e., varying $\sparsity$ for a fixed number of samples $\samples$. We note that the FFAST algorithm also exhibits a threshold behavior with $\binexp_1 = 0.427$, being the threshold for the very-sparse regime with $\stages=3$, and $\binexp_2 = 0.354$ for the less-sparse regime with $\stages=4$ respectively. From Table~\ref{tab:convergence} in Section~\ref{sec:cyclefree}, we see that the optimal threshold values are $\binexp^*_1 = 0.4073$ and $\binexp^*_2 = 0.3237$, which are in close agreement with our simulation results of Fig.~\ref{fig:threshold}.

\paragraph*{\bf Iterations} In the theoretical analysis of Section~\ref{sec:convergencetocyclefreecase}, we have shown that the FFAST algorithm, if successful, decodes all the DFT coefficients in $\ell$ iterations, where, $\ell$ is a large but a fixed constant. In Table~\ref{tab:probsuccess}, we observe that when successful, the FFAST peeling-decoder completes decoding in few ($\leq13$ in this case) number of iterations.

 \paragraph*{\bf Probability of success} The empirical probability of success of the FFAST algorithm for the very-sparse as well as the less-sparse regimes is shown in Table~\ref{tab:probsuccess}. We observe, as long as the parameter $\binexp$ is above the minimum threshold $\binexp^*$ dictated by Table~\ref{tab:convergence}, in Section~\ref{sec:cyclefree}, the FFAST algorithm indeed computes all the non-zero DFT coefficients successfully with high probability. Further, in one or two instances when it failed to recover all the non-zero DFT coefficients, it has recovered all but $8$ (for $\stages=3$) or $16$ (for $\stages=4$) non-zero DFT coefficients. Thus, confirming our theoretical analysis of the worst case error event, of being trapped in a cycle of size $2^{\stages}$.

\section{Conclusion}\label{sec:conclusion}
In this paper, we addressed the problem of computing an $\length$-length DFT of signals that have $\sparsity$-sparse Fourier transform, where $\sparsity << \length$. We proposed a novel FFAST algorithm that cleverly exploits filterless subsampling operation to induce aliasing artifacts, similar to parity-check constraints of good erasure-correcting sparse-graph codes, on the spectral coefficients. Then, we formally connected the problem of computing sparse DFT to that of decoding of appropriate sparse-graph codes. This connection was further exploited to design a sub-linear complexity FFAST peeling-style back-end decoder. Further, we analyzed the performance of the FFAST algorithm, using well known density evolution techniques from coding theory, to show that our proposed algorithm computes the $\sparsity$-sparse $\length$-length DFT using only $O(\sparsity)$ samples in $O(\sparsity\log\sparsity)$ arithmetic operations, with high probability. The constants in the big Oh notation for both sample and computational cost are small, e.g., when $\sindex < 0.99$, which essentially covers almost all practical cases of interest, the sample cost is less than $4\sparsity$. We also provide simulation results, that are in tight agreement with our theoretical findings.

Although, the focus of this paper is on signals having an exactly-sparse DFT, the key insights derived from analyzing the exactly sparse signal model apply more broadly to the noisy setting, i.e., where the observations are further corrupted by noise \cite{pawar2014isit}. In particular, in \cite{pawar2014isit}, we show that the robustness against sample noise is achieved by an elegant modification of the noiseless FFAST framework. Specifically, the noiseless FFAST framework shown in Fig.~\ref{fig:conceptual} has multiple sub-sampling stages, where each stage has 2 Òdelay-chainsÓ. In contrast, for the case of noise robust FFAST framework, we use $O(\log\length)$ number of delay-chains per sub-sampling stage, further the input signal to each delay-chain is circularly shifted by a random amount prior to sub-sampling, i.e., random delays. In \cite{pawar2014isit}, we show that a random choice of the circular shifts endows the effective ``measurement matrix" with a good mutual incoherence and Restricted-Isometry-Property \cite{candes2005decoding}, thus resulting in a stable recovery.
 
\begin{figure}[h]
\begin{center}
\includegraphics[width = 0.8\linewidth]{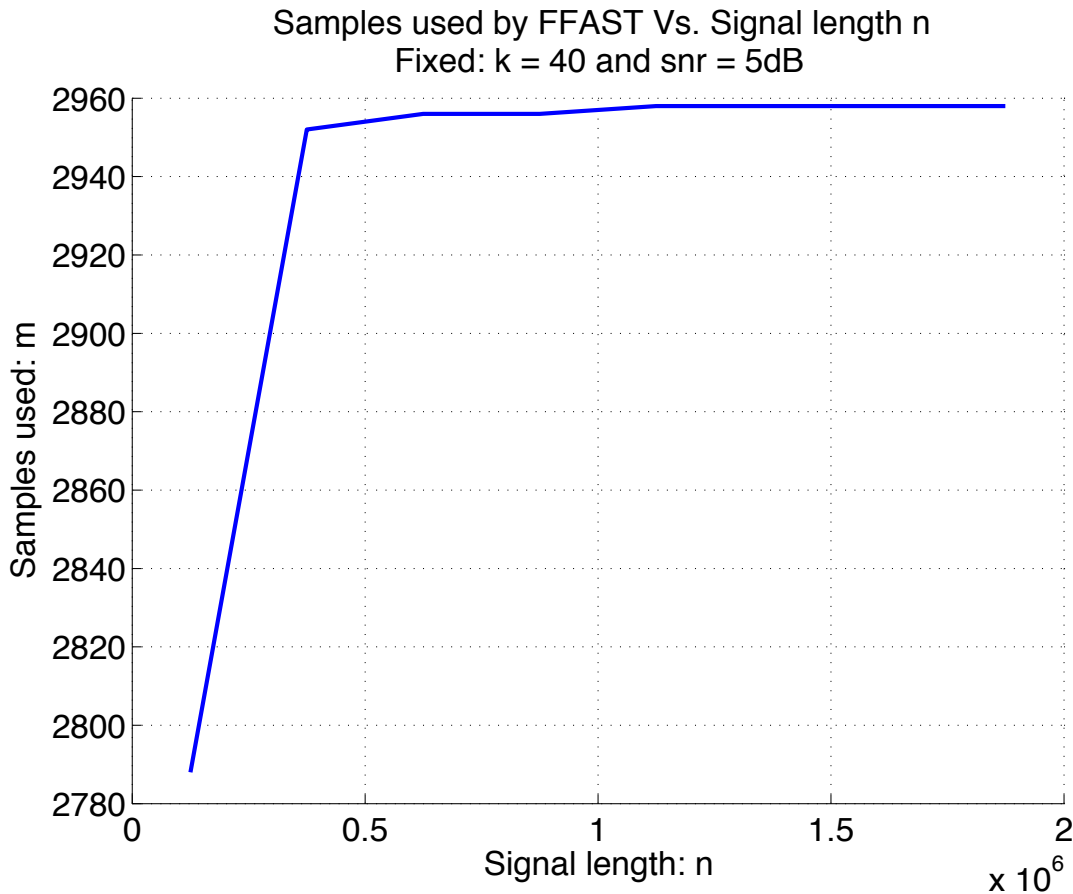}
\caption{The plot shows the scaling of the number of samples $\samples$ required by the noise robust FFAST algorithm to successfully reconstruct a $\sparsity=40$ sparse DFT $\transform$, for increasing signal length $\length$. For a fixed support recovery probability of $0.99$, sparsity $\sparsity$ and SNR of $5dB$, we note that $\samples$ scales logarithmically with increasing $\length$.}
\label{fig:random}
\end{center}
\end{figure}

As an example, we provide here an empirical evaluation of the scaling of the number of samples $\samples$, required by a noise robust FFAST algorithm \cite{pawar2014isit}, to successfully compute the DFT $\transform$ from noise-corrupted samples, as a function of the signal length $\length$. A noise-corrupted signal $\obs = \signal + \noise$, where $\noise$ is Gaussian, of length $\length$ is generated such that the effective signal-to-noise ratio is $5$dB and the DFT of $\signal$ has $\sparsity = 40$ non-zero coefficients. The signal $\length$ is varied from $\length = 49*50*51 \approx 0.1$ million to $\length = 49*50*51*15 \approx 1.87$ million. The noise-corrupted samples $\obs$ are processed using a $\stages =3$ stage FFAST architecture with $\delays$ delay-chains per stage. As the signal length $\length$ increases the number of delays $\delays$ per stage are increases to achieve at least $0.99$ probability of successful support recovery. From the plot in Fig.~\ref{fig:random}, we note that $\samples$ scales logarithmically with increasing $\length$. For further details about this simulation see \cite{pawar2014isit}.

The FFAST framework in this paper, has inspired the works of \cite{scheibler2013fast} and \cite{simon2014isit}, where the authors have proposed novel algorithms for computing high dimensional sparse ``Walsh Hadamard Transform" (WHT), for the noiseless and noisy observed samples respectively. The framework has also been successfully applied for blind acquisition of signals with sparse spectra using sub-Nyquist sampling \cite{pawar2013GlobalSIP} and to a variant of a phase-retrieval-problem as in \cite{pedarsani2014phasecode}. We are hopeful that many more applications will follow.
 \pagebreak
\begin{appendix}
\subsection{FFAST Algorithm Pseudocode}\label{app:pseudocode}
\begin{algorithm}[h!]
\caption{FFAST Algorithm}
\label{alg:FFAST}
\begin{algorithmic}[1]
\STATE {\em Inputs:} \\
- A discrete time signal $\signal$ of length $\length$, whose $\length$-length DFT $\transform$ has at most $\sparsity$ non-zero coefficients.\\
- The subsampling parameters of the FFAST architecture (see Fig.~\ref{fig:conceptual}): 1) number of stages $\stages$. 
		and 2) number of samples per sub-stream in each of the $\stages$ stages $\setfactors = \{\factor{0},\factor{1},\hdots,\factor{\stages-1}\}$, chosen as per discussion in Sections \ref{sec:verysparse} and \ref{sec:lesssparse}.
\algrule
\STATE {\em Output:}  An estimate of the $\sparsity$-sparse $\length$-length DFT $\transform$.
\algrule
\STATE {\em FFAST Decoding:} Set the initial estimate of the $\length$-length DFT $\transform = 0$. Let $\iterations$ denote the number of iterations performed by the FFAST decoder.
\FOR {each iteration}
	\FOR {each stage $i =0$ to $\stages-1$}
		\FOR {each bin $j=0$ to $\factor{i}-1$}
			\IF {$||\binobsv{i}{j}||^2 == 0$}
                		\STATE bin $j$ of stage $i$ is a zero-ton.
            		\ELSE
				\STATE (single-ton, $v_p$, $p$) = {\bf \emph{Singleton-Estimator}} $(\binobsv{i}{j})$.
	       			\IF {single-ton = `true'}
                        		\STATE {\em Peeling}: $\binobsv{s}{q} = \binobsv{s}{q} - \
v_p\left(
\begin{array}{c}
  1\\
  e^{\imath2\pi p/\length} 
\end{array}
\right)$, for all stages $s$ and bins $q \equiv p \text{ mod } \factor{s}$.
                      		\STATE Set $\transformc{p} = v_p$.
				\ELSE
					\STATE bin $j$ of stage $i$ is a {\em multi-ton}.
				\ENDIF
			\ENDIF
		\ENDFOR
	\ENDFOR
\ENDFOR
\end{algorithmic}
\end{algorithm}

\begin{algorithm}[h!]
\caption{ Singleton-Estimator}
\label{alg:single-ton_estimator}
\begin{algorithmic}[1]
\STATE {\em Input:} The bin observation $\binobsv{i}{j}$.
\algrule
\STATE {\em Outputs:} 1) A boolean flag `single-ton', 2) Estimated value $v_p$ of the non-zero DFT coefficient at position $p$.
\algrule
\STATE {\em Singleton-Estimator:} Set the single-ton = `false'.
\IF {$(\length/2\pi)\angle \binobsc{i}{j}[1]\conj{\binobsc{i}{j}[0]} \in \{0,1,\hdots,(\length-1)\}$}
	\STATE single-ton = `true'.
        \STATE $v_p = \binobsc{i}{j}[0]$.
        \STATE $p = (\length/2\pi)\angle \binobsc{i}{j}[1]\conj{\binobsc{i}{j}[0]}$.
  \ENDIF
\end{algorithmic}
\end{algorithm}
\pagebreak
\subsection{Ratio-test for a multi-ton bin}\label{app:multi-tonratiotest} 
Consider a multi-ton-bin $\ell$ with $\mlength-1$ non-zero components where $\mlength > 2$. Let $i_0,i_1,\cdots, i_{\mlength-2}$ be the indices of the non-zero components of $\transform$ contributing to the observation of the multi-ton bin $\ell$. Then, we have 
\begin{eqnarray}
\binobsvi{\ell} &=&
\left[
\begin{array}{ccccc}
  \vec{v}_{i_0}& \vec{v}_{i_1}  & \cdots & \vec{v}_{i_{\mlength-3}} & \vec{v}_{i_{\mlength-2}}  
\end{array}
\right] \left[
\begin{array}{c}
  \transformc{i_0}\\
  \transformc{i_1}\\
  \vdots\\
  \transformc{i_{\mlength-3}}\\
  \transformc{i_{\mlength-2}}
\end{array}
\right],
\end{eqnarray}
where vector $\vec{v}_{i_0} = \left(
\begin{array}{c}
  1 \\
  e^{\imath2\pi i_o/\length}
\end{array}
\right)$ consists of the first two entries of the $i_0^{th}$ column of the $\length\times\length$ IDFT matrix. The multi-ton bin $\ell$ will be identified as a single-ton bin with location $j$ and value $\transformc{j}$ for some $0 \leq j < \length$ iff 
\begin{eqnarray}\label{eq:multi-toncondn}
\binobsvi{\ell} &=& \vec{v}_{j} \transformc{j} \nonumber\\
i.e., \left[
\begin{array}{cccc}
  \vec{v}_{i_0}& \vec{v}_{i_1}  & \cdots & \vec{v}_{i_{\mlength-3}}
\end{array}
\right] \left[
\begin{array}{c}
  \transformc{i_0}\\
  \transformc{i_1}\\
  \vdots\\
  \transformc{i_{\mlength-3}}
\end{array}
\right] &=& \left[
\begin{array}{cc}
 \vec{v}_{i_{\mlength-2}} & \vec{v}_{j}
\end{array}
\right] \left[
\begin{array}{c}
  -\transformc{i_{\mlength-2}}\\
  \transformc{j}
\end{array}
\right] \nonumber\\
\vec{u} &=& \left[
\begin{array}{cc}
 \vec{v}_{i_{\mlength-2}} & \vec{v}_{j}
\end{array}
\right] \left[
\begin{array}{c}
  -\transformc{i_{\mlength-2}}\\
  \transformc{j}
\end{array}
\right].
\end{eqnarray}
where $\vec{u} \in \complex^{2}$ is some resultant vector. The matrix consisting of the first two rows of an IDFT matrix is a Vandermonde matrix and hence equation~\eqref{eq:multi-toncondn} has a unique solution. The complex coefficient $\transformc{j}$ is free to choose but $\transformc{i_{\mlength-2}}$ is drawn from a continuous distribution. As a result probability of satisfying equation~\eqref{eq:multi-toncondn} is essentially zero even after applying union bound over all $j$. Hence a multi-ton bin is identified correctly almost surely.

\subsection{Check node edge degree-distribution polynomial of the graphs in the balls-and-bins ensemble $\bbensemble$}\label{app:degdistribution}
The check node edge-degree distribution polynomial of a bi-partite graph is defined as $\rho(\alpha) = \sum_{i=1}^\infty \rho_i \alpha^{i-1}$, where $\rho_i$ denotes the probability of an edge (or fraction of edges) of the graph is connected to a check node of degree $i$. Recall that in the randomized construction based on a balls-and-bins model described in Section~\ref{sec:bbensemble}, every variable node is connected to $\stages$ check nodes, one uniformly random check node in each of the $\stages$ sets. Thus, the number of edges connected to a check node, in the subset with $\factor{0}$ check nodes, is a binomial $B(1/\factor{0},\sparsity)$ random variable. For large $\sparsity$ and $\factor{0} = \binexp\sparsity + O(1)$, the binomial distribution $B(1/\factor{0},\sparsity)$ is well approximated by a Poisson random variable with mean $1/\binexp$. Thus, 
\begin{equation}\label{eq:checknodedgree}
Pr(\text{check node in set $0$ has edge degree} = i) \approx \frac{(1/\binexp)^i e^{-1/\binexp}}{i!}.
\end{equation}
Let $\rho_{0,i}$ be the fraction of the edges, that are connected to a check node of degree $i$ in set $0$. Then, we have,
\begin{eqnarray}
\rho_{0,i} &=& \frac{i \factor{0} }{\sparsity}Pr(\text{check node has edge degree} = i)\nonumber\\
&\overset{(a)}{\approx}& \frac{i \factor{0}}{\sparsity}\frac{(1/\binexp)^i e^{-1/\binexp}}{i!}\nonumber\\
&\overset{(b)}{\approx}& \frac{(1/\binexp)^{i-1}e^{-1/\binexp}}{(i-1)!},
\end{eqnarray}
where $(a)$ follows from Poisson approximation of the Binomial random variable and $(b)$ from $\factor{0} = \binexp \sparsity + O(1)$. Since, $\factor{i} = \binexp \sparsity + O(1)$ for all $i=0,1\hdots,\stages-1$. We have,
\begin{eqnarray}
\rho_{i} &\approx& \frac{(1/\binexp)^{i-1}e^{-1/\binexp}}{(i-1)!}
\end{eqnarray}
and 
\begin{eqnarray}\label{eq:degreedistribution}
\rho(\alpha) = e^{-(1-\alpha)/\binexp}
\end{eqnarray}

\subsection{Proof of Lemma~\ref{lem:expsucceeds}}\label{app:expsucceeds}
\begin{proof}
We provide a proof using a contradiction argument. If possible let $S$ be the set of the variable nodes that the FFAST peeling-decoder fails to decode. We have $|S| \leq \alpha \sparsity$. Without loss of generality let $|N_1(S)| \geq |N_i(S)|, \ \forall i \in \{0,\hdots,\stages-1\}$. Then, by the hypothesis of the Theorem $|N_1(S)| > |S|/2$. 

Note that the FFAST peeling-decoder fails to decode the set $S$ if and only if there are no more single-ton check nodes in the neighborhood of $S$ and in particular in $N_1(S)$. For all the check nodes in $N_1(S)$ to be a multi-ton, the number of edges connecting to the check nodes in the set $N_1(S)$ have to be at least $2|N_1(S)| > |S|$. This is a contradiction since there are only $|S|$ edges going from set $S$ to $N_1(S)$ by construction.
\end{proof}

\subsection{Proof of Lemma~\ref{lem:ballsandbinsexp}}\label{app:expander_ballsandbins}
\begin{proof}
Consider a set $S$ of variable (left) nodes in a random graph from the ensemble $\bbensemble$, where $|{\setfactors}| = \stages \geq 3$. Let $N_i(S)$ be the right neighborhood of the set $S$ in the $i^{th}$ subset of check nodes, for $i=0,1,\hdots,\stages-1$. Also, let $E_S$ denote the event that the all the $\stages$ right neighborhoods of $S$ are of size $|S|/2$ or less, i.e., $\max \{|N_i(S)|\}_{i=0}^{\stages-1} \leq |S|/2$. First, we compute an upper bound on the probability of the event $E_S$ as follows:
\begin{eqnarray}
Pr(E_S) &<& \prod_{i=0}^{\stages-1} \left(\frac{|S|}{2f_i}\right)^{|S|} {\factor{i} \choose |S|/2}\nonumber\\
&\overset{(a)}{\approx}& \left(\frac{|S|}{2\binsize}\right)^{\stages|S|} {\binsize \choose |S|/2}^{\stages}\nonumber\\
&<& \left(\frac{|S|}{2\binsize}\right)^{\stages|S|} \left(\frac{2\binsize e}{|S|}\right)^{\stages|S|/2}\nonumber\\
&=& \left(\frac{|S|e}{2\binsize}\right)^{\stages|S|/2}
\end{eqnarray}
where the approximation $(a)$ uses $\factor{i} = \binsize + O(1)$ for all $i=0,\hdots,\stages-1$. Consider an event $E$, that there exists some set of variable nodes of size $|S|$, whose all the $\stages$ right neighborhoods are of size $|S|/2$ or less. Applying a union bound we get,
\begin{eqnarray}\label{eq:expander_bound1}
Pr(E) &<& Pr(E_S) {\sparsity \choose |S|}\nonumber\\
&<& \left(\frac{|S|e}{2\binsize}\right)^{\stages|S|/2} \left(\frac{\sparsity e}{|S|}\right)^{|S|}\nonumber\\
&\overset{(b)}{<}&\left[ \left(\frac{|S|}{\binsize}\right)^{\stages-2} \left(\frac{e}{2}\right)^\stages \left(\frac{e}{\binexp}\right)^2 \right]^{|S|/2}\nonumber\\
&<&O\left( (|S|/\nbins)^{|S|/2}\right)
\end{eqnarray}
where in $(b)$ we used $\binsize = \binexp \sparsity$ and in the last inequality we have used $\stages\geq 3$ and $\nbins = O(\binsize)$. Then, specializing the bound in \eqref{eq:expander_bound1} for $|S| = \alpha \sparsity$, for small enough $\alpha >0$, and $|S| = o(\sparsity)$ we get,
\begin{itemize}
\item For $|S| = \alpha \sparsity$, for small enough $\alpha > 0$:
\begin{eqnarray}
Pr(E) &<& e^{-\epsilon \sparsity \log(\nbins/\sparsity)}, \ \ \text{for some} \ \ \epsilon > 0
\end{eqnarray}
\item For $|S| = o(\sparsity)$:
\begin{eqnarray}
Pr(E) &<& O\left(1/\nbins\right)
\end{eqnarray}
\end{itemize}
\end{proof}

\subsection{Proof of Lemma~\ref{lem:concentration}}\label{app:proofconcentration}
\begin{proof} a) [Expected behavior] Consider decoding on a random graph from the ensemble $\bbensemble$. Let $Z_i, \ i = 0,\hdots,\sparsity\stages-1$, be an indicator random variable that takes value $1$ if the edge $\vec{e}_i$ is not decoded after $\ell$ iterations of the FFAST peeling-decoder and $0$ otherwise. Then, by symmetry $\expectation[Z] = \sparsity\stages\expectation[Z_1]$. Next, we compute $\expectation[Z_1]$ as,
\begin{eqnarray*}
\expectation[Z_1] &=& \expectation[Z_1 \mid {\cal N}^{2\ell}_{\vec{e}_1} \text{ is tree-like}] Pr({\cal N}^{2\ell}_{\vec{e}_1} \text{ is tree-like}) \\
&&+ \expectation[Z_1 \mid {\cal N}^{2\ell}_{\vec{e}_1} \text{ is not tree-like}] Pr({\cal N}^{2\ell}_{\vec{e}_1} \text{ is not tree-like}) \\
&\leq& \expectation[Z_1 \mid {\cal N}^{2\ell}_{\vec{e}_1} \text{ is tree-like}] + Pr({\cal N}^{2\ell}_{\vec{e}_1} \text{ is not tree-like}).
\end{eqnarray*}
In Appendix~\ref{app:treelike}, we show that $Pr({\cal N}^{2\ell}_{\vec{e}_1} \text{ is not tree-like}) \leq O((\log \sparsity)^{2\ell}/\sparsity)$. Also we have $\expectation[Z_1 \mid {\cal N}^{2\ell}_{\vec{e}_1} \text{is tree-like}]  = p_{\ell}$ by definition. Thus,
\begin{equation}
\expectation[Z] < 2kdp_{\ell}.
\end{equation}

b) [Concentration] In this part, we want to show that the number of edges $Z$, that are not decoded at the end of $\ell$ iterations of the FFAST peeling-decoder, is highly concentrated around $\expectation[Z]$. We use the standard martingale argument, of exposing an edge of the graph at a time, along with the Azuma's inequality as in \cite{richardson2001capacity}, with slight modification to account for the irregular edge degree of the right nodes. The bi-partite graphs in the ensemble $\bbensemble$, unlike the ones in \cite{richardson2001capacity}, have left nodes with $\stages$-regular edge degree,s while the right nodes edge degree is a Poisson random variable with a constant mean. In particular, let $Y_i = \expectation[Z|\vec{e}_i]$ be the expected value of $Z$ after exposing $i$ edges. Then, $Y_0 = \expectation[Z]$ and $Y_{\sparsity\stages} = Z$, and the sequence $Y_i$ forms a Doob martingale. If $|Y_{i} - Y_{i-1}| < c_i$ then, using Azuma's inequality we have for all reals $t$,
\begin{equation}\label{eq:azuma}
\pr(|Z - \expectation[Z]| > t) \leq 2\exp\left(\frac{-t^2}{2\sum_{i=1}^{\sparsity\stages}c^2_i}\right).
\end{equation}
In \cite{richardson2001capacity}, the authors have shown that for bi-partite graphs with edge degrees $d_v$ for left nodes and $d_c$ for right nodes, $c_i < 8(d_cd_v)^{\ell}$, when the peeling-decoder performs message passing over $\ell$ iterations. The graphs considered in this paper have constant left degree, but the edge degree of the right nodes is a {\em Poisson random variable} with a constant mean, hence we probabilistically bound the value of $d_c$. Let $B$ be an event that all the right nodes of a random graph from the considered ensemble have edge degrees less than or equal to $O(\sparsity^{1/(2\ell+0.5)})$. Then, using a bound on the Poisson distribution and a union bound over all the $O(\sparsity)$ check nodes, we get,
\begin{eqnarray}\label{eq:nodeoutage}
\pr(\bar{B}) < O(\sparsity e^{-\beta_1\sparsity^{\frac{1}{2\ell+0.5}}}).
\end{eqnarray}
Conditioned on the event $B$, one can upper bound $c_i$ by $O(\sparsity^{\ell/(2\ell+0.5)})$.
Now putting \eqref{eq:azuma} and \eqref{eq:nodeoutage} together, we have,
\begin{eqnarray}
\pr(|Z - \expectation[Z]| > \sparsity\stages\epsilon_1) &\leq& \pr(|Z - \expectation[Z]| > \sparsity\stages\epsilon_1 | B) +  \pr(\bar{B})\nn \\
&\leq& 2\exp\left(\frac{-\sparsity^2\stages^2\epsilon_1^2}{2\sum_{i=1}^{\sparsity\stages}c^2_i}\right) + O(\sparsity e^{-\beta_1\sparsity^{\frac{1}{2\ell+0.5}}})\nn\\
&\leq& e^{-\beta\epsilon_1^2\sparsity^{1/(4\ell+1)}}.
\end{eqnarray}
\end{proof}

\subsection{Probability of Tree-like Neighborhood}\label{app:treelike}
Consider an edge $\vec{e}$ in a randomly chosen graph ${\cal G} \in \bbensemble$. Next, we show that a $2\ell^*$ depth neighborhood ${\cal N}^{2\ell^*}_{\vec{e}}$ of the edge $\vec{e}$ is tree-like with high probability, for any fixed $\ell^*$. Towards that end, we first assume that the neighborhood ${\cal N}^{2\ell}_{\vec{e}}$ of depth $2\ell$, where $\ell < \ell^*$, is tree-like and show that it remains to be tree-like when extended to depth $2\ell+1$ with probability approaching $1$. Let $C_{\ell,i}$ be the number of check nodes, from set $i$, and $M_{\ell}$ be the number of variable nodes, present in ${\cal N}^{2\ell}_{\vec{e}}$. Also assume that $t$ more edges from the leaf variable nodes in ${\cal N}^{2\ell}_{\vec{e}}$ to the check nodes in set $i$ at depth $2\ell+1$ are revealed without creating a loop. Then, the probability that the next revealed edge from a leaf variable node to a check node (say in set $i$) does not create a loop is $\frac{\factor{i} - C_{\ell,i} - t}{\factor{i} - C_{\ell,i}} \geq 1 - \frac{C_{\ell^*,i}}{\factor{i}-C_{\ell^*,i}}$. Thus, the probability that ${\cal N}^{2\ell+1}_{\vec{e}}$ is tree-like, given ${\cal N}^{2\ell}_{\vec{e}}$ is tree-like, is lower bounded by $\min_i (1 - \frac{C_{\ell^*,i}}{\factor{i}-C_{\ell^*,i}})^{C_{\ell+1,i} - C_{\ell,i}}$. Similarly assume that ${\cal N}^{2\ell+1}_{\vec{e}}$ is tree-like and $s$ more edges from check nodes to the variable nodes at depth $2\ell+2$ are revealed without creating a loop. Conditioned on the event that a check node has an outgoing edge it has equal chance of connecting to any of the edges of the variable nodes that are not yet connected to any check node. Thus, the probability of revealing a loop creating edge from a check node to a variable node at depth $2\ell+2$ is upper bounded by, $(1 - \frac{(\sparsity- M_{\ell}-s)\stages}{\sparsity\stages- M_{\ell} \stages - s}) \leq \frac{M_{\ell^*}}{(\sparsity-M_{\ell^*})}$. Thus, the probability that ${\cal N}^{2\ell+2}_{\vec{e}}$ is tree-like given ${\cal N}^{2\ell+1}_{\vec{e}}$ is tree-like is lower bounded by $(1 - \frac{M_{\ell^*}}{(\sparsity-M_{\ell^*})})^{M_{\ell+1} - M_{\ell}}$.

It now follows that the probability that ${\cal N}^{2\ell^*}_{\vec{e}}$ is tree-like is lower bounded by 
\[
\min_i \ \left(1 - \frac{M_{\ell^*}}{(\sparsity-M_{\ell^*})}\right)^{M_{\ell^*}} \left(1 - \frac{C_{\ell^*,i}}{\factor{i}-C_{\ell^*,i}}\right)^{C_{\ell^*,i}}
\]
Hence, for $\sparsity$ sufficiently large and fixed $\ell^*$,
\[
Pr\left( {\cal N}^{2\ell^*}_{\vec{e}} \text{not tree-like}\right) \leq \max_i \frac{M^2_{\ell^*}}{\sparsity} + \frac{C^2_{\ell^*,i}}{\factor{i}}
\]
Recall, the number of edges connected to a check node in set $i$ is a binomial $B(1/\factor{i},\sparsity)$ random variable, which asymptotically can be approximated by a Poisson random variable with mean $1/\binexp$, where $\factor{i} = \binexp \sparsity + O(1)$, for a constant $\binexp$. Hence, for any fixed value of $\ell^*$, $\max_i \{C_{\ell^*,i}\}$ and $M_{\ell^*}$ are upper bounded by $O((\log \sparsity)^{\ell^{*}})$ with probability $1/\sparsity^{c}$ for any constant $c$. Thus,
\[
Pr\left( {\cal N}^{2\ell^*}_{\vec{e}} \text{not tree-like}\right) \leq O\left(\frac{(\log \sparsity)^{2\ell^{*}}}{\sparsity}\right)
\]

\end{appendix}
\bibliographystyle{IEEEtran}
\bibliography{thesis}
\end{document}